\newif\ifanonym
\theoremstyle{plain}
\newtheorem{theorem}{Theorem}[section]
\newtheorem{lemma}[theorem]{Lemma}
\newtheorem{corollary}[theorem]{Corollary}
\newtheorem{fact}[theorem]{Fact}
\theoremstyle{definition}
\newtheorem{definition}[theorem]{Definition}
\theoremstyle{remark}
\newtheorem{remark}[theorem]{Remark}
\newcommand{\ProblemName}[1]{\textsc{#1}}
\newcommand{\oneMedian}{\ProblemName{$1$-Median}\xspace}
\newcommand{\klMedian}{\ProblemName{$(k, \ell)$-Median}\xspace}
\newcommand{\kMedian}{\ProblemName{$k$-Median}\xspace}
\newcommand{\kMeans}{\ProblemName{$k$-Means}\xspace}
\newcommand{\kzC}{\ProblemName{$(k, z)$-Clustering}\xspace}
\newcommand{\onezC}{\ProblemName{$(1, z)$-Clustering}\xspace}
\def\RR{{\mathbb{R}}}
\DeclareMathOperator{\poly}{poly}
\DeclareMathOperator{\err}{err}
\DeclareMathOperator{\cost}{cost}
\DeclareMathOperator{\OPT}{OPT}
\DeclareMathOperator{\dist}{dist}
\DeclareMathOperator{\Diam}{diam}
\DeclareMathOperator{\sdim}{sdim}
\DeclareMathOperator{\vcdim}{vcdim}
\DeclareMathOperator{\ring}{ring}
\newcommand{\calF}{\ensuremath{\mathcal{F}}\xspace}
\newcommand{\calX}{\ensuremath{\mathcal{X}}\xspace}
\newcommand{\calP}{\ensuremath{\mathcal{P}}\xspace}
\newcommand{\calH}{\ensuremath{\mathcal{H}}\xspace}
\newcommand{\Cclose}{\ensuremath{C_{\mathrm{close}}\xspace}}
\newcommand{\Cfar}{\ensuremath{C_{\mathrm{far}}\xspace}}
\newcommand{\pclose}{\ensuremath{p^{\mathrm{close}}\xspace}}
\newcommand{\pfar}{\ensuremath{p^{\mathrm{far}}\xspace}}
\newcommand{\Pclose}{\ensuremath{P_{\mathrm{close}}\xspace}}
\newcommand{\Pfar}{\ensuremath{P_{\mathrm{far}}\xspace}}
\newcommand{\Pmain}{\ensuremath{P_{\mathrm{main}}\xspace}}
\newcommand{\dWSp}{\ensuremath{d^{(p)}_{\mathrm{WS}}}\xspace}
\newcommand{\MyFrame}[1]{\begin{center}\noindent \framebox[\textwidth]{ \begin{minipage}{0.97\textwidth} #1 \end{minipage}}\end{center}}
\title{The Power of Uniform Sampling for Coresets}
\author{Anonymous Authors}
\author{Vladimir Braverman\thanks{
    Work partially supported by ONR Award N00014-18-1-2364.
    Email: \texttt{vladimir.braverman@gmail.com}
  }\\
  Johns Hopkins University
  \and Vincent Cohen-Addad\thanks{
    Email: \texttt{vcohenad@gmail.com}
  }\\
  Google Research, Switzerland
  \and Shaofeng H.-C. Jiang\thanks{
    Work partially supported by the National key R\&D program of China No. 2021YFA1000900, a startup fund from Peking University,
    and the Advanced Institute of Information Technology, Peking University.
    Email: \texttt{shaofeng.jiang@pku.edu.cn}
  }\\
  Peking University
  \and Robert Krauthgamer\thanks{
    Work partially supported by ONR Award N00014-18-1-2364,
    the Israel Science Foundation grant \#1086/18,
    the Weizmann Data Science Research Center,
    and a Minerva Foundation grant.
    Email: \texttt{robert.krauthgamer@weizmann.ac.il}
  }\\
  Weizmann Institute of Science
  \and Chris Schwiegelshohn\thanks{
  Work partially supported by an Independent Research Fund Denmark (DFF) Sapere Aude Research Leader grant No 1051-00106B.
    Email: \texttt{cschwiegelshohn@gmail.com}
  }\\
  Aarhus University
  \and Mads Bech Toftrup\thanks{
    Email: \texttt{toftrup@cs.au.dk}
  }\\
  Aarhus University\and Xuan Wu\thanks{
    Email: \texttt{wu3412790@gmail.com}
  }\\
  Johns Hopkins University
}
\begin{document}

    \begin{titlepage}
        \maketitle
        \thispagestyle{empty}
        \begin{abstract} 
Motivated by practical generalizations of the classic $k$-median and $k$-means objectives,
such as clustering with size constraints, fair clustering, and Wasserstein barycenter,
we introduce a meta-theorem for designing coresets for constrained-clustering problems.
The meta-theorem reduces the task of coreset construction to one on a bounded number of ring instances with a much-relaxed additive error.
This reduction enables us to construct coresets using uniform sampling,
in contrast to the widely-used importance sampling,
and consequently we can easily handle constrained objectives.
Notably and perhaps surprisingly, this simpler sampling scheme can yield
coresets whose size is independent of $n$, the number of input points.

Our technique yields smaller coresets, and sometimes the first coresets,
for a large number of constrained clustering problems,
including capacitated clustering, fair clustering,
Euclidean Wasserstein barycenter, clustering in minor-excluded graph,
and polygon clustering under Fr\'{e}chet and Hausdorff distance.
Finally, our technique yields also smaller coresets
for $1$-median in low-dimensional Euclidean spaces,
specifically of size $\tilde{O}(\varepsilon^{-1.5})$ in $\mathbb{R}^2$
and $\tilde{O}(\varepsilon^{-1.6})$ in $\mathbb{R}^3$.
\end{abstract}

     \end{titlepage}

\section{Introduction}

Over the last 20 years, coresets and in particular coresets for clustering problems have received substantial attention.
At a high a level,
a coreset for a data set $P$ with respect to a set of queries $\mathcal{Q}$
with query-evaluation function $f_P: \mathcal{Q} \rightarrow \mathbb{R}_{+}$,
is a data set $S$ with a corresponding $f_S: \mathcal{Q} \rightarrow \mathbb{R}_{+}$
that approximates the evaluation function for every query.
Typically, $S$ is a (small) reweighted subset of $P$,
and the function $f_S$ is defined similarly to $f_P$.
For many clustering problems, $P$ is the input set,
each query $q\in \mathcal{Q}$ is a candidate center set
and its corresponding $f_P(q)$ is the cost induced by this center set,
hence a coreset is just a smaller instance ($S$ instead of $P$)
of the same clustering problem.

The quality of a coreset $S$ is usually measured by its size
(the number of distinct points)
and by the type of query evaluations that it approximates.
For example, a natural query for a clustering problem
is the cost induced by greedily assigning every point in $P$
to its closest neighbor in the center set $q$,
aggregated over the points in $P$.
A prime example is the Euclidean $(k,z)$-clustering problem in dimension $d$,
in which $P\subset \mathbb{R}^d$ is the input,
$z>0$ is a parameter (typically fixed),
each query is a center set $C \subset \mathbb{R}^d$ of size $k$,
and the query evaluation is the cost function
\begin{equation}
  \label{eq:kzclustering}
  \cost_z(P, C) := \sum_{x\in P} {(\dist(x, C))^z},
\end{equation}
where $\dist(x, y) := \|x - y\|_2$ and $\dist(x, C) := \min_{c \in C} \dist(x, c)$.
The special case $z=2$ is the widely studied Euclidean \kMeans problem.
Following a long line of research~\cite{DBLP:conf/stoc/Har-PeledM04,DBLP:conf/stoc/FrahlingS05,DBLP:journals/dcg/Har-PeledK07,Chen09,langberg2010universal,DBLP:conf/stoc/FeldmanL11,FichtenbergerGSSS13,FSS20,BLLM16,DBLP:conf/icml/BravermanFLSY17,sohler2018strong,BachemLL18,BecchettiBC0S19,huang2020coresets,BJKW21,Cohen2022Towards},
it is now known that Euclidean \kMeans admits an $\varepsilon$-coreset $S$ of size $\tilde{O}(k \varepsilon^{-2} \min\{k,\varepsilon^{-2}\})$ \cite{cohen2021new,Cohen2022Towards},
where an \emph{$\varepsilon$-coreset} means that for every center set $C$,
the cost of $P$ and that of $S$ are within a $(1\pm \varepsilon)$-factor.

The most immediate approach to construct a coreset is to
sample a subset of the input (and reweight its points appropriately),
and the main challenge is to find a sampling distribution that works well.
A natural starting point is \emph{uniform sampling},
however without further assumptions or preprocessing steps,
it is easy to construct instances where uniform sampling
requires so many samples that it achieves no significant space savings.\footnote{Consider a one-dimensional input $P\subset\mathbb{R}$
  with $n$ points at $0$ and a single point at $1$.
  For $k=1$, by placing a center at $0$, the only point incurring a positive cost is the point at $1$.
  However, a uniform sample is unlikely to pick the point at $1$ unless the sample size is $\Omega(n)$.
}

Instead, state-of-the-art algorithms for computing coresets
are typically based on non-uniform sampling.
These algorithms, initiated by Chen's seminal paper \cite{Chen09} and codified in their modern form under the name \emph{sensitivity sampling} by Feldman and Langberg~\cite{DBLP:conf/stoc/FeldmanL11},
draw $|S|$ points from the same probability distribution $(p_x: x\in P)$,
and reweight every sample $x$ inverse proportionally to its sampling probability,
namely, $w_S(x) = \frac{1}{|S|\cdot p_x}$.
The sampling probability $p_x$ is set proportionally to the \emph{sensitivity} of $x$,
which is the maximum possible relative contribution of $p$ to any query evaluation.
For example, for Euclidean \kMeans, this is
$s(x):=\operatorname{sup}_{|C|=k} \frac{(\dist(x, C))^2}{\cost(P,C)}$.
The sensitivity sampling framework has become an enormously successful and popular method for many additional problems,
including kernel methods \cite{PhillipsT20,JKLZ21}, low-rank approximation \cite{maalouf2019fast}, linear regression \cite{HuangSV20,TukanMF20}, and logistic regression \cite{MunteanuSSW18,MRM21}.

Unfortunately, not all problems are easily expressed in the sensitivity framework.
Consider, for example, clustering with size constraints,
which loosely means constraining the number of points served by every center.
For example, limiting the centers to each serve at most $T$ points 
is known as capacitated clustering with uniform capacity $T$.
Constrained clustering introduces a number of technical issues that make it difficult to generalize
the analysis for coresets in the unconstrained setting.
Perhaps the most glaring obstacle is that sensitivity sampling distorts
the total weight of the points (it is preserved only in expectation).
While it is easy to preserve the total weight
by rescaling the weights in $S$ so that $\sum_{x \in S} w_S(x) = |P|$,
size constraints usually require the total weight to be preserved for many subsets,
which cannot be achieved under the same scaling.
Indeed, directly applying the sensitivity sampling framework to capacitated clustering can result in additive error proportional to the diameter, which is generally unaffordable.

In contrast, uniform sampling can avoid the aforementioned issue,
by running it on top of some preprocessing, like Chen's~\cite{Chen09} metric decomposition, 
and indeed it has been applied to obtain coresets for size-constrained clustering problems,
including capacitated and fair clustering~\cite{cohen2019fixed,DBLP:conf/icalp/BandyapadhyayFS21}.
While uniform sampling only yields a coreset with additive error
for each part in the decomposition,
this additive error can essentially compensate for the weight-distortion issue.
However, the framework of \cite{Chen09} also has a number of drawbacks
compared to the subsequent sensitivity-sampling methods.
For example, its coreset size grows (at least linearly) with $\log |P|$,
regardless of the metric space and objective function.
In contrast, sensitivity sampling, when applicable,
can yield coreset size that depends only on $k$ and $\varepsilon$.
Thus, improving our ability to apply uniform sampling in coreset constructions has been an important open problem for more than a decade.

\subsection{Our Results}

We propose an improved coreset framework that preprocesses the data
so that uniform sampling is applicable.
Similarly to Chen's method~\cite{Chen09},
the key step is a reduction to ring instances, defined as follows.
A point set $R$ is called a \emph{ring}
if all its points are at distance in the range $[r, 2r]$
from some center point $c$ (for some $r>0$).
Our main result is the following meta-theorem (see Theorem~\ref{thm:reduct_ring} for a formal statement):\MyFrame{
  Assume that for rings, uniform sampling produces a coreset of size $T$ with \emph{additive} error at most $\varepsilon r |R|$;
  then for every input $P$ one can construct a coreset of size $O(T\cdot k^2/\varepsilon)$.
}
This result generalizes to \kzC,
as defined in~\eqref{eq:kzclustering}
and more formally in \Cref{def:kzc}.

This result has a number of applications.
To begin with, it allows us to obtain the first coresets whose size is \emph{independent of $|P|$} for the aforementioned problems of clustering with size constraints.
This includes:\begin{itemize}
\item A coreset of size $\poly(k/\varepsilon)$
  for a \emph{capacitated} version of Euclidean \kMedian and \kMeans (see \Cref{sec:prelim} for the definition),
  which improves over the size bound $\poly(k/\varepsilon\cdot \log |P|)$
  of Cohen-Addad and Li~\cite{cohen2019fixed}.
  See Theorem~\ref{thm:capC} for details.
\item A coreset of size $\poly(k/\varepsilon)$
  for \emph{fair} Euclidean \kMedian and \kMeans (see \Cref{sec:prelim} for the definition),
  which improves over the size bounds $\poly(k/\varepsilon\cdot \log |P|)$
  of Bandyapadhyay, Fomin and Simonov~\cite{DBLP:conf/icalp/BandyapadhyayFS21},
  and $\poly(k/\varepsilon^d)$
  of Huang, Jiang and Vishnoi~\cite{HJV19}.
See Theorem~\ref{thm:fairC} for details.
\end{itemize}

Moreover, our framework has applications to other clustering problems.
On a technical level,
a natural approach to proving that a randomly chosen subset $S$ is a coreset
is to first make sure that, with high probability, $S$ preserves the cost for a \emph{single} arbitrary center set,
and then apply a union bound over all possible center sets.
Since the number of possible center sets can be huge, and even infinite,
the space is often discretized by a certain ``net'' before applying a union bound.
For instance,
a recent approach established an $\varepsilon$-approximate centroid set,
as defined by~\cite{cohen2021new};
however, only a few techniques are known to find
such an $\varepsilon$-approximate centroid of small size.
A popular alternative to this net approach is to bound VC-dimension\footnote{Strictly speaking, the correct term here is pseudo-dimension,
  because VC-dimension is defined for a range space (i.e., set system),
  and we use here VC-dimension as a generic term for sake of exposition.
  The technical sections use the correct mathematical terminology,
  which is often the shattering dimension. It is well-known that these three terms are closely related.
}
of the function space $\{f_x(C):=w(x) \cdot \dist(x,C)\}_x$,
where $w(x)$ is related to the probability of sampling $x$.
In general, this $w(x)$ is non-uniform over all $x\in P$,
and this is particularly true for coresets constructed via sensitivity sampling.
Let us refer to the special case where $w(x)$ is uniform
(i.e., takes a single value over all $x\in P$)
as the \emph{uniform} function space.
Our framework for coreset construction is based on uniform function spaces,
which turns out to be a crucial difference with major advantages
compared to the sensitivity-sampling framework.
Indeed, the advantage of relying on uniform function spaces is two-fold.
\begin{enumerate}
\item
  For several problems,
  we know how to bound the VC-dimension of the uniform function space,
  but not that of the non-uniform function space.
  Examples include the shortest-path metric in planar graphs \cite{DBLP:journals/dm/BousquetT15}
  and the Fr{\'e}chet distance~\cite{DBLP:journals/dcg/DriemelNPP21}.
Our new framework leads to new/improved coreset results for such clustering problems.
\item
  The uniform function space has a simpler structure
  and may have a smaller VC-dimension bound.
  Consider for instance the widely studied range space induced by halfspaces in the Euclidean plane ($\RR^d$ for $d=2$);
  the VC-dimension of its uniform range space is known to be exactly $d+1=3$,
  whereas for the non-uniform range space the known upper bound is only
  $3d+1=7$~\cite[Lemma 3.3]{NEURIPS2021_90fd4f88}.
  This leads directly to better bounds on the coreset size.
In particular, when the VC-dimension is low,
  one can plug in at a key step of the analysis,
  a bound from discrepancy theory~\cite{chazelle2001discrepancy}
  about $\varepsilon$-approximation,
  which beats the usual $\varepsilon^{-2}$ factor.
\end{enumerate}

These advantages lead to new coreset results in several different metric spaces:
\begin{itemize}
\item
  A coreset of size $\tilde{O}(\varepsilon^{-1.5})$ for geometric median in dimension $2$,
  and of size $\tilde{O}(\varepsilon^{-1.6})$ for dimension $3$ (see \Cref{cor:lowdim}).
  The previously known coreset size for these problems was $\tilde{O}(\varepsilon^{-2})$ due to \cite{DBLP:conf/stoc/FeldmanL11}.

\item
  A coreset of size $\tilde{O}(\varepsilon^{-2} d \ell)$
  for the $p$-Wasserstein barycenter (see \Cref{thm:barycenter}).
  This is the \kMedian problem for $k=1$,
  in a metric space over all probability distributions that are supported on at most $\ell$ points in $\RR^d$.
  The $p$-Wasserstein distance between two distributions $D_1$ and $D_2$
  is the $p$-th moment of the minimum cost matching between the distributions
  (i.e., edge weights represent Euclidean distance raised to power $p$,
  and the total cost is raised to power $1/p$).
This improves over the previous bound $O(\varepsilon^{-2}d^4 \ell^8)$,
  due to~\cite{izzo2021dimensionality}.

\item
  A coreset of size $\tilde{O}(|H|\cdot \text{poly}(k/\varepsilon))$
  for \kMedian in shortest-path graph metrics that are induced by
  graphs excluding a fixed minor $H$ (see \Cref{sec:more}).
  This improves over a previous bound $\tilde{O}(f(|H|)\cdot \poly(k/\varepsilon))$,
  due to~\cite{BJKW21},
  where $f$ is not specified but is at least doubly exponential.

\item
  A coreset of size $\poly(kd\ell/\varepsilon \cdot \log m)$
  for \kMedian under Fr\'{e}chet and Hausdorff distances (see \Cref{sec:more}).
  In this problem, also known as \klMedian,
  the data set comprises of polygonal curves in $\mathbb{R}^d$, each with at most $m$ line segments,
  and the center curves are restricted to at most $\ell$ line segments.
  This is the first coreset whose size is independent of the number of input curves, improving over \cite{BR22}.
\end{itemize}
These new results highlight the flexibility of our framework
and we expect that it will have additional applications.

\subsection{Our Techniques}

We outline our main technical novelty in obtaining the meta-theorem
(formalized in Theorem~\ref{thm:reduct_ring}
that reduces the coreset-construction problem into only $\tilde{O}(k^2/\epsilon)$ ring instances,
in which uniform sampling is applicable.

For sake of presentation, let us focus on $z = 1$ (i.e., \kMedian).
The proof of the meta-theorem combines
several known geometric techniques for constructing coresets,
that originally cannot give a coreset with size bound $\poly(k/\epsilon)$.
Our algorithm first finds an $(O(1),O(1))$-bicriteria approximation $C^*$ with $|C^*|=O(k)$ centers\footnote{An $(\alpha,\beta)$-bicriteria approximation for a clustering problem
  is a set of at most $\beta\cdot k$ centers that has cost $\alpha\cdot \OPT_k$,
  where $\OPT_k$ is the optimal cost of clustering using $k$ centers.
},
then partitions the data accordingly into $O(k)$ clusters,
and then further partitions each cluster into rings with exponentially-increasing radii,
similarly to the steps in~\cite{Chen09}.
The issue with this partition, as noted also in \cite{Chen09},
is that it creates $O(\log n)$ rings,
which eventually introduces an $O(\log n)$ factor in the coreset size.
To bypass this, we identify in each cluster
a set of $\tilde{O}(k / \epsilon)$ high-cost rings
(and thus $\tilde{O}(k^2/\epsilon)$ rings in total),
for which the points inside contribute significantly to the objective.
Call these high cost rings \emph{marked}, and the remaining rings \emph{unmarked}.
Consecutive unmarked rings (i.e., between two marked rings)
are merged into in at most $\tilde{O}(k / \epsilon)$ \emph{unmarked groups}.
The $\tilde{O}(k^2/\epsilon)$ marked rings are handled
as in~\cite{Chen09} using uniform sampling.
The remaining issue is how to construct coresets for the unmarked groups.
An unmarked group can be a union of multiple consecutive rings,
and since points do not have a similar distance to the cluster center,
uniform sampling is no longer applicable.
However, by our construction, each unmarked group has a small contribution to the cost and we show that a simple two-point geometric construction can already serve as a coreset for the entire group.
Such a two-point coreset is much more powerful than it appears to be.
In particular, it even satisfies a property that we call assignment-preserving
(see \Cref{def:assign:add} and a similar formulation in prior work~\cite{DBLP:conf/waoa/0001SS19,HJV19,DBLP:conf/icalp/BandyapadhyayFS21}),
and hence can serve as a coreset for clustering with capacity and fairness constraints.

Let $c_i\in C^*$ and let $P_i$ denote the cluster with center $c_i$.
Technically, the construction of the unmarked groups and their two-point coresets
is done by interpreting the entire cluster as a one-dimensional instance
(by taking $\dist(x, c_i)$ for each point $x \in P_i$),
and then applying on the unmarked rings
a known greedy-bucketing construction for dimension one \cite{DBLP:journals/dcg/Har-PeledK07}.
To construct the two-points coreset for a group $G$,
let $\pclose,\pfar\in G$ be a closest point and a furthest point, respectively,
from the center $c_i$.
Then for every point $x \in G$, represent the distance $\dist(x, c_i)$
as a convex combination of $\dist(\pclose, c_i)$ and $\dist(\pfar, c_i)$,
namely, find $\lambda_x\in [0,1]$ such that
$\dist(x,c_i)=\lambda_x\dist(\pclose,c_i) + (1-\lambda_x)\dist(\pfar,c_i)$.
Now let the coreset for $G$ be
$S:=\{\pclose,\pfar\}_w$
with weights $w(\pclose)=\sum_{x\in G} \lambda_x$ and $w(\pfar)=\sum_{x\in G} (1-\lambda_x)$.
Obviously, $S$ has only two distinct points
and it preserves the total weight and the cost with respect to $c_i$
as the entire $G$.

It remains to analyze the error between our two-point coreset $S$ and the group $G$ with respect to an arbitrary center set, even with capacity constraints.
Fix a center set $C$ with $|C|=k$ and capacity constraint $\Gamma : C \to \mathbb{R}_+$
that prescribes the number of points connected to each center $c\in C$
(see Definition~\ref{def:assign} for formal definition).
We first observe that due to the triangle inequality and our grouping method,
the cost of clustering $S$ approximates that of $G$ within an additive error,
namely,
$|\cost(G, C, \Gamma)-\cost(S,C,\Gamma)| \leq \tilde{O}(\frac{\epsilon}{k})\cdot \cost(P_i,c_i)$
(see Definition~\ref{def:constraint} and Lemma~\ref{lemma:fair:color}).
However, as the cluster $P_i$ has $\tilde{O}(\frac{k}{\epsilon})$ unmarked groups,
its cumulative error is bounded by $\tilde{O}(\frac{k}{\epsilon})\cdot \tilde{O}(\frac{\epsilon}{k})\cdot \cost(P_i,c_i) = \tilde{O}(\cost(P_i,c_i))$,
which exceeds our intended error bound $\tilde{O}(\epsilon) \cdot \cost(P_i, c_i)$.
To reduce the number of groups that can suffer an additive error,
we further divide the unmarked groups into colored
groups and uncolored groups with respect to $C$.
In particular,
we call a ring ``important'' if it contains any center from $C$.
We ``color'' $O(\log \tfrac{1}{\epsilon})$ neighboring rings of each important ring
and ``color'' all the groups that contain at least one colored ring.
This way, we obtain at most $O(k\log \tfrac{1}{\epsilon})$ colored groups.
We let these $O(k\log \tfrac{1}{\epsilon})$ colored groups suffer the additive
error, and this time the total error from them is bounded by $\tilde{O}(\epsilon)\cdot \cost(P_i, c_i)$.

It remains to bound the error for the uncolored groups,
and crucially, in Lemma~\ref{lemma:fair:uncol}
we show these groups do not suffer an additive error but only a multiplicative error.
A key observation is that if a group $G$ is not colored (with respect to $C$),
then every $c \in C$ is either too far from all the points in $G$
or too close to the cluster center $c_i$.
Based on this observation, we surprisingly find that when the group is not colored, our simple two-points coresets $S$
can already serve as an assignment-preserving coreset \emph{without} additive error.
This Lemma~\ref{lemma:fair:uncol} is one of the main technical lemmas
that deal with the assignment constraint,
and its proof requires very careful explicit constructions
for the assignments of the two-point coreset $S$ and the group $G$.

\subsection{Additional Related Work}

Although the coreset paradigm is most often applied to clustering problems,
there are actually several other applications,
see the surveys \cite{DBLP:journals/ki/MunteanuS18,DBLP:journals/widm/Feldman20} for further pointers to the literature.
Restricting attention to coresets for clustering,
the most common setting is that of a Euclidean space,
but there are many results also for other metric spaces.
To streamline the presentation, we focus here on the results for $\kMedian$.
For general $n$-point metrics, \cite{DBLP:conf/stoc/FeldmanL11} gave coresets
of size $O(\frac{k\log n}{\epsilon^2})$,
and for general metrics with bounded doubling dimension $d$,
\cite{huang2018epsilon} designed a coreset of size $O(\frac{k^3d}{\epsilon^2})$,
which was later improved by \cite{cohen2021new} to $\tilde{O}(\frac{kd}{\epsilon^2})$.
Another line of research addresses the shortest-path metrics of graphs,
and notably, $\poly(k/\epsilon)$-size coresets for \kMedian were obtained
for graphs of bounded treewidth, planar graphs, and more generally excluded-minor graphs
\cite{baker2020coresets,BJKW21,cohen2021new}.
For an empirical evaluation of these algorithms, we refer to \cite{SchwiegelshohnS22}.

Coresets for even more general clustering problems, i.e., beyond \kzC,
received significant attention as well.
Apart from the capacity and fairness constrained clustering that are studied in this paper,
coresets were designed also for ordered weighted clustering \cite{braverman2019coresets},
for clustering with outliers \cite{huang2018epsilon,Ding2019GreedySW},
for training Gaussian mixture models \cite{DBLP:conf/kdd/BachemL018,feldman2019coresets},
for time-series clustering \cite{NEURIPS2021_c115ba9e}, and many other related problems.
Another interesting generalization is clustering of sets of points in $\RR^d$
(instead of points),
including arbitrary finite sets \cite{pmlr-v119-jubran20a},
lines \cite{DBLP:conf/nips/MaromF19},
and axis-align affine subspaces \cite{NEURIPS2021_90fd4f88}.

     \section{Preliminaries}
\label{sec:prelim}

\paragraph{Notations.}
We use $\mathbb{R}_+$ to denote set $\{x\geq 0\mid x\in \mathbb{R}\}$. A weighted set $S$ is associated with a weight function $w_S : S \to \mathbb{R}_+$.
We interpret an unweighted set $S$ as a weighted set with unit weight, i.e., $w_S(\cdot) = 1$.
For some weight function $w_S : S \to \mathbb{R}_+$ and $T \subseteq S$,
define $w_S(T) := \sum_{x \in T}{w_S(x)}$.
We assume there is an underlying metric $M(X, \dist)$ throughout the paper.
This metric space may not be finite; for instance, it can be Euclidean space $(\mathbb{R}^d, \ell_2)$.
For a point $x \in X$ and a point set $C \subseteq X$, let $\dist(x, C) := \min_{c \in C}{\dist(x, c)}$.
For $u \in X, 0 \leq a < b$,
let $\ring(u, a, b) := \{ x \in X : a < \dist(x, u) \leq b\}$
be the set of points within distance between $a$ and $b$ from $u$.

We need the following generalized triangle inequalities which are well-known tools for studying \kzC. Variants of these inequalities can be found in multiple related papers \cite{MakarychevMR19,cohen2021new,FSS20,sohler2018strong}.

\begin{lemma}[Generalized triangle inequality]
\label{lem:gen:tri}
Let $a,b,c \in X$ and $z\geq 1$. For every $0<t\leq 1$, the following inequalities hold.
\begin{enumerate}
\item (Corollary A.2 of \cite{MakarychevMR19})
$$
    \dist(a,b)^z \leq (1+t)^{z-1} \dist(a,c)^z + \big(1+\frac{1}{t}\big)^{z-1} \dist(b,c)^z
$$
\item (Claim 5 of \cite{sohler2018strong})
$$
    |\dist(a,c)^z-\dist(b,c)^z|\leq t\cdot \dist(a,c)^z + (\frac{3z}{t})^{z-1}\dist(a,b)^z.
$$
\end{enumerate}
\end{lemma}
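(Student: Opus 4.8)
The plan is to derive both inequalities of the lemma from a single convexity estimate, together with a short bit of elementary calculus to choose the right auxiliary parameter. The key auxiliary fact is that for all reals $X,Y\ge 0$, all $z\ge 1$, and all $s>0$,
\[
  (X+Y)^z \le (1+s)^{z-1}X^z + \big(1+\tfrac1s\big)^{z-1}Y^z .
\]
This is just Jensen's inequality for the convex function $u\mapsto u^z$ on $[0,\infty)$ (whose second derivative $z(z-1)u^{z-2}$ is nonnegative): writing $X+Y=\tfrac{1}{1+s}\cdot(1+s)X+\tfrac{s}{1+s}\cdot\tfrac{1+s}{s}Y$ as a convex combination and applying convexity gives $(X+Y)^z\le \tfrac{1}{1+s}\big((1+s)X\big)^z+\tfrac{s}{1+s}\big(\tfrac{1+s}{s}Y\big)^z$, which simplifies to the display (the degenerate cases $X=0$ or $Y=0$ are trivial since $1+s\ge 1$). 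The first inequality of the lemma is then immediate: by the triangle inequality and monotonicity of $u\mapsto u^z$, $\dist(a,b)^z\le\big(\dist(a,c)+\dist(b,c)\big)^z$, and applying the auxiliary fact with $X=\dist(a,c)$, $Y=\dist(b,c)$, $s=t$ yields exactly the claim.

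For the second inequality I would first dispose of $z=1$ directly: it is just the reverse triangle inequality $|\dist(a,c)-\dist(b,c)|\le\dist(a,b)$, which is weaker than the stated bound $t\,\dist(a,c)+\dist(a,b)$. For $z>1$, set $M:=\max\{\dist(a,c),\dist(b,c)\}$ and $m:=\min\{\dist(a,c),\dist(b,c)\}$, so that $|\dist(a,c)^z-\dist(b,c)^z|=M^z-m^z$ and the triangle inequality gives $M\le m+\dist(a,b)$. Applying the auxiliary fact to $m$ and $\dist(a,b)$ with a parameter $s$ to be chosen, and subtracting $m^z$, gives
\[
  M^z-m^z \;\le\; \big((1+s)^{z-1}-1\big)\,m^z \;+\; \big(1+\tfrac1s\big)^{z-1}\dist(a,b)^z .
\]
It then remains to pick $s=\tfrac{t}{2(z-1)}$ and verify the two numeric inequalities $(1+s)^{z-1}-1\le t$ and $1+\tfrac1s\le \tfrac{3z}{t}$. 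The first follows from $(1+s)^{z-1}\le e^{(z-1)s}=e^{t/2}\le 1+t$, using $\ln(1+t)\ge t/2$ on $[0,1]$ (the difference vanishes at $t=0$ and has derivative $\tfrac{1-t}{2(1+t)}\ge 0$ there). The second reduces to $t+2(z-1)\le 3z$, i.e. $t\le z+2$, which holds since $t\le 1\le z$. Plugging these in and using $m\le\dist(a,c)$ yields $M^z-m^z\le t\,\dist(a,c)^z+(3z/t)^{z-1}\dist(a,b)^z$, which is the claim.

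Everything here is elementary, so I do not expect a real obstacle; the only step requiring a bit of foresight is the choice $s=\tfrac{t}{2(z-1)}$, which is essentially forced by wanting the coefficient $(1+s)^{z-1}-1$ of $m^z$ (whose linearization at $s=0$ is $(z-1)s$) to be at level $t$, while still leaving $1+\tfrac1s=\tfrac{t+2(z-1)}{t}$ comfortably below $\tfrac{3z}{t}$. The one place to be careful is simply checking that this single choice of $s$ satisfies both conjugate-looking constraints at once.
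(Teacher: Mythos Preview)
Your proof is correct. The paper itself does not prove this lemma; it is stated there as a known fact, with the two parts attributed to \cite{MakarychevMR19} and \cite{sohler2018strong} respectively, so there is no ``paper's own proof'' to compare against. Your argument is a clean, self-contained derivation: the convexity identity $(X+Y)^z\le (1+s)^{z-1}X^z+(1+1/s)^{z-1}Y^z$ gives part~1 immediately, and for part~2 your choice $s=t/(2(z-1))$ together with the elementary checks $e^{t/2}\le 1+t$ on $(0,1]$ and $t+2(z-1)\le 3z$ is exactly what is needed. The only minor remark is that the case $z=1$ deserved its separate treatment (as you gave it), since the choice of $s$ degenerates there; otherwise nothing is missing.
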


\begin{definition}[Coresets for \kzC]
    \label{def:kzc}
    Given a weighted data set $P \subseteq X$,
    for $C \subseteq X$ with $|C| \leq k$,
    define the cost for \kzC as
    $$
        \cost_z(P, C) := \sum_{x \in P} {w_P(x) \cdot (\dist(x, C))^z}.
    $$
    For $0 < \epsilon < 1$,
    a weighted set $S$ such that $S \subseteq P$
    is an $\epsilon$-coreset for \kzC if
    \begin{align*}
        \forall C \subseteq X, |C| \leq k,\qquad
        \cost_z(S, C) \in (1 \pm \epsilon) \cdot \cost_z(P, C),
    \end{align*}
\end{definition}

The following definition of assignment constraints generally captures the constraints in fair clustering and capacitated clustering,
and our key notion of assignment-preserving coresets is defined with respect to it.
Similar notions of assignment constraints and assignment-preserving coresets have also been considered in previous works which study fair clustering~\cite{DBLP:conf/waoa/0001SS19,HJV19,DBLP:conf/icalp/BandyapadhyayFS21}.
\begin{definition}[Assignment constraints and assignment functions]
    \label{def:assign}
    Given a weighted set $P \subseteq X$ and $C \subseteq X$,
    an assignment constraint is a function $\Gamma : C \to \mathbb{R}_+$
    such that $\sum_{c \in C}{\Gamma(c)} = w_P(P)$,
    and we call an assignment function $\sigma : P \times C \to \mathbb{R}_+$
    consistent with $\Gamma$, denoted as $\sigma \sim \Gamma$,
    if $\forall c \in C$,
    $\sigma(P, c) := \sum_{p \in P}{\sigma(p, c)} = \Gamma(c)$. For $P_1\subseteq P$ and $C_1\subseteq C$, we define
       \begin{align*}
        \cost_z^\sigma(P_1, C_1) :=
        \sum_{x \in P_1}\sum_{c \in C_1} \sigma(x, c) \cdot (\dist(x, c))^z
    \end{align*} as the connection cost between $P_1$ and $C_1$ under $\sigma$.

\end{definition}

\begin{definition}[\kzC with assignment constraints]
    \label{def:constraint}
    Given a weighted data set $P \subseteq X$,
    a center set $C \subseteq X$ with $|C| \leq k$,
    and an assignment constraint $\Gamma : C \to \mathbb{R}_+$
    the objective for \kzC with assignment constraint $\mu$ is defined as
    \begin{align*}
        \cost_z(P, C, \Gamma) := \min_{ \sigma : P \times C \to \mathbb{R}_+,
        \sigma \sim \Gamma } \cost_z^\sigma(P, C).
    \end{align*}
\end{definition}

\begin{definition}[Assignment-preserving coresets for \kzC]
    \label{def:assign_coreset}
    Let $P$ be a weighted dataset. A weighted subset $S \subseteq P$
    is an assignment-preserving $\epsilon$-coreset for \kzC,
    if $w_P(P) = w_S(S)$, and for every $C \subseteq X$ with $|C| \leq k$
    and assignment constraint $\Gamma : C \to \mathbb{R}_+$,
    \begin{align*}
        \cost_z(P, C, \Gamma) \in (1 \pm \epsilon) \cdot \cost_z(S, C, \Gamma).
    \end{align*}
\end{definition}

We make an observation in Fact~\ref{fact:assign2kzc} that
an assignment-preserving coreset is as well a coreset for (unconstrained)
clustering.
\begin{fact}
    \label{fact:assign2kzc}
    For $P \subseteq X$, if $S \subseteq P$ is an assignment-preserving
    $\epsilon$-coreset for \kzC on $P$,
    then $S$ is an $\epsilon$-coreset for \kzC on $P$.
\end{fact}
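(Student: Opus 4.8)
The plan is to reduce the statement to one structural identity: for every weighted set $Q \subseteq X$ and every center set $C$ with $|C| \le k$, the unconstrained \kzC cost coincides with the smallest cost achievable over all admissible assignment constraints, i.e.
\[
  \cost_z(Q, C) \;=\; \min_{\Gamma : C \to \mathbb{R}_+,\ \sum_{c \in C}\Gamma(c) = w_Q(Q)} \cost_z(Q, C, \Gamma);
\]
call this identity $(\star)$. Granting $(\star)$, the Fact follows in two lines. Since $w_S(S) = w_P(P)$ (part of \Cref{def:assign_coreset}), the family of admissible constraints appearing in $(\star)$ is literally the same for $P$ and for $S$. Let $\Gamma_S$ attain the minimum in $(\star)$ for $S$ and $\Gamma_P$ attain it for $P$. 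Feeding $\Gamma_S$ into the assignment-preserving guarantee $\cost_z(P, C, \Gamma) \in (1 \pm \epsilon)\cost_z(S, C, \Gamma)$ gives $\cost_z(P, C) \le \cost_z(P, C, \Gamma_S) \le (1+\epsilon)\cost_z(S, C, \Gamma_S) = (1+\epsilon)\cost_z(S, C)$, and feeding in $\Gamma_P$ gives $\cost_z(S, C) \le \cost_z(S, C, \Gamma_P) \le \tfrac{1}{1-\epsilon}\cost_z(P, C, \Gamma_P) = \tfrac{1}{1-\epsilon}\cost_z(P, C)$. Hence $\cost_z(S, C) \in \big[\tfrac{1}{1+\epsilon},\, \tfrac{1}{1-\epsilon}\big]\cdot\cost_z(P, C)$, which is the claimed $(1\pm\epsilon)$-approximation up to the standard harmless rescaling of $\epsilon$.

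It remains to prove $(\star)$, which I would do by the two inequalities separately. For ``$\le$'': use the nearest-center assignment, setting $\sigma^\ast(x, c) := w_Q(x)$ if $c$ attains $\dist(x, C)$ (ties broken arbitrarily) and $\sigma^\ast(x, c) := 0$ otherwise, and $\Gamma^\ast(c) := \sigma^\ast(Q, c)$. Then $\sigma^\ast \sim \Gamma^\ast$, the constraint $\Gamma^\ast$ is admissible because $\sum_{c}\Gamma^\ast(c) = \sum_{x} w_Q(x) = w_Q(Q)$, and $\cost_z^{\sigma^\ast}(Q, C) = \sum_{x} w_Q(x)\,\dist(x, C)^z = \cost_z(Q, C)$, so $\cost_z(Q, C, \Gamma^\ast) \le \cost_z(Q, C)$. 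For ``$\ge$'': fix any admissible $\Gamma$ and any $\sigma \sim \Gamma$; since $\sigma$ fractionally assigns each point of $Q$ in full, $\sum_{c}\sigma(x, c) = w_Q(x)$, and using $\dist(x, c)^z \ge \dist(x, C)^z$ for every $c \in C$ we get $\cost_z^\sigma(Q, C) = \sum_{x}\sum_{c}\sigma(x, c)\dist(x, c)^z \ge \sum_{x}\dist(x, C)^z\sum_{c}\sigma(x, c) = \cost_z(Q, C)$. Minimizing over $\sigma$ and then over $\Gamma$ yields $\min_\Gamma \cost_z(Q, C, \Gamma) \ge \cost_z(Q, C)$, completing $(\star)$.

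I do not anticipate a real obstacle here; the only point that needs care is making explicit that an assignment function $\sigma \sim \Gamma$ (\Cref{def:assign}) not only respects the per-center totals $\sigma(P,c)=\Gamma(c)$ but also assigns each point in full, $\sum_{c}\sigma(x, c) = w_Q(x)$ — this is exactly what powers the ``$\ge$'' direction of $(\star)$ and rules out degenerate assignments that over-serve cheap points. The other mild thing to flag is that $P$ and $S$ must range over the same family of admissible $\Gamma$'s, which is precisely why \Cref{def:assign_coreset} insists on $w_P(P)=w_S(S)$; the final passage from $\big[\tfrac{1}{1+\epsilon},\tfrac{1}{1-\epsilon}\big]$ to $(1\pm\epsilon)$ is the usual cosmetic step and can be suppressed.
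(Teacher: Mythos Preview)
Your argument is correct and is precisely the natural way to justify this observation; the paper itself states \Cref{fact:assign2kzc} without proof, so there is no alternative approach to compare against. The reduction to the identity $(\star)$ and the two-sided use of the optimal constraints $\Gamma_P$ and $\Gamma_S$ is exactly what one expects, and the passage from $[\tfrac{1}{1+\epsilon},\tfrac{1}{1-\epsilon}]$ to $(1\pm\epsilon)$ via rescaling is standard in this paper as well.

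The one substantive point you flag --- that \Cref{def:assign} as written only imposes the per-center totals $\sigma(P,c)=\Gamma(c)$ and not the per-point totals $\sum_{c}\sigma(x,c)=w_Q(x)$ --- is a genuine underspecification in the paper's definition, not a gap in your reasoning. The intended reading (visible, e.g., in the LP inside the proof of \Cref{lemma:fair:uncol}, where both constraints are imposed) is indeed that $\sigma$ is a fractional assignment of the full weight of each point, and under that reading your ``$\geq$'' direction of $(\star)$ goes through. Without it, $(\star)$ would fail and so would the Fact, so you are right to make the assumption explicit. A cosmetic remark: your labels ``$\le$'' and ``$\ge$'' for the two halves of $(\star)$ are swapped relative to the usual convention (your ``$\le$'' paragraph actually proves $\min_\Gamma\cost_z(Q,C,\Gamma)\le\cost_z(Q,C)$), but the mathematics is unaffected.
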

Moreover, this definition of assignment-preserving coresets
generally captures many capacity-constrained clustering problems.
For instance, in capacitated clustering, the goal is to minimize
the \kzC objective subject to the constraint that each center
is assigned by at most a certain number of data points.
Coresets for capacitated clustering have been considered in~\cite{cohen2019fixed}
and our notion of assignment-preserving coresets captures their definition.

\paragraph{Fair clustering.}
In $(\alpha,\beta)$-fair \kzC (\cite{DBLP:conf/nips/Chierichetti0LV17,NEURIPS2019_fc192b0c}), a data set $P$, a collections of groups (not necessary disjoint) $P_1,P_2,...,P_m\subseteq P$ and two constraints vectors $\alpha,\beta\in [0,1]^m$ are given. The objective is to find an assignment $\sigma$ from $P$ to $C$ such that for every group $P_i$ and every center $c\in C$,
$$
\frac{\sigma(P_i,c)}{\sigma(P,c)}\in [\alpha_i,\beta_i].
$$
It has been well known that the requirement of $(\alpha,\beta)$-fair \kzC can be expressed as a collection of assignment constraints~\cite{DBLP:conf/waoa/0001SS19,HJV19,DBLP:conf/icalp/BandyapadhyayFS21,BohmFLMS21}. Following the reduction in \cite{HJV19}, an algorithm that constructs assignment-preserving coresets for \kzC implies coresets algorithm for $(\alpha,\beta)$-fair \kzC (See Section~\ref{sec:fair} for more details).

      \section{New Framework}
\label{sec:framework}

\begin{definition}[Assignment-preserving coresets with additive error] \label{def:assign:add}
Given a data set $P \subseteq X$,
a subset $S \subseteq P$ is called an assignment-preserving
$(\epsilon, A)$-coreset for \kzC on $P$,
if for every $C \subseteq X$ with $|C| \leq k$ and
every assignment constraint $\Gamma : C \to \mathbb{R}_+$,
\begin{align*}
    |\cost_z(P, C, \Gamma) - \cost_z(S, C, \Gamma)|
    \leq \epsilon \cdot (\cost_z(P, C, \Gamma)+ A).
\end{align*}
\end{definition}

The main idea of our new framework (\Cref{thm:reduct_ring}) is to reduce constructing coresets
on general datasets,
to the special case of constructing coresets on datasets that belong to \emph{rings}.
Note that for the rings, we only consider coresets with an additional \emph{additive} error (\Cref{def:assign:add}), which seems to be weaker than
the relative-error coresets that we aim for.
However, by a standard argument, this actually suffices to imply a coreset for the entire dataset without the additive error (see \Cref{sec:additive_err}).

\begin{theorem}
    \label{thm:reduct_ring}
    There is an algorithm that given dataset $P \subseteq X$,
    center $c \in X$, $0 < \epsilon < 1$,
    computes a $2$-partition $\{ W, Z \}$ of $P$ and a weighted point set $S \subseteq P$ of size $2^{O(z\log z)}\cdot\tilde{O}( k \epsilon^{-z})$,
    such that
    \begin{enumerate}
        \item $W$ consists of $2^{O(z\log z)}\cdot\tilde{O}( k \epsilon^{-z})$ rings $\{R_i\}_i$
        where each $R_i \subseteq \ring(c, r_i, 2r_i)$ for some $r_i > 0$, and
        \item $S$ is an assignment-preserving $(\epsilon, \cost_z(P, c))$-coreset for \kzC on $Z$,
    \end{enumerate}
    running in time $\tilde{O}(|P|k)$.
\end{theorem}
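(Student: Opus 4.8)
To prove \Cref{thm:reduct_ring} I would carry out the program sketched in the introduction: view $P$ through the single center $c$, discretize the distance axis into geometric rings, run a greedy bucketing in the spirit of the one-dimensional coreset of Har-Peled and Kushal~\cite{DBLP:journals/dcg/Har-PeledK07}, peel off a small set of ``heavy'' rings to form $W$, and aggregate the remaining rings into groups, each of which is replaced by a two-point coreset; the union of these two-point coresets is $S$, which will be a coreset for $Z:=P\setminus W$. Concretely, set the threshold $\tau:=\frac{\epsilon^z}{2^{\Theta(z\log z)}\,k\log(1/\epsilon)}\cdot\cost_z(P,c)$, fix a base scale $\rho_0>0$ small enough that $\cost_z(\ring(c,0,\rho_0),c)<\tau$, and for $j\ge 0$ let $R^{(j)}:=\ring(c,2^j\rho_0,2^{j+1}\rho_0)$, so that $P$ is partitioned into the innermost ball $\ring(c,0,\rho_0)$ and finitely many rings (all called ``rings'' for brevity). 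Scan the rings in order of increasing radius while maintaining a current bucket $B$: add the next ring $R$ to $B$ if $\cost_z(B\cup R,c)<\tau$; otherwise close $B$ as a \emph{group} (if nonempty), and then start a fresh bucket with $R$ if $\cost_z(R,c)<\tau$, or else declare $R$ a \emph{marked} ring and continue with an empty bucket. Let $W$ be the union of the marked rings and $Z:=P\setminus W$; for each group $G$ let $\pclose,\pfar\in G$ be a closest and a farthest point from $c$, choose $\lambda_x\in[0,1]$ with $\dist(x,c)^z=\lambda_x\dist(\pclose,c)^z+(1-\lambda_x)\dist(\pfar,c)^z$ for every $x\in G$, and put $S_G:=\{\pclose,\pfar\}$ with $w(\pclose):=\sum_{x\in G}w_P(x)\lambda_x$ and $w(\pfar):=\sum_{x\in G}w_P(x)(1-\lambda_x)$; finally $S:=\bigcup_G S_G$. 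Then $w_S(S)=w_Z(Z)$ and $\cost_z(S_G,c)=\cost_z(G,c)$ for every $G$. Every marked ring has cost $\ge\tau$, and a standard charging argument for greedy bucketing shows every group can be charged to a ring or a bucket of cost $\ge\tau/2$, so the number of marked rings and of groups is $O(\cost_z(P,c)/\tau)=2^{O(z\log z)}\tilde{O}(k\epsilon^{-z})$; this bounds the number of rings in $W$ and gives $|S|\le 2\cdot(\#\text{groups})$, while the computation is one sort plus one linear scan of $P$, well within $\tilde{O}(|P|k)$.

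The remaining task is to show $S$ is an assignment-preserving $(\epsilon,\cost_z(P,c))$-coreset for $Z$, i.e.\ to verify \Cref{def:assign:add} with $A=\cost_z(P,c)$. Fix $C$ with $|C|\le k$ and an assignment constraint $\Gamma$. Relative to $C$, \emph{color} every ring containing a point of $C$ together with its $\Theta(\log(1/\epsilon))$ nearest rings on each side, and color a group if it contains a colored ring; since $|C|\le k$ there are at most $O(k\log(1/\epsilon))$ colored groups. Take an optimal $\sigma_Z\sim\Gamma$ for $Z$ and, for each group $G$, let $\sigma_G:=\sigma_Z|_{G\times C}$ with column sums $\Gamma_G:C\to\mathbb{R}_+$, so $\sum_G\Gamma_G=\Gamma$ and $\sum_G\cost_z^{\sigma_G}(G,C)=\cost_z(Z,C,\Gamma)$. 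For each $G$ build an assignment $\sigma'_G\sim\Gamma_G$ for $S_G$ by routing the mass $\sigma_G(x,c')$ to $\pclose$ and $\pfar$ with coefficients $\lambda_x$ and $1-\lambda_x$; this preserves all mass constraints, and by \Cref{lem:gen:tri}(2) with $t=\epsilon$, using $\dist(x,\pclose)\le 2\dist(x,c)$ and $(1-\lambda_x)\dist(x,\pfar)^z\le 2^{O(z)}\dist(x,c)^z$ and summing over $x\in G$, one gets $|\cost_z^{\sigma'_G}(S_G,C)-\cost_z^{\sigma_G}(G,C)|\le\epsilon\,\cost_z^{\sigma_G}(G,C)+2^{O(z\log z)}\epsilon^{1-z}\cost_z(G,c)$. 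For a \emph{colored} group this already suffices, because groups are never marked (so $\cost_z(G,c)<\tau$) and $\sum_{\text{colored }G}2^{O(z\log z)}\epsilon^{1-z}\cost_z(G,c)\le 2^{O(z\log z)}\epsilon^{1-z}\cdot O(k\log(1/\epsilon))\cdot\tau\le\epsilon\,\cost_z(P,c)$ by the choice of $\tau$ — this is precisely why $\tau$, and hence the number of rings, scales like $\epsilon^{-z}$ rather than $\epsilon^{-z-1}$.

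The crux, and the step I expect to be the main obstacle, is to show that \emph{uncolored} groups contribute only the multiplicative term and no additive error; this is the main technical lemma. The enabling fact is a dichotomy: if $G$ occupies rings $R^{(a)},\dots,R^{(b)}$ and is uncolored, then no ring within $\Theta(\log(1/\epsilon))$ of that range contains a center, so every $c'\in C$ is either within distance $\epsilon\cdot\min_{y\in G}\dist(y,c)$ of $c$, or at distance $\ge\epsilon^{-1}\cdot\max_{y\in G}\dist(y,c)$ from $c$. In the first case $\dist(y,c')=(1\pm\epsilon)\dist(y,c)$ for every $y\in G\cup\{\pclose,\pfar\}$; in the second case $\dist(y,c')$ is within a $(1\pm\epsilon)$ factor of the single value $\dist(c',c)$ for every such $y$. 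In either case, rerouting $\sigma_G(x,c')$ to $\pclose,\pfar$ with coefficients $\lambda_x,1-\lambda_x$ changes the cost towards $c'$ by only a $(1\pm O(z\epsilon))$ factor — in the first case via the defining identity $\lambda_x\dist(\pclose,c)^z+(1-\lambda_x)\dist(\pfar,c)^z=\dist(x,c)^z$ — so $\cost_z^{\sigma'_G}(S_G,C)\in(1\pm O(z\epsilon))\cost_z^{\sigma_G}(G,C)$ with no additive loss. Summing the uncolored and colored contributions and using $\sum_G\cost_z^{\sigma_G}(G,C)=\cost_z(Z,C,\Gamma)$ gives $\cost_z(S,C,\Gamma)\le(1+O(z\epsilon))\cost_z(Z,C,\Gamma)+\epsilon\cost_z(P,c)$; the symmetric construction — starting from an optimal assignment for $S$, distributing each $\sigma_S(\pclose,c')$ over $G$ in proportion to $\lambda_x w_P(x)$ and each $\sigma_S(\pfar,c')$ in proportion to $(1-\lambda_x)w_P(x)$ — yields the matching lower bound, and after rescaling $\epsilon$ this is \Cref{def:assign:add} with $A=\cost_z(P,c)$. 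The difficulty in the uncolored-group lemma is that the routing between $G$ and $\{\pclose,\pfar\}$ must be simultaneously cost-preserving up to $(1\pm\epsilon)$ for \emph{every} center (which is what forces the $\Theta(\log(1/\epsilon))$-ring coloring margin) and \emph{exactly} consistent with the prescribed $\Gamma_G$; this needs the explicit assignment constructions rather than an abstract perturbation, and one must also keep the restricted constraints $\Gamma_G$ mutually consistent across groups so the per-group bounds add up to a bound on $\cost_z(Z,C,\Gamma)$ — the same care that makes $S$ usable for the capacitated and fair-clustering applications.
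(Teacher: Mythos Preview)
Your proposal follows the paper's approach closely: ring decomposition around $c$, a cost threshold $\tau\asymp(\epsilon/z)^z\cost_z(P,c)/(k\log(1/\epsilon))$, marking of heavy rings, greedy grouping of the remaining consecutive unmarked rings, the two-point coreset $\{\pclose,\pfar\}_w$ per group via the convex-combination identity, and the colored/uncolored dichotomy relative to $C$ with a $\Theta(\log(1/\epsilon))$-ring margin for the error analysis. The forward direction of your uncolored-group lemma is correct and is exactly how the paper proves item~1 of its Lemma~\ref{lemma:fair:uncol}.

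There is one genuine gap in the reverse direction. Your ``symmetric construction'' (spread $\pi(\pclose,c')$ over $G$ proportionally to $\lambda_x w_P(x)$ and $\pi(\pfar,c')$ proportionally to $(1-\lambda_x)w_P(x)$) does \emph{not} admit the per-center $(1\pm O(z\epsilon))$ analysis you suggest for close centers. The defining identity gives, forward, $\sum_x\lambda_x\sigma(x,u)\,a+\sum_x(1-\lambda_x)\sigma(x,u)\,b=\sum_x\sigma(x,u)\dist(x,c)^z$ (with $a=\dist(\pclose,c)^z$, $b=\dist(\pfar,c)^z$), which is exactly what you need; but reversed, $\sum_x\bigl[\tfrac{\lambda_x}{w(\pclose)}\pi(\pclose,u)+\tfrac{1-\lambda_x}{w(\pfar)}\pi(\pfar,u)\bigr]\dist(x,c)^z$ does \emph{not} collapse to $\pi(\pclose,u)\,a+\pi(\pfar,u)\,b$. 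A three-point example ($\dist^z=1,2,3$; so $\lambda=1,\tfrac12,0$ and $w(\pclose)=w(\pfar)=\tfrac32$; take $\pi(\pclose,u)=\tfrac32$, $\pi(\pfar,u)=0$ for a single close $u$) gives cost $2$ versus $\tfrac32$, a constant-factor blowup independent of $\epsilon$. The paper circumvents this by taking $\sigma$ to be the LP (min-cost-flow) optimum and proving, in its Lemma~\ref{lemma:fair:XitoPi}, that the aggregate discrepancy on $\Cclose$ is at most $O(\Gamma(\Cfar))\cdot\dist(\pfar,c)^z$, which is then charged to $\cost_z^\pi(D_i,\Cfar)$ via the far-center distance gap. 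Your proportional construction can in fact be salvaged by the same global charging (one computes the $\Cclose$-error as $(b-a)(\alpha-\beta)\sum_x\lambda_x(1-\lambda_x)$ with $\alpha=\pi(\pclose,\Cclose)/w(\pclose)$, $\beta=\pi(\pfar,\Cclose)/w(\pfar)$, and checks $|\alpha-\beta|\cdot\min(w(\pclose),w(\pfar))\le\Gamma(\Cfar)$), but this is a \emph{global} argument coupling $\Cclose$ to $\Cfar$, not the per-center symmetric one you sketched; you should expect to need this extra step.
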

Note that the assignment-preserving coreset $S$ for the $Z$ part can be constructed even in general metrics.
Moreover, this assignment-preserving coreset is very general (see Fact~\ref{fact:assign2kzc}),
and it can be used as a coreset for all clustering problems that we consider in this paper.
Hence, in order to obtain a full coreset, it only remains to construct coresets
for $W$, which merely consists of $2^{O(z\log z)}\cdot\tilde{O}( k \epsilon^{-z})$ \emph{rings}.
Therefore, this theorem essentially reduces the coreset construction
for a general data set to ring datasets.
In particular, if one can obtain a coreset (with additive error) of size $T(\epsilon, k, z)$ for each ring,
then one can construct a coreset of size $2^{O(z\log z)}\cdot\tilde{O}( k^2 \epsilon^{-z}) \cdot T(\epsilon, k, z)$ for the entire dataset.

\paragraph{Improved bound for $k = 1$.}
For the special case of \onezC
(noting that when $k=1$ the assignment constraints become trivial and it is equivalent to the un-constrained version),
we have a better argument that yields an improved dependence in $\epsilon$.
\begin{theorem}
    \label{thm:onez}
    There is an algorithm that given dataset $P \subseteq X$,
    center $c \in X$, $0 < \epsilon < 1$,
    computes a $2$-partition $\{ W, Z \}$ of $P$ and a weighted point set $S \subseteq P$ of size $3$,
    such that
    \begin{enumerate}
        \item $W$ consists of $O(\log\frac{z}{\epsilon})$ rings $\{R_i\}_i$
        where each $R_i \subseteq \ring(c, r_i, 2r_i)$ for some $r_i > 0$, and
        \item $S$ is an $(\epsilon, \cost_z(P, c))$-coreset for \onezC on $Z$,
    \end{enumerate}
    running in time $\tilde{O}(|P|k)$.
\end{theorem}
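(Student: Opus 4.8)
The plan is to specialize the reduction underlying \Cref{thm:reduct_ring} to the case $k=1$, where a single center $c$ is given. First I would pass to the one-dimensional picture: set $d_x:=\dist(x,c)$ for $x\in P$, note that for any query center $c'$ the cost is $\cost_z(P,c')=\sum_x w_P(x)\dist(x,c')^z$, and partition $P$ into the dyadic rings $R_j:=\{x\in P: d_x\in(2^j,2^{j+1}]\}$ together with $\{x: d_x=0\}$. Writing $\mu_j:=\cost_z(R_j,c)$, I would select $O(\log\frac z\epsilon)$ ``important'' rings to form $W$ --- roughly, those carrying a non-negligible fraction of the cost on one side --- so that $Z=P\setminus W$ decomposes into an \emph{inner tail} $Z_{\mathrm{in}}$ (rings below the marked scales), whose radii are at most $\bigl(\tfrac{\epsilon\,\cost_z(P,c)}{w_P(P)\cdot(3z/\epsilon)^{z-1}}\bigr)^{1/z}$, and an \emph{outer tail} $Z_{\mathrm{out}}$ (rings above the marked scales), whose per-scale cost decays geometrically; the marked scales between them are only $O(\log\frac z\epsilon)$ in number. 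The three points of $S$ are then: $\pclose\in Z$ (the closest point of $Z$ to $c$), carrying the total weight of $Z_{\mathrm{in}}$; and two points $\{p,\pfar\}$ on $Z_{\mathrm{out}}$ (its closest and farthest points to $c$), weighted by the distance-convex-combination rule used for the unmarked groups in the general construction, so that $w_P$ and $\sum d_x$ are preserved exactly on $Z_{\mathrm{out}}$. Collapsing $Z_{\mathrm{in}}$ to a single point is afforded because the chosen radius bound makes its ``displacement mass'' $w_P(Z_{\mathrm{in}})\cdot(2\cdot\text{radius})^z$ at most $\tfrac\epsilon3\cost_z(P,c)$; the outer tail needs two points because it can span many octaves, but its geometric decay keeps the convex-combination error small.

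I would then prove the coreset inequality $|\cost_z(Z,c')-\cost_z(S,c')|\le \epsilon(\cost_z(Z,c')+\cost_z(P,c))$ by a case analysis on $D:=\dist(c,c')$, using \Cref{lem:gen:tri}(2) throughout with $t=\Theta(\epsilon)$. If $D$ is below the smallest marked radius, both $\cost_z(Z,\cdot)$ and $\cost_z(S,\cdot)$ are within a relative error plus $(3z/t)^{z-1}w_P(P)D^z\le\tfrac\epsilon3\cost_z(P,c)$ of their common value at $c$. If $D$ exceeds the largest marked radius, then $\cost_z(Z,c')$ is dominated by the part of $Z$ at radius $\le D$, all of which sees distance $\asymp D$ from $c'$, while $Z_{\mathrm{out}}$ is tracked by $\{p,\pfar\}\subseteq Z_{\mathrm{out}}$ up to a relative error; weight- and distance-preservation then close the gap. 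The middle regime --- $D$ among the marked scales --- is the crucial one: those scales are precisely the ones $Z$ does not touch, so every point of $Z$ is either far below $D$ (the inner tail, already collapsed) or far above $D$ (the outer tail), and in particular no query lands among $Z$, which rules out the large cancellation a query near $Z$ could otherwise create; then the inner-tail bound and the geometric-decay bound on $Z_{\mathrm{out}}$'s convex-combination error, each $\le\tfrac\epsilon3(\cost_z(Z,c')+\cost_z(P,c))$, combine to the claim. The running time is dominated by computing the $d_x$'s, bucketing, and prefix/suffix-sum scans to locate the marked band, all in $\tilde O(|P|)$.

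The step I expect to be the main obstacle is controlling the convex-combination error of $\{p,\pfar\}$ on the outer tail: the naive triangle inequality charges an additive $2^{\Theta(z)}$ times the cost of the innermost ring of $Z_{\mathrm{out}}$, which is not affordable on its own. Making it affordable requires choosing the upper cutoff of the marked band at a scale where the suffix sums $\sum_{j'\ge j}\mu_{j'}$ have already contracted by a constant factor per step over the previous $\Theta(\log\frac z\epsilon)$ scales --- which both forces $|W|=O(\log\frac z\epsilon)$ and forces $\cost_z(Z_{\mathrm{out}},c)$ to be within a constant of its innermost ring --- and then exploiting that the additive budget is not just $\epsilon\cost_z(P,c)$ but also $\epsilon\cost_z(Z,c')$, which is large exactly for the far/awkward queries where the triangle-inequality slack is itself large. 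Balancing the three error contributions, the contraction constant, the number of buffer scales, and the parameter $t$ in \Cref{lem:gen:tri} so that $W$ stays at $O(\log\frac z\epsilon)$ rings while the error comes out $\le\epsilon(\cost_z(Z,c')+\cost_z(P,c))$ is the technical heart of the argument.
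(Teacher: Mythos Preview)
Your high-level structure --- inner tail collapsed to one point, a band of $O(\log(z/\epsilon))$ rings forming $W$, and an outer tail collapsed to two points via the convex-combination rule --- matches the paper's exactly. The difference is in how the cutoffs are chosen, and consequently in how the outer-tail error is controlled; the paper's version is considerably simpler than what you propose.

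The paper sets a single parameter $r := (\cost_z(P,c)/|P|)^{1/z}$ and takes $\Pclose = \{p : \dist(p,c) < \tfrac{\epsilon}{6z} r\}$, $\Pfar = \{p : \dist(p,c) > \tfrac{120z}{\epsilon^2} r\}$, with $W = \Pmain$ the annulus in between. No cost distribution or suffix sums are inspected; the band is $O(\log(z/\epsilon))$ rings purely because the ratio of the two thresholds is $\poly(z/\epsilon)$.

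Your identified ``main obstacle'' --- the two-point error on $Z_{\mathrm{out}}$ when the query lands among or beyond the outer tail --- is handled by a much blunter argument that needs no geometric decay. The paper splits on $D = \dist(s,c)$ into only two cases:
\begin{itemize}
\item $D \le 5r/\epsilon$: every point of $\Pfar$ is at distance $> \tfrac{120z}{\epsilon^2} r \ge \tfrac{24z}{\epsilon} D$ from $c$, so from $\Pfar$'s perspective $s$ and $c$ coincide up to relative error $O(\epsilon)$, and the two-point set (exact at $c$) is accurate at $s$.
\item $D > 5r/\epsilon$: here the paper does \emph{not} attempt to make the two-point set accurate. It bounds the error crudely by $2^{z}\cost_z(\Pfar,c)\le 2^{z}\cost_z(P,c)$, and then observes via Markov (from the definition of $r$) that at least $|P|/2$ points lie within $O(r)$ of $c$ and hence at distance $\Omega(r/\epsilon)$ from $s$, giving $\cost_z(P,s)\ge \Theta(2^{z}/\epsilon)\cdot\cost_z(P,c)$. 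The crude error is then absorbed into $\tfrac{\epsilon}{2}\cost_z(P,s)$.
\end{itemize}
In particular there is no ``middle regime'': the threshold $5r/\epsilon$ sits strictly inside $W$, so the case where $s$ lands in $Z_{\mathrm{out}}$ is swallowed entirely by the second bullet. Your suffix-sum contraction scheme could be made to work, but the single-parameter cutoff makes the obstacle you anticipated simply disappear.

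One small slip: to make $\cost_z(S_{\mathrm{out}},c)=\cost_z(Z_{\mathrm{out}},c)$ exact, the convex-combination rule must preserve $\sum d_x^z$, not $\sum d_x$ as you wrote.
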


\paragraph{The power of uniform sampling.}
Due to the uniform nature of the ring datasets,
we show in Section~\ref{sec:uniform_vc} that the very simple uniform sampling
already suffices for constructing coresets for \kzC on ring datasets.
This new construction based on uniform sampling
further reduces the construction
of coresets into bounding the \emph{uniform} shattering dimension
of the ball range space induced by the metric space.
The uniform shattering dimension is both easier to analyze,
and wider considered in the literature
than the much more involved weighted shattering dimension used in previous works,
which in turn results in several new and/or improved coreset size bounds.

\subsection{Proof of Theorem~\ref{thm:reduct_ring}}

We provide a sketch of the main algorithm in Algorithm \ref{alg:main_full} and present details of each step in the corresponding paragraph.

\begin{algorithm}[ht]
    \caption{Algorithm Outline for \Cref{thm:reduct_ring}}
    \label{alg:main_full}
    \begin{algorithmic}[1]
        \State set $t \gets \lceil 2+\log \frac{24z k}{\epsilon}\rceil$,
$\err \gets \big(\frac{\epsilon}{6z}\big)^z \cdot \frac{\cost_z(P, c)}{kt}$, and $\mathbb{Z}^*\gets \mathbb{Z}\cup \{-\infty\}$
        \State decompose $P$ into rings $P_i \gets P\cap \ring(c,2^{i-1},2^{i})$ ($i\in \mathbb{Z}^*$) \Comment{call  $P_i$ heavy if $\cost_z(P_i,c)\geq \err$}
        \State mark all heavy rings
        \State merge consecutive unmarked rings to obtain $2^{O(z\log z)}\cdot\tilde{O}( k \epsilon^{-z})$ many groups such that each of the groups has cost at most $\err$, as in Lemma~\ref{lemma:defineQ}
        \State construct a two-points coreset for each group produced in the last step
        \State let $W$ be the union of marked rings, let $Z\gets P\setminus W$, and let $S$ include the union of coresets obtained in the last step
    \end{algorithmic}
\end{algorithm}

\paragraph{Ring decomposition.}
Set $t := \lceil 2+\log \frac{24z k}{\epsilon}\rceil$,
 $\err := \big(\frac{\epsilon}{6z}\big)^z \cdot \frac{\cost_z(P, c)}{kt}$, and   $\mathbb{Z}^*\gets \mathbb{Z}\cup \{-\infty\}$.
If $c\in P$, add $c$ into both $Z$ and $S$ in advance, and let $P\gets P\setminus \{c\}$. Decompose $P$ into rings $\{P_i\mid i\in \mathbb{Z}^*\}$,
where
 for $i\in \mathbb{Z}$,
\begin{align*}
    P_i := P\cap \ring(c,2^{i-1},2^i)=\{p\in P \mid 2^{i-1} <  \dist(p,c)\leq 2^i \}
\end{align*} and if $i=-\infty$, $P_i:=P\cap \{c\}$.

 Since at most $|P|$ rings are non-empty, we can easily compute the above decomposition in near-linear time.

Call a ring $j$ \emph{heavy} if $\cost_z(P_j,c)\geq \mathrm{err}$.
So the number of heavy rings is at most
$\frac{\cost_z(P, c)}{\err}=2^{O(z\log z)}\cdot\tilde{O}( k \epsilon^{-z})$.
We mark all heavy rings. Call a ring \emph{unmarked} if it is not a marked ring.
\paragraph{Defining the partition.}
Now, we define $Z$ as the set of points belong to the unmarked rings,
and define $W$ as the marked rings.
Clearly, $W$ consists of $2^{O(z\log z)}\cdot\tilde{O}( k \epsilon^{-z})$ (marked) rings.
Hence, it remains to construct an $(\epsilon, \cost_z(P, c))$-coreset for $Z$,
the unmarked rings.

\paragraph{Re-grouping unmarked rings.}
Observe that unmarked rings can be grouped into $2^{O(z\log z)}\cdot\tilde{O}( k \epsilon^{-z})$ buckets
of consecutive rings, due to the fact that there are at most $2^{O(z\log z)}\cdot\tilde{O}( k \epsilon^{-z})$
heavy rings.
Denote these buckets as $B_1, B_2, \ldots$,
where each $B_i$ consists of a collection of consecutive unmarked rings.
For every bucket $B_i$, we apply the following Lemma~\ref{lemma:defineQ}
to further group $B_i$ into $\mathcal{G}_i$.
Let $\mathcal{G} := \bigcup_i \mathcal{G}_i$ be the set of all groups of unmarked rings.

\begin{lemma} \label{lemma:defineQ}
    For every bucket $B_i$,
    rings in $B_i$
    can be grouped into $\mathcal{G}_i$
    where each $G \in \mathcal{G}_i$ consists of consecutive unmarked rings,
    such that $\forall G \in \mathcal{G}_i$,
    $\cost_z(G, c) \leq \err$.
    Furthermore,
    $|\bigcup_{i} \mathcal{G}_i| \leq 2^{O(z\log z)}\cdot\tilde{O}( k \epsilon^{-z})$.
\end{lemma}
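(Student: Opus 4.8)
The plan is to run, inside each bucket $B_i$, the standard one‑dimensional greedy‑bucketing procedure of \cite{DBLP:journals/dcg/Har-PeledK07} on the rings ordered by index, and then bound the total number of groups by a charging argument against $\cost_z(P,c)$. The only structural facts we use are that a bucket is a maximal run of consecutive unmarked rings, that $\cost_z(\cdot,c)$ is additive over the disjoint rings that make up a group, and that every unmarked ring $P_j$ satisfies $\cost_z(P_j,c) < \err$ because it is not heavy.

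First I would form the groups: scan the rings of $B_i$ in order, maintaining a current group to which we append the next consecutive unmarked ring as long as the accumulated cost with respect to $c$ stays at most $\err$; the first time appending the next ring $r$ would push the accumulated cost above $\err$, we close the current group and open a new one starting with $r$. Since a single unmarked ring has cost strictly below $\err$, every group opened this way is nonempty, and by the stopping rule every closed group $G$ has $\cost_z(G,c)\le\err$. This gives properties (a) and (b) in the statement.

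Next I would bound $m_i := |\mathcal{G}_i|$. Write the groups of $B_i$ in order as $G_1,\dots,G_{m_i}$, and for $\ell<m_i$ let $r_\ell$ be the first ring of $G_{\ell+1}$ (the ring whose addition to $G_\ell$ triggered the split). By the stopping rule $\cost_z(G_\ell,c)+\cost_z(r_\ell,c)>\err$, and since $r_\ell\subseteq G_{\ell+1}$ we have $\cost_z(r_\ell,c)\le\cost_z(G_{\ell+1},c)$. Summing over $\ell=1,\dots,m_i-1$, and noting that the $G_\ell$ are pairwise disjoint and the $r_\ell$ are distinct rings, all contained in $B_i$, gives $(m_i-1)\,\err<2\cost_z(B_i,c)$, hence $m_i\le 2\cost_z(B_i,c)/\err+1$. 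Summing over all buckets and using $\sum_i\cost_z(B_i,c)\le\cost_z(P,c)$ together with the fact that the number of buckets is at most one plus the number of heavy rings (which is itself at most $\cost_z(P,c)/\err$), I obtain $|\bigcup_i\mathcal{G}_i|\le 3\cost_z(P,c)/\err+1$. Plugging in $\err=(\epsilon/6z)^z\cdot\cost_z(P,c)/(kt)$ with $t=O(\log(zk/\epsilon))$ yields $\cost_z(P,c)/\err = kt\,(6z/\epsilon)^z = 2^{O(z\log z)}\cdot\tilde{O}(k\epsilon^{-z})$, which is the claimed bound.

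There is no real obstacle here; the argument is elementary. The two points needing care are (i) invoking non‑heaviness of the unmarked rings, which is exactly what prevents a group from being forced empty and what makes the per‑ring cost bound compatible with the target $\err$, and (ii) the bookkeeping that the split rings $r_\ell$ are distinct, so that $\sum_\ell\cost_z(r_\ell,c)\le\cost_z(B_i,c)$. The extra ``$+1$ per bucket'' in the count is harmless, since the number of buckets is already dominated by the number of heavy rings.
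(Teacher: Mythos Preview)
Your proof is correct and follows the same high-level idea as the paper (greedy bucketing by cost), but your implementation is slightly different and arguably more direct. The paper runs the greedy procedure at the \emph{point} level, allowing points to be split fractionally so that every group (except possibly the last) has cost exactly $\err$; it then observes that each such fractional group partially intersects at most two rings, extracts those boundary rings as separate singleton groups, and bounds $|\mathcal{G}_i|\le 3|\mathcal{G}_i'|$. You instead run the greedy directly on whole rings, using the non-heaviness of each unmarked ring to guarantee nonempty groups, and recover the count via the charging inequality $\cost_z(G_\ell,c)+\cost_z(r_\ell,c)>\err$ summed over splits. Your route avoids the fractional construction and the rounding step entirely; the paper's route makes the ``all but one group has cost exactly $\err$'' structure explicit, which gives a clean count of $|\mathcal{G}_i'|$ but then pays a factor of $3$ in the conversion. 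Both land on the same $O(\cost_z(P,c)/\err)$ bound.
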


\begin{proof}
    Fix some $B_i$.
    We start with constructing a grouping $\mathcal{G}_i'$ of $B_i$.
    Initialize $\mathcal{G}_i' := \emptyset$ as the tentative result.
    List points $p$ in $B_i$ in the increasing order of $\dist(p, c)$.
    Starting from the first element in $B_i$,
    greedily pick a maximal subset $G'$ (in order) such that
    $\cost_z(G', c) \leq \err$. Here, we allow $G'$ to include points fractionally.
    Keep on picking such $G'$ and add it into $\mathcal{G}_i'$,
    until all elements in $B_i$ are picked.

    We construct $\mathcal{G}_i$ from $\mathcal{G}_i'$ as follows.
    Observe that each $G' \in \mathcal{G}_i'$
    can \emph{partially} intersect at most two rings from $B_i$.
    Now, examine each $G' \in \mathcal{G}_i'$,
    for every partially intersected ring $R$, designate this entire $R$
    as a new group and include it in $\mathcal{G}_i$,
    remove the partially intersected rings from $G'$,
    and include all rings in the remaining $G'$ as a new group to $\mathcal{G}_i$.
    Eventually, remove the empty or duplicated groups from $\mathcal{G}_i$, if any.

    By construction, $\forall G \in \mathcal{G}_i$,
    either $G$ consists of a single unmarked ring which implies $\cost_z(G, c) \leq \err$, or
    $G$ consists of several consecutive unmarked rings such that
    $\cost_z(G, c) \leq \err$.

    Hence, it remains to bound $|\bigcup_i \mathcal{G}_i|$.
    Since every $G' \in \mathcal{G}_i'$ creates at most $3$ groups in $\mathcal{G}_i$,
    we have $|\mathcal{G}_i| \leq 3 \cdot |\mathcal{G}_i'|$.
    Observe that in every $\mathcal{G}_i'$, there is at most one $G'$ such that
    $\cost_z(G, c) < \err$ and all other $G'$ satisfy $\cost_z(G, c) = \err$.
    Therefore, since there are at most $2^{O(z\log z)}\cdot\tilde{O}( k \epsilon^{-z})$ buckets, over all $i$,
    we have
\begin{align*}
        \left|\bigcup_i \mathcal{G}_i\right| \leq
        2^{O(z\log z)}\cdot\tilde{O}( k \epsilon^{-z})+\frac{\cost(P, c)}{\err}=2^{O(z\log z)}\cdot\tilde{O}( k \epsilon^{-z}).
    \end{align*}
    This finishes the proof of Lemma~\ref{lemma:defineQ}.
\end{proof}

\paragraph{Two-points coresets construction for unmarked groups.}
To construct the coreset $S$ for the unmarked rings, we first construct a two-points
coreset for each group of rings $G_i \in \mathcal{G}$,
and then take the union of them.

    For every group $G_i \in \mathcal{G}$,
    we construct a coreset of only two points using the following steps.
    Let $\pclose_i, \pfar_i \in G_i$ be the closest and furthest points to $c$ (breaking ties consistently), respectively.
    Then for every $p \in G_i$,
    $\dist^z(p, c)$ can be represented by
    $\dist^z(p, c)
    = \lambda_p \cdot \dist^z(\pclose_i, c) + (1 - \lambda_p) \cdot \dist^z(\pfar_i, c)$,
    such that $\lambda_p \in [0, 1]$.
    Then define the weight $w(\pclose_i) := \sum_{p \in G_i} \lambda_p$,
    and similarly the weight of $w(\pfar_i) := \sum_{p \in G_i} (1 - \lambda_p)$.
    Note that $w(\pclose_i) + w(\pfar_i) = w_P(G_i)$ and $\cost_z(\{\pclose_i, \pfar_i\}_w, c) = \cost_z(G_i, c)$,
    where $\{\pclose_i, \pfar_i\}_w$ denotes the two-point coreset.

\paragraph{Size analysis.}
Since the unmarked rings are partitioned into $|\mathcal{G}| \leq 2^{O(z\log z)}\cdot\tilde{O}( k \epsilon^{-z}) $
groups and each of the group creates two coreset points in $S$,
the total number of coreset points is bounded by $|S| \leq 2^{O(z\log z)}\cdot\tilde{O}( k \epsilon^{-z})$.

\paragraph{Error analysis.}
Observe that the requirement of $|Z| = w_S(S)$ follows immediately from the construction, hence we focus on analyzing the coreset cost error.

In the following, we fix a center set $C$ with $|C| = k$,
and an assignment constraint $\Gamma : C \to \mathbb{R}_+$.
We call a ring $P_i$ \emph{important} if there exists $u \in C$, $2^{i-1}< \dist(u,c) \leq 2^i$.
Namely, $P_i$ is important if $C\cap \ring(c,2^{i-1},2^i)\neq\emptyset$.
We color a ring $P_j$ if there exists an important ring $P_i$ such that $|i-j|\leq t$. Namely, for every important ring, we color its $2t$ neighbors and itself.

We call a group $G_i \in \mathcal{G}$ \emph{colored} if it contains any colored ring. We call a group \emph{uncolored} if it is not colored. Colored and uncolored groups have totally different behaviors in producing coreset error.
We bound the coresets errors producing by colored and uncolored groups,
by using Lemma~\ref{lemma:fair:color} and Lemma~\ref{lemma:fair:uncol} respectively.

\begin{lemma}\label{lemma:fair:color}
Let $G_i$ be an unmarked group. Let $\sigma : G_i \times C \to \mathbb{R}_+$
and $\pi : D_i \times C \to \mathbb{R}_+$ both consistent with $\Gamma$.
Then we have
$$
    \left|\cost_z^\sigma(G_i, C)-\cost_z^\pi(D_i, C)\right|
    \leq\epsilon \cdot \cost_z^{\sigma}(G_i,C)+ \frac{\epsilon}{3kt}\cdot\cost_z(P, c).
$$
\end{lemma}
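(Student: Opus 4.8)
The plan is to transform any assignment of $G_i$ into an assignment of the two-point coreset $D_i:=\{\pclose_i,\pfar_i\}_w$ that keeps the \emph{same} marginals over $C$ (so that it is consistent with the same $\Gamma$), symmetrically in the other direction, and then to bound the cost change caused by such a transformation using the generalized triangle inequality. The only structural fact about $G_i$ that enters is that it is an \emph{unmarked} group, so by \Cref{lemma:defineQ} its self-cost satisfies $\cost_z(G_i,c)\le\err$; this is what will bound the additive term. (Throughout we use the natural per-point normalization $\sum_{u\in C}\sigma(p,u)=w_P(p)$ for assignments.)

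For the transformation, recall the scalars $\lambda_p\in[0,1]$ from the construction, defined by $\dist(p,c)^z=\lambda_p\dist(\pclose_i,c)^z+(1-\lambda_p)\dist(\pfar_i,c)^z$, with the coreset weights $w(\pclose_i),w(\pfar_i)$ being the $\lambda_p$- and $(1-\lambda_p)$-weighted total masses of $G_i$, and recall $\cost_z(D_i,c)=\cost_z(G_i,c)$. Given $\sigma$ on $G_i$, define $\pi$ on $D_i$ by $\pi(\pclose_i,u):=\sum_{p\in G_i}\lambda_p\,\sigma(p,u)$ and $\pi(\pfar_i,u):=\sum_{p\in G_i}(1-\lambda_p)\,\sigma(p,u)$ for every $u\in C$. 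Summing over $u$ shows $\pi(\pclose_i,C)=w(\pclose_i)$ and $\pi(\pfar_i,C)=w(\pfar_i)$, so $\pi$ is a legal assignment; and for every $u$ we have $\pi(\pclose_i,u)+\pi(\pfar_i,u)=\sum_{p\in G_i}\sigma(p,u)$, so $\pi$ has exactly the $C$-marginals of $\sigma$, whence $\pi\sim\Gamma$ whenever $\sigma\sim\Gamma$. The reverse map is symmetric: split $\pi(\pclose_i,u)$ over $G_i$ in proportion to $w_P(p)\lambda_p$ and $\pi(\pfar_i,u)$ in proportion to $w_P(p)(1-\lambda_p)$ (the degenerate cases $\pclose_i=\pfar_i$, $w(\pclose_i)=0$, or $w(\pfar_i)=0$ are trivial). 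The point to stress is that each $p\in G_i$ is split between $\pclose_i$ and $\pfar_i$ by a \emph{single} scalar $\lambda_p$ that is the same for all centers $u$, which is exactly why the $C$-marginals, and hence the constraint $\Gamma$, are preserved — this is the ``assignment-preserving'' mechanism.

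To estimate the cost change, write $\cost_z^\sigma(G_i,C)-\cost_z^\pi(D_i,C)=\sum_{p\in G_i}\sum_{u\in C}\sigma(p,u)\big(\dist(p,u)^z-\lambda_p\dist(\pclose_i,u)^z-(1-\lambda_p)\dist(\pfar_i,u)^z\big)$, and bound the bracket by $\lambda_p|\dist(p,u)^z-\dist(\pclose_i,u)^z|+(1-\lambda_p)|\dist(p,u)^z-\dist(\pfar_i,u)^z|$. Applying \Cref{lem:gen:tri}(2) with parameter $\epsilon$ to each difference (pivoting at $p$), then summing over $u$ with $\sum_u\sigma(p,u)=w_P(p)$ and over $p$, the multiplicative parts contribute at most $\epsilon\cdot\cost_z^\sigma(G_i,C)$ and the additive parts at most $(3z/\epsilon)^{z-1}\sum_{p\in G_i}w_P(p)\big(\lambda_p\dist(p,\pclose_i)^z+(1-\lambda_p)\dist(p,\pfar_i)^z\big)$. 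Since $\pclose_i$ is the closest and $\pfar_i$ the farthest point of $G_i$ from $c$, we have $\dist(p,\pclose_i)\le2\dist(p,c)$ and $\dist(p,\pfar_i)\le2\dist(\pfar_i,c)$, so the first inner sum is $\le2^z\cost_z(G_i,c)$ and the second is $\le2^z\dist(\pfar_i,c)^z\,w(\pfar_i)\le2^z\cost_z(D_i,c)=2^z\cost_z(G_i,c)$. Hence the additive part is at most $2^{z+1}(3z/\epsilon)^{z-1}\cost_z(G_i,c)\le2^{z+1}(3z/\epsilon)^{z-1}\err$, and plugging in $\err=(\tfrac{\epsilon}{6z})^z\tfrac{\cost_z(P,c)}{kt}$ this equals $\tfrac{2\epsilon}{3zkt}\cost_z(P,c)\le\tfrac{\epsilon}{3kt}\cost_z(P,c)$ for $z\ge2$; for $z=1$ the exact triangle inequality has no multiplicative term, and the defining identity $\lambda_p\dist(\pclose_i,c)+(1-\lambda_p)\dist(\pfar_i,c)=\dist(p,c)$ gives $\lambda_p\dist(p,\pclose_i)+(1-\lambda_p)\dist(p,\pfar_i)\le2\dist(p,c)$, so the additive part is at most $2\cost_1(G_i,c)\le2\err=\tfrac{\epsilon}{3kt}\cost_1(P,c)$. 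Running the identical computation with the reverse map gives the matching bound in the other direction, proving the two-sided inequality.

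I expect the main obstacle to be the assignment transformation rather than the inequalities: one must verify that the constructed $\pi$ (resp.\ $\sigma$) really does satisfy the assignment constraint, and it is this requirement that forces the per-point split to use a single scalar uniformly over all centers. The cost estimate is then routine; the only care needed is to charge the leftover distances $\dist(p,\pclose_i)^z$ and $\dist(p,\pfar_i)^z$ to $\cost_z(G_i,c)$ — valid precisely because $G_i$ is unmarked — and to keep track of the $z$-dependent constants so that the choice of $\err$ absorbs the $2^{z+1}(3z/\epsilon)^{z-1}$ factor.
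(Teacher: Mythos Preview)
Your proposal has a genuine gap: the lemma fixes \emph{both} $\sigma$ and $\pi$ as arbitrary assignments consistent with $\Gamma$, and asks you to bound $|\cost_z^\sigma(G_i,C)-\cost_z^\pi(D_i,C)|$ for these two \emph{given} assignments. Your argument instead \emph{constructs} a specific $\pi$ from $\sigma$ (via the $\lambda_p$-split) and bounds the difference only for that constructed pair; your ``reverse map'' likewise produces a specific $\sigma'$ from a given $\pi$, which need not coincide with the original $\sigma$. Neither direction, nor their combination, delivers the required ``for all $\sigma,\pi$'' conclusion --- you have proved a ``there exists'' statement (essentially the easy half of \Cref{lemma:fair:uncol}), not \Cref{lemma:fair:color}.

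The paper's proof handles arbitrary $\sigma,\pi$ directly. Since both are consistent with $\Gamma$, for each center $u$ the total masses $\sigma(G_i,u)$ and $\pi(D_i,u)$ agree, so one can couple them by a transport plan $M:G_i\times D_i\to\mathbb{R}_+$ with the correct marginals. The cost difference at $u$ then becomes $\sum_{x,y}M(x,y)\big(\dist^z(x,u)-\dist^z(y,u)\big)$, and each summand is controlled by \Cref{lem:gen:tri}(2) with the additive term $\dist^z(x,y)$ routed through $c$ and charged to $\cost_z(G_i,c)+\cost_z(D_i,c)=2\cost_z(G_i,c)\le 2\,\err$. Your downstream estimates (bounding the residual by $\err$ and absorbing constants) are fine; what is missing is precisely this coupling between the two \emph{given} assignments rather than a map that manufactures one from the other.
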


\begin{proof}
As $\sigma$ and $\pi$ are both consistent with $\Gamma$, we know that for every $u\in C$,
$
\sigma(G_i,u)=\pi(D_i,u).
$ Thus we can find a matching $M:G_i\times D_i\to \mathbb{R}_+$, between the mass sent to $u$ by $\sigma$ and $\pi$. Consequently, $M$ satisfies that $\forall x\in G_i$, $\sum_{y\in D_i}M(x,y)=\sigma(x,u)$ and $\forall y\in D_i$, $\sum_{x\in G_i}M(x,y)=\pi(y,u)$.

So by the generalized triangle inequality \cref{lem:gen:tri} we have,
\begin{align*}
&\quad |\cost_z^{\sigma}(G_i,\{u\})-\cost_z^{\pi}(D_i,\{u\})|
\\
&=|\sum_{x\in G_i} \sigma(x,u)\dist^z(x,u)-\sum_{y\in D_i} \pi(y,u)\dist^z(y,u)|\\
&=|\sum_{x\in G_i}\sum_{y\in D_i} M(x,y)\left(\dist^z(x,u)-\dist^z(y,u)\right)|\\
&\leq \sum_{x\in G_i}\sum_{y\in D_i} M(x,y) |\dist^z(x,u)-\dist^z(y,u)|
\\
&\leq \sum_{x\in G_i}\sum_{y\in D_i} M(x,y) \left(\epsilon\cdot \dist^z(x,u)+\big(\frac{3z}{\epsilon}\big)^{z-1}\cdot\dist^z(x,y) \right)
\\
&\leq \epsilon\cdot \cost_z^{\sigma}(G_i,\{u\})+\big(\frac{6z}{\epsilon}\big)^{z-1}\cdot\sum_{x\in G_i}\sum_{y\in D_i}M(x,y)\cdot (\dist^z(x,c)+\dist^z(y,c))
\\
&= \epsilon\cdot \cost_z^{\sigma}(G_i,\{u\})+\big(\frac{6z}{\epsilon}\big)^{z-1}\cdot
\\
&\quad\left(\sum_{x\in G_i}\sigma(x,u)\dist^z(x,c)+\pi(\pclose,u)\dist^z(\pclose,c)+\pi(\pfar,u)\dist^z(\pfar,c)
\right).
\\
\end{align*}

Summing over $u\in C$, we know that,
\begin{align*}
&\quad  \left|\cost_z^\sigma(G_i, C)-\cost_z^\pi(D_i, C)\right|\\
&\leq \sum_{u\in C}|\cost_z^{\sigma}(G_i,\{u\})-\cost_z^{\pi}(D_i,\{u\})|\\
&\leq \epsilon \cdot \cost_z^{\sigma}(G_i,C)+\big(\frac{6z}{\epsilon}\big)^{z-1}\left(\cost_z(G_i,\{c\})+\cost_z(D_i,\{c\})\right)\\
&=\epsilon \cdot \cost_z^{\sigma}(G_i,C)+2\cdot \big(\frac{6z}{\epsilon}\big)^{z-1}\cdot \mathrm{err}\\
&= \epsilon \cdot \cost_z^{\sigma}(G_i,C)+2\cdot \big(\frac{6z}{\epsilon}\big)^{z-1}\cdot  \big(\frac{\epsilon}{6z}\big)^z \cdot \frac{\cost_z(P, c)}{kt}\\
&\leq\epsilon \cdot \cost_z^{\sigma}(G_i,C)+ \frac{\epsilon}{3kt}\cdot\cost_z(P, c).
\end{align*}

\end{proof}

Note that Lemma~\ref{lemma:fair:color} works for both colored and uncolored groups. However, the number of uncolored groups can be much larger than the number of colored groups. Thus we must use a refined analysis Lemma~\ref{lemma:fair:uncol} to bound the error of these uncolored groups.

\begin{lemma} \label{lemma:fair:uncol}
Let $G_i \in \mathcal{G}$ denote an unmarked and uncolored group.
Let $D_i := \{\pclose_i, \pfar_i\}_w$ be the two-point coreset constructed for $G_i$.
The following holds.
\begin{enumerate}
\item For every $\sigma : G_i \times C \to \mathbb{R}_+$ consistent with
    $\Gamma$ (see Definition~\ref{def:assign} for the relevant definitions),
    there exists $\pi : D_i \times C \to \mathbb{R}_+$ consistent with $\Gamma$,
    such that $\cost_z^\pi(D_i, C) \leq (1+\epsilon) \cdot \cost_z^\sigma(G_i, C)$.
    \label{item:fair_one}
    \item For $\pi : D_i \times C \to \mathbb{R}_+$ consistent with $\Gamma$,
    there exists $\sigma : G_i \times C \to \mathbb{R}_+$ consistent with $\Gamma$,
    such that $\cost_z^\sigma(G_i,C) \leq (1+\epsilon)\cost_z^\pi(D_i, C)$.
    \label{item:fair_two}
\end{enumerate}
In particular, let $\sigma$ and $\pi$ denote the optimal assignment consistent with $\Gamma$ from $G_i$ and $D_i$ to $C$ respectively, then $$
|\cost_z^{\sigma}(G_i,C)-\cost_z^{\pi}(D_i,C)|\leq O(\epsilon)\cdot \cost_z^{\sigma}(G_i,C).
$$
\end{lemma}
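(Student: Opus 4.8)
The plan is to exploit the defining feature of an uncolored group. Since $G_i$ is a set of consecutive unmarked rings, say with indices in an interval $[a,b]$, and by assumption none of these rings lies within index-distance $t$ of a ring that contains a center of $C$, every $u\in C$ satisfies either $\dist(u,c)\le 2^{a-t-1}$ or $\dist(u,c)>2^{b+t}$; let $\Cclose$ and $\Cfar$ be the corresponding subsets, so $C=\Cclose\sqcup\Cfar$. From $t=\lceil 2+\log\tfrac{24zk}{\epsilon}\rceil$ we get $2^{-t}\le\tfrac{\epsilon}{96zk}$, so a single use of the triangle inequality yields, for every $x\in G_i\cup\{\pclose_i,\pfar_i\}$, that $\dist^z(x,u)\in(1\pm\epsilon')\dist^z(x,c)$ for $u\in\Cclose$ and $\dist^z(x,u)\in(1\pm\epsilon')\dist^z(u,c)$ for $u\in\Cfar$, where $\epsilon'\le\tfrac{\epsilon}{48k}\le\tfrac{\epsilon}{4}$. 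I would establish these estimates first; after that everything is manipulation of convex combinations, since by construction $\dist^z(x,c)=\lambda_x\dist^z(\pclose_i,c)+(1-\lambda_x)\dist^z(\pfar_i,c)$, $w(\pclose_i)=\sum_{x\in G_i}\lambda_xw_P(x)$, and $w(\pfar_i)=\sum_{x\in G_i}(1-\lambda_x)w_P(x)$.

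For \cref{item:fair_one} a ``mirroring'' assignment works directly: given $\sigma\sim\Gamma$, set $\pi(\pclose_i,u):=\sum_{x\in G_i}\lambda_x\sigma(x,u)$ and $\pi(\pfar_i,u):=\sum_{x\in G_i}(1-\lambda_x)\sigma(x,u)$. Then $\pi(\pclose_i,C)=w(\pclose_i)$, $\pi(\pfar_i,C)=w(\pfar_i)$ and $\pi(D_i,u)=\sigma(G_i,u)$, so $\pi\sim\Gamma$; and for the cost it suffices to prove the pointwise bound $\lambda_x\dist^z(\pclose_i,u)+(1-\lambda_x)\dist^z(\pfar_i,u)\le(1+\epsilon)\dist^z(x,u)$ for each $x,u$. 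For $u\in\Cfar$ both sides are $(1\pm\epsilon')$-close to $\dist^z(u,c)$; for $u\in\Cclose$ the left side is at most $(1+\epsilon')\dist^z(x,c)\le\tfrac{1+\epsilon'}{1-\epsilon'}\dist^z(x,u)\le(1+\epsilon)\dist^z(x,u)$ by the convex-combination identity. Summing against $\sigma(x,u)$ proves $\cost_z^\pi(D_i,C)\le(1+\epsilon)\cost_z^\sigma(G_i,C)$.

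\Cref{item:fair_two} is the hard direction, and I expect it to be the main obstacle: there is no mirroring available, because a given $\pi$ may route a large amount of the (inflated) weight $w(\pclose_i)$ of the cheap point $\pclose_i$ to close centers, which $\sigma$ cannot imitate since $\pclose_i$ carries only weight $w_P(\pclose_i)$ inside $G_i$. The plan is to set $m:=\sum_{u\in\Cclose}\pi(D_i,u)$; let $\sigma$ match $\pi$'s center-marginals on $\Cfar$ in any feasible way (the $\Cfar$-part of both costs depends only on these marginals, up to a $(1\pm\epsilon')$ factor, as $G_i\cup D_i$ is $(1\pm\epsilon')$-equidistant from each $u\in\Cfar$), and on $\Cclose$ let $\sigma$ send the $m$ units of mass greedily to the points of $G_i$ that are cheapest with respect to $\dist^z(\cdot,c)$, i.e.\ of largest $\lambda_x$. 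Then $\cost_z^\sigma(G_i,\Cclose)\le(1+\epsilon')G(m)$ with $G(m):=\min\{\sum_xy_x\dist^z(x,c):0\le y_x\le w_P(x),\sum_xy_x=m\}=m\dist^z(\pclose_i,c)+\Delta\,L(m)$, where $\Delta:=\dist^z(\pfar_i,c)-\dist^z(\pclose_i,c)$ and $L(m):=\min\{\sum_xy_x(1-\lambda_x):0\le y_x\le w_P(x),\sum_xy_x=m\}$; and, writing $m_c:=\sum_{u\in\Cclose}\pi(\pclose_i,u)\le w(\pclose_i)$ and $m_f:=m-m_c\le w(\pfar_i)$, we have $\cost_z^\pi(D_i,\Cclose)\ge(1-\epsilon')(m\dist^z(\pclose_i,c)+\Delta m_f)$. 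Thus, up to the $(1\pm\epsilon')$ factors, the $\Cclose$-discrepancy is at most $\Delta(L(m)-m_f)$, and the crucial elementary estimate is $L(m)-m_f\le w_P(G_i)-m$ — in the case $m\le w(\pclose_i)$ because $L(m)\le w(\pfar_i)\le w_P(G_i)-m$, and in the case $m>w(\pclose_i)$ because $m_f\ge m-w(\pclose_i)$ and $L(m)\le w(\pfar_i)$. So the $\Cclose$-discrepancy is at most $\Delta\cdot(w_P(G_i)-m)$, i.e.\ $\Delta$ times the mass that is necessarily sent to far centers.

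It remains to turn this absolute bound into the claimed relative one. If $\Cfar=\emptyset$ then $m=w_P(G_i)$ and both $\cost_z^\sigma(G_i,C)$ and $\cost_z^\pi(D_i,C)$ equal $\dist^z(\pclose_i,c)w(\pclose_i)+\dist^z(\pfar_i,c)w(\pfar_i)=\cost_z(G_i,c)$ up to a $(1\pm\epsilon')$ factor, which settles this case. If $\Cfar\ne\emptyset$, then every $u\in\Cfar$ has $\dist^z(u,c)>2^{(b+t)z}\ge 2^{tz}\dist^z(\pfar_i,c)\ge2^{tz}\Delta$, hence $\cost_z^\pi(D_i,\Cfar)\ge(1-\epsilon')2^{tz}\Delta\,(w_P(G_i)-m)$, which exceeds the $\Cclose$-discrepancy by a factor $2^{tz}\ge2^t\ge1/\epsilon$; so that discrepancy is at most $O(\epsilon)\cost_z^\pi(D_i,\Cfar)\le O(\epsilon)\cost_z^\pi(D_i,C)$. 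Adding the $\Cclose$- and $\Cfar$-contributions (the latter within a $(1\pm\epsilon')$ factor) and rescaling $\epsilon$ — harmless, since $t$ already carries $\poly(\epsilon/k)$ slack — gives $\cost_z^\sigma(G_i,C)\le(1+\epsilon)\cost_z^\pi(D_i,C)$. The final ``in particular'' claim then follows by applying \cref{item:fair_two} to the optimal $\pi$ and \cref{item:fair_one} to the optimal $\sigma$, which sandwiches $\cost_z^\sigma(G_i,C)$ and $\cost_z^\pi(D_i,C)$ within $(1\pm\epsilon)$ of each other. The part requiring the most care is the discrepancy estimate $\Delta(L(m)-m_f)\le\Delta(w_P(G_i)-m)$ and its charging against $\cost_z^\pi(D_i,\Cfar)$.
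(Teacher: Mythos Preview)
Your proposal is correct and shares the paper's architecture: split $C=\Cclose\sqcup\Cfar$, establish the distance estimates (the paper packages these as a separate Lemma), handle item~\ref{item:fair_one} by the same mirroring assignment $\pi(\pclose_i,u)=\sum_x\lambda_x\sigma(x,u)$, and for item~\ref{item:fair_two} charge the $\Cclose$-discrepancy against $\cost_z^\pi(D_i,\Cfar)$ via the gap between $\dist^z(\pfar_i,c)$ and $\min_{u\in\Cfar}\dist^z(u,c)$.

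The one place you diverge is the mechanics of item~\ref{item:fair_two}. The paper declares $\sigma$ to be the optimum of a min-cost-flow LP and then, using only the algebraic identity $\sum_x\sigma(x,\Cclose)\dist^z(x,c)=a_1\dist^z(\pclose_i,c)+a_2\dist^z(\pfar_i,c)$ together with $a_1+a_2=\Gamma(\Cclose)$, bounds $|a_1-\pi(\pclose_i,\Cclose)|\le 2\Gamma(\Cfar)$ to get the discrepancy $\le 4\Gamma(\Cfar)\dist^z(\pfar_i,c)$. Your route is more constructive: you build $\sigma$ greedily, isolate the discrepancy as $\Delta(L(m)-m_f)$, and prove the clean combinatorial estimate $L(m)-m_f\le w_P(G_i)-m=\Gamma(\Cfar)$. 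Both arrive at the same charging step, and the bounds are equivalent up to constants. Your version is slightly more elementary (no LP needed) and makes the combinatorics explicit; the paper's identity-based bound has the incidental advantage that it holds for \emph{every} $\sigma\sim\Gamma$, not just the greedy one, though the paper never exploits this.
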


\begin{proof}
Divide $C$ into $\Cclose := \{u \in C \mid \dist(u,c) < \dist(\pclose_i, c)\}$
and $\Cfar := C \setminus \Cclose$.
Recall that $G_i$ is uncolored. We need the following inequalities to characterize the distances between $G_i$ and $\Cfar$ or $\Cclose$.

\begin{lemma}\label{lemma:uncol:distance}
The following inequalities hold.

\begin{enumerate}
    \item For every $u\in \Cclose$, $ \dist(\pclose_i, c) > \frac{9z}{\epsilon}\dist(u, c)$.
    \label{uncol:item:close}
    \item For every $u\in \Cfar$,
    $\dist(\pfar_i, c)
    <\frac{\epsilon}{24z} \cdot \dist(u,c)$.
    \label{uncol:item:far}
    \item For every $x\in G_i$ and $u\in \Cfar$, $\dist(\pfar_i,c)< \frac{\epsilon}{12z}\cdot\dist(x,u)$. \label{uncol:inq1}

    \item For every $x\in G_i$ and $u\in \Cfar$, $\dist^z(x,u)\in (1\pm \frac{\epsilon}{6})\cdot \dist^z(u,c)$. \label{uncol:inq3}
    \item For every $x\in G_i$ and $u\in \Cclose$, $\dist^z(x,u)\in (1\pm \frac{\epsilon}{6})\cdot \dist^z(x,c)$.
    \label{uncol:inq4}
\end{enumerate}
\end{lemma}

\begin{proof}
For item (\ref{uncol:item:close}) and (\ref{uncol:item:far}), assume $u$ is in $P_j$, so $P_j$ is important. As $G_i$ is not colored, by definition, we know that for every $P_{i'}\subseteq G_i$, $|i'-j|>t$. So if $u\in \Cclose$, we have
$$
\dist(\pclose_i,c)\geq 2^{t-1}\cdot \dist(\pfar_j,c)> \frac{9z}{\epsilon} \cdot \dist(u,c).
$$
On the other hand, if $u\in \Cfar$, we have
$$
\dist(\pfar_i,c)\leq 2^{-t+1}\cdot \dist(\pclose_j,c)\leq \frac{\epsilon}{24z}\cdot \dist(u,c).
$$
\\

Now we prove item (\ref{uncol:inq1}).
By item (\ref{uncol:item:far}), $\max_{x\in G_i} \dist(x,c) = \dist(\pfar_i, c)
    <\frac{\epsilon}{24z} \cdot \dist(u,c)$, by triangle inequality we know that for every $x\in G_i$, $$\dist(x,u)\geq \dist(u,c)-\dist(x,c)>\big(\frac{24z}{\epsilon}-1\big)\dist(\pfar_i,c)>\frac{12z}{\epsilon}\cdot \dist(\pfar_i,c).$$ So we know that for every $u\in \Cfar$ and $x\in G_i$, $\dist(\pfar_i,c)<\frac{\epsilon}{12z} \cdot \dist(x,u)$.
\\

Now we prove item (\ref{uncol:inq3}). By the triangle inequality and item (\ref{uncol:item:far}) we know that,
$$
|\dist(x,u)-\dist(u,c)|\leq \dist(x,c)\leq \dist(\pfar_i,c)<\frac{\epsilon}{24z}\cdot \dist(u,c).
$$ Thus $\dist(x,u)\in \big(1\pm \frac{\epsilon}{24z}\big)\cdot \dist(u,c)$. Thus $\dist^z(x,u)\in \big(1\pm \frac{\epsilon}{6}\big)\cdot \dist^z(u,c)$.
\\

Now we prove item (\ref{uncol:inq4}). By the triangle inequality and item (\ref{uncol:item:close}), we know that
$$
|\dist(x,u)-\dist(x,c)|\leq \dist(u,c)<\frac{\epsilon}{9z}\cdot \dist(\pclose_i,c)\leq \frac{\epsilon}{9z}\cdot\dist(x,c).
$$
Thus $\dist(x,u)\in \big(1\pm \frac{\epsilon}{9z})\cdot \dist(x,c)$. Thus $\dist^z(x,u)\in (1\pm \frac{\epsilon}{6})\cdot \dist^z(x,c)$.
This finishes the proof of Lemma~\ref{lemma:uncol:distance}.
\end{proof}

We return to the proof of Lemma \ref{lemma:fair:uncol}, we prove item (\ref{item:fair_one}) and item (\ref{item:fair_two}) separately.

\paragraph{Proof of item (\ref{item:fair_one}).}
Recall that for every $x \in G_i$, there exists a unique $\lambda_x$
such that $\dist^z(x, c) = \lambda_x \dist^z(\pclose_i, c) + (1 - \lambda_x) \dist^z(\pfar_i,c)$,
and we have set $w(\pclose_i) = \sum_{x\in G_i} \lambda_x$
and $w(\pfar_i)=\sum_{x\in G_i} (1-\lambda_x)$.

Consider some $\sigma : G_i \times C \to \mathbb{R}_+$ consistent with $\Gamma$.
We define $\pi : D_i \times C \to \mathbb{R}_+$,
such that for every $u \in C$, $\pi(\pclose_i, u) = \sum_{x\in G_i} \lambda_x\sigma(x,u)$
and $\pi(\pfar_i, u) = \sum_{x\in G_i} (1-\lambda_x)\sigma(x, u)$.
We note that for every $u\in C$,
$$
    \pi(D_i, u) = \sigma(G_i, u).
$$
Hence $\pi$ is consistent with $\Gamma$.
It remains to prove that for every $u\in C$,
$$
    \pi(\pclose_i, u) \dist^z(\pclose_i, u) + \pi(\pfar_i, u) \dist^z(\pfar_i, u)
    \leq (1+\epsilon) \sum_{x\in G_i} \sigma(x, u) \cdot \dist^z(x, u).
$$
\begin{itemize}
    \item If $u \in \Cfar$, by item (\ref{uncol:inq1}) of Lemma \ref{lemma:uncol:distance}, we know that for every $u\in C_{far}$ and $x\in G_i$, $\dist(\pfar_i,c)< \frac{\epsilon}{12z} \cdot \dist(x,u)$. So by the generalized triangle inequality Lemma \ref{lem:gen:tri}  we have
    \begin{align*}
        &\quad \pi(\pclose_i, u) \dist^z(\pclose_i, u) + \pi(\pfar_i, u) \dist^z(\pfar_i, u) \\
        &= \sum_{x\in G_i} \left( \lambda_x \sigma(x, u) \dist^z(\pclose_i, u)
            +\left(1-\lambda_x\right)\sigma(x, u) \dist^z(\pfar_i, u) \right)\\
        &= \sum_{x\in G_i} \sigma(x, u) \left( \lambda_x \dist^z(\pclose_i, u)
            +\left( 1-\lambda_x \right) \dist^z(\pfar_i, u) \right)\\
            &\leq
          \sum_{x\in G_i} \sigma(x, u) \bigg( \lambda_x \big((1+\frac{\epsilon}{2})\dist^z(x,u)+(6z/\epsilon)^{z-1}\dist^z(\pclose_i,x) \big)\\
 &\quad+\left( 1-\lambda_x \right)\big((1+\frac{\epsilon}{2})\dist^z(x,u)+(6z/\epsilon)^{z-1}\dist^z(\pfar_i,x) \big) \bigg)\\
            &\leq \big(1+\frac{\epsilon}{2}\big)\left(\sum_{x\in G_i} \sigma(x, u)\cdot \dist^z(x, u) \right)+\big(\frac{6z}{\epsilon}\big)^{z-1}\cdot\sum_{x\in G_i}\sigma(x,u)\cdot \big(2\dist(\pfar_i,c)\big)^z
            \\
        &\leq \big(1+\frac{\epsilon}{2}\big)\left(\sum_{x\in G_i} \sigma(x, u)\cdot \dist^z(x, u) \right)+\big(\frac{6z}{\epsilon}\big)^{z-1}\cdot\sum_{x\in G_i}\sigma(x,u)\cdot \big(\frac{\epsilon}{6z}\cdot\dist(x,u)\big)^z
            \\       &\leq\big(1+\frac{\epsilon}{2}\big) \left(\sum_{x\in G_i} \sigma(x, u)\cdot \dist^z(x, u) \right)
            + \frac{\epsilon}{2}\cdot \sum_{x\in G_i} \sigma(x, u)\cdot \dist^z(x, u)\\
        &\leq (1+\epsilon) \sum_{x\in G_i} \sigma(x, u)\cdot \dist^z(x, u).
    \end{align*}

    \item If $u\in \Cclose$, we first observe that by construction,
    \begin{align*}
        &\quad \pi(\pclose_i, u) \dist^z(\pclose_i, c) + \pi(\pfar_i, u) \dist^z(\pfar_i, c)\\
        &=\sum_{x\in G_i} \left(\lambda_x\sigma(x,u)\dist^z(\pclose,c)+(1-\lambda_x)\sigma(x,u)\dist^z(\pfar,c)\right)
        \\
        &=\sum_{x\in G_i} \sigma(x, u) \dist^z(x, c).\\
        \end{align*}
   It remains to show that replacing $c$ with $u$ produces affordable error. By item (\ref{uncol:item:close}) of Lemma \ref{lemma:uncol:distance}, we know that $\min_{x\in G_i} \dist(x,c) = \dist(\pclose_i, c) > \frac{9z}{\epsilon}\dist(u, c)$. So we know that for every $x\in G_i$, $\dist(u,c)<\frac{\epsilon}{9z}\cdot \dist(x,u)$.

   By generalized triangle inequality Lemma \ref{lem:gen:tri} we have,
   \begin{align*}
       &\quad \pi(\pclose_i, u) \dist^z(\pclose_i, u) + \pi(\pfar_i, u) \dist^z(\pfar_i, u) \\
       &\leq \big(1+\frac{\epsilon}{3}\big)\big(\pi(\pclose_i, u) \dist^z(\pclose_i, c)
            + \pi(\pfar_i, u) \dist^z(\pfar_i, c)\big) +(\frac{9z}{\epsilon})^{z-1}\cdot \pi(D_i,u) \cdot \dist^z(u, c)\\
       &= \big(1+\frac{\epsilon}{3}\big)\sum_{x\in G_i} \sigma(x,u) \dist^z(x, c) + (\frac{9z}{\epsilon})^{z-1}\cdot \sigma(G_i, u) \cdot \dist^z(u, c)\\
       &\leq \big(1+\frac{2\epsilon}{3}\big)\sum_{x\in G_i} \sigma(x, u) \dist^z(x, u) + 3\cdot (\frac{9z}{\epsilon})^{z-1}\cdot\sigma(G_i, u)\cdot \dist^z(u, c)\\
       &\leq \big(1+\frac{2\epsilon}{3}\big)\sum_{x\in G_i} \sigma(x, u) \dist^z(x, u) + 3\cdot (\frac{9z}{\epsilon})^{z-1}\cdot \sum_{x\in G_i} \sigma(x, u)\cdot \big(\frac{\epsilon}{9z}\cdot\dist(x, u)\big)^z\\
       &\leq \left(1+\epsilon\right) \sum_{x\in G_i} \sigma(x, u) \dist^z(x,u).
   \end{align*}
\end{itemize}

\paragraph{Proof of item (\ref{item:fair_two}).}
Consider $\pi : D_i \times C \to \mathbb{R}_+$ consistent with $\Gamma$.
We need to construct $\sigma : G_i \times C \to \mathbb{R}_+$
so that $\cost_z^\sigma(G_i, C)\leq (1+\epsilon) \cost_z^\pi(D_i, C)$.
We find such $\sigma$ by considering the following linear program,
\begin{equation*}
\begin{array}{ll@{}ll}
\text{minimize}  & \displaystyle \sum_{x\in G_i} \sum_{u\in C} \sigma(x, u) \cdot \dist^z(x, u)&\\
\text{subject to}&\displaystyle  \sigma(x, u)\geq 0 &\;& \forall x\in G_i, u \in C,\\
                & \displaystyle \sum_{u \in C} \sigma(x, u)=1 &\;&\forall x\in G_i,\\
                 & \displaystyle \sum_{x\in G_i} \sigma(x, u) = \Gamma(u) &\;&\forall u\in C
\end{array}
\end{equation*}

The above linear programming is clearly a feasible min-cost flow problem as
there must exist $\sigma : G_i \times C \to \mathbb{R}_+$ consistent with $\Gamma$.
Let $\sigma$ denote the optimal solution of the LP.
It suffices to show $\cost_z^\sigma(G_i, C) \leq \left(1+\epsilon\right)
\cost_z^\pi(D_i, C)$.
We need the following Lemma~\ref{lemma:fair:XitoPi}.

\begin{lemma} \label{lemma:fair:XitoPi}
The following inequalities for $\sigma$ hold.
\begin{enumerate}
    \item For every $u\in \Cfar$,
    \begin{align*}
        \sum_{x\in G_i} \sigma(x, u)\dist^z(x, u)
        \leq \left(1+\frac{\epsilon}{2}\right)
            \left(\pi(\pclose_i, u) \dist^z(\pclose_i, u)
            +\pi(\pfar_i, u) \dist^z(\pfar_i, u) \right).
    \end{align*} \label{XitoPi:item1}
    \item We have the following for $\Cclose$,
    \begin{align*}
        &\quad
        \sum_{x\in G_i}\sigma(x,\Cclose) \dist^z(x,c) \\
        &\leq \pi(\pclose_i, \Cclose) \dist^z(\pclose_i, c) + \pi(\pfar_i, \Cclose) \dist^z(\pfar_i, c)
        + \frac{\epsilon}{3}\cdot \cost_z^\pi(D_i, C).
    \end{align*} \label{XitoPi:item2}
    \item $\cost_z^\sigma(G_i,\Cclose)\leq \big(1+\frac{\epsilon}{2}\big)\cost_z^\pi(D_i,\Cclose)
    + \frac{\epsilon}{2}\cdot \cost_z^\pi(D_i,C)$.
    \label{XitoPi:item3}
\end{enumerate}
\end{lemma}

\begin{proof}
For item (\ref{XitoPi:item1}), we note that $\sigma(G_i, u)= \pi(D_i, u)$
for every $u \in C$. By item (\ref{uncol:inq3}) of Lemma \ref{lemma:uncol:distance}, we know that for every $x\in G_i$ and $u\in C_{far}$, $\dist^z(x,u)\in (1\pm\frac{\epsilon}{6}) \dist^z(u,c)$.

So we have
\begin{eqnarray*}
    \sum_{x\in G_i} \sigma(x, u)\cdot \dist^z(x, u)
    &\leq & \left(1+\frac{\epsilon}{6}\right) \sigma(G_i, u) \dist^z(u, c)\\
    &=&\left(1+\frac{\epsilon}{6}\right) \pi(D_i, u) \dist^z(u, c)\\
    &\leq & \frac{1+\frac{\epsilon}{6}}{1-\frac{\epsilon}{6}}\cdot \left(\pi(\pclose_i, u)d(\pclose_i, u)
        +\pi(\pfar_i, u) \dist(\pfar_i, u)\right)\\
         &\leq & \left(1+\frac{\epsilon}{2}\right)\cdot \left(\pi(\pclose_i, u)d(\pclose_i, u)
        +\pi(\pfar_i, u) \dist(\pfar_i, u)\right).
\end{eqnarray*}
For item (\ref{XitoPi:item2}), we first note that
\begin{eqnarray*}
    \sum_{x\in G_i}\sigma(x,\Cclose) \dist^z(x,c)
    &=& \sum_{x\in G_i}\sigma(x,\Cclose) \left(  \lambda_x \dist^z(\pclose_i, c)
        + \left(1-\lambda_x\right) \dist^z(\pfar_i,c)  \right)\\
    &=& a_1\cdot \dist^z(\pclose_i,c) + a_2\cdot \dist^z(\pfar_i,c)
\end{eqnarray*}
where $a_1 := \sum_{x\in G_i}\lambda_x \sigma(x,\Cclose)$
and $a_2 := \sum_{x\in G_i}\left(1-\lambda_x\right)\sigma(x,\Cclose)$.
We observe that $$a_1 + a_2 =\sigma(G_i,\Cclose)= \Gamma(\Cclose)=\pi(D_i,\Cclose)
=\pi(\pclose_i, \Cclose) + \pi(\pfar_i, \Cclose)$$ which implies
$$
|a_1-\pi(\pclose_i,\Cclose)|=|a_2-\pi(\pfar_i,\Cclose)|.
$$
So we have,
\begin{align*}
    &\quad \sum_{x\in G_i}\sigma(x, \Cclose) \dist^z(x,c)
        - \left(\pi(\pclose_i, \Cclose) \dist^z(\pclose_i, c)
        + \pi(\pfar_i, \Cclose) \dist^z(\pfar_i, c)\right)\\
    &= a_1\cdot \dist^z(\pclose_i,c) + a_2\cdot \dist^z(\pfar_i,c)-\left(\pi(\pclose_i, \Cclose) \dist^z(\pclose_i, c)
        + \pi(\pfar_i, \Cclose) \dist^z(\pfar_i, c)\right)\\
    &\leq |a_1 - \pi(\pclose_i, \Cclose)| \dist^z(\pclose_i, c)
        + |a_2 - \pi(\pfar_i, \Cclose)| \dist^z(\pfar_i,c)\\
    &= |a_1 - \pi(\pclose_i, \Cclose)|\cdot \left(\dist^z(\pclose_i,c)+\dist^z(\pfar_i,c)\right)\\
    &\leq 2|a_1 - \pi(\pclose_i, \Cclose)|\cdot \dist^z(\pfar_i, c)\\
    &\leq  2\left(|a_1-w(\pclose)|+|w(\pclose)-\pi(\pclose_i,\Cclose)|\right)\cdot\dist^z(\pfar_i,c)\\
     &=  2\left(\bigg|\sum_{x\in G_i}\lambda_x\sigma(x,\Cclose)-\sum_{x\in G_i}\lambda_x\bigg|+\pi(\pclose_i,\Cfar)\right)\cdot\dist^z(\pfar_i,c)\\
          &=  2\big(\sum_{x\in G_i}\lambda_x\sigma(x,\Cfar)+\pi(\pclose_i,\Cfar)\big)\cdot\dist^z(\pfar_i,c)\\
    &\leq  4 \Gamma (\Cfar) \cdot \dist^z(\pfar_i, c)\\
    &\leq  4 \Gamma (\Cfar)\cdot \left(\frac{\epsilon}{12z}\right)^z\cdot \min \left( \dist^z(\pclose_i, \Cfar), \dist^z(\pfar_i, \Cfar)\right)
     \\
    &\leq \frac{\epsilon}{3}\cdot \cost_z^\pi(D_i, C).
\end{align*}
Now we are ready to prove item (\ref{XitoPi:item3}).
This is a simple corollary of the second item.
Actually, by item (\ref{uncol:inq4}) of Lemma \ref{lemma:uncol:distance}, we know that for every $x\in G_i$ and $u\in \Cclose$,
$$
    \dist^z(x, u)\in \left(1\pm \frac{\epsilon}{6}\right)\cdot  \dist^z(x,c).
$$
Thus we have,
$$
 \cost_z^\sigma(G_i,\Cclose)\leq \big(1+\frac{\epsilon}{6}\big) \cdot \sum_{x\in G_i}\sigma(x,\Cclose) \dist^z(x,c)
$$
and
$$
    \pi(\pclose_i, \Cclose) \dist(\pclose_i,c) + \pi(\pfar_i, \Cclose) \dist(\pfar_i, c)\leq \frac{1}{1-\epsilon/6}
        \cdot\cost_z^\pi(D_i,\Cclose).
$$

Thus we have
\begin{align*}
    &\quad\cost_z^{\sigma}(G_i,\Cclose)
    \\
    &\leq \big(1+\frac{\epsilon}{6}\big) \sum_{x\in G_i}\sigma(x,\Cclose) \dist^z(x,c)
    \\
    &\leq \big(1+\frac{\epsilon}{6}\big)\left(\pi(\pclose_i, \Cclose) \dist^z(\pclose_i, c) + \pi(\pfar_i, \Cclose) \dist^z(\pfar_i, c)
        + \frac{\epsilon}{3}\cdot \cost_z^\pi(D_i, C)\right)
    \\
    &\leq
    \frac{1+\epsilon/6}{1-\epsilon/6}\cdot \cost_z^{\pi}(D_i,\Cclose)+\frac{\epsilon}{3}\cdot \big(1+\frac{\epsilon}{6}\big)\cdot\cost_z^{\pi}(D_i,C)
    \\
    &\leq \big(1+\frac{\epsilon}{2})\cdot\cost_z^{\pi}(D_i,\Cclose)+\frac{\epsilon}{2}\cdot \cost_z^{\pi}(D_i,C).
\end{align*}

Thus we have proved Lemma \ref{lemma:fair:XitoPi}
\end{proof}

Now we are ready to prove Lemma \ref{lemma:fair:uncol}. Item (\ref{XitoPi:item1}) and Item (\ref{XitoPi:item3}) of Lemma \ref{lemma:fair:XitoPi} imply
$\cost_z^\sigma(G_i,\Cfar) \leq \left(1+\frac{\epsilon}{2}\right) \cost_z^\pi(D_i,\Cfar)$ and $\cost_z^\sigma(G_i,\Cclose)\leq \big(1+\frac{\epsilon}{2}\big)\cdot \cost_z^\pi(D_i,\Cclose)
    + \frac{\epsilon}{2}\cdot \cost_z^\pi(D_i,C)$.
Combining with them, we have
\begin{eqnarray*}
    \cost_z^\sigma(G_i, C)
    &=&\cost_z^\sigma(G_i,\Cclose) + \cost_z^\sigma(G_i,\Cfar)\\
    &\leq& \big(1+\frac{\epsilon}{2}\big)\cdot \cost_z^\pi(D_i, \Cclose) + \big(1+\frac{\epsilon}{2}\big)\cdot\cost_z^\pi(D_i, \Cfar)
        + \frac{\epsilon}{2}\cdot \cost_z^\pi(D_i,C)\\
    &= & \left(1+\epsilon\right)\cdot \cost_z^\pi(D_i,C).
\end{eqnarray*}

Thus we have proved Lemma \ref{lemma:fair:uncol}.
\end{proof}

\paragraph{Concluding the error analysis.}
Now we are ready to finish the error analysis for the coreset $S$ on the unmarked groups $Z$. It can be simply done by combing Lemma~\ref{lemma:fair:color} and Lemma~\ref{lemma:fair:uncol}. Recall that by construction there are at most $k(2t+1)$ colored rings and every colored group contains at least one colored rings. Since groups contain disjoint rings, we know that there are at most $k(2t+1)$ many colored groups. Let $\sigma$ denote the optimal assignment consistent with $\Gamma$ from $Z$ to $C$. Recall that $Z=\bigcup_{i:G_i\in\mathcal{G}} G_i$. Let $\Gamma_i$ denote the assignment constraints such that $\forall u\in C,\Gamma_i(u)=\sum_{x\in G_i} \sigma(x,u)$. Let $\pi_i$ denote the optimal assignment from $D_i$ to $C$ consistent with $\Gamma_i$. By Lemma~\ref{lemma:fair:color} and Lemma~\ref{lemma:fair:uncol} we have,

\begin{align*}
    &\quad \cost_z(S,C,\Gamma)
    \\
    &\leq \sum_{i:G_i\in \mathcal{G}} \cost_z(D_i,C,\Gamma_i)\\
    &= \sum_{i:G_i\in \mathcal{G}} \cost_z^{\pi_i}(D_i,C)\\
    &\leq (1+\epsilon)\cdot  \sum_{i:G_i\in \mathcal{G}} \cost_z^{\sigma}(G_i,C)+k(2t+1)\cdot \frac{\epsilon}{3kt}\cdot \cost_z(P,c)
    \\
    &\leq (1+\epsilon)\cdot \cost_z(Z,C,\Gamma)+\epsilon\cdot\cost_z(P,c).
\end{align*}

Similarly, we have that
$$
\cost_z(Z,C,\Gamma)\leq (1+\epsilon)\cdot \cost_z(S,C,\Gamma)+\epsilon\cdot\cost_z(P,c)
$$
and conclude that
$$
|\cost_z(Z,C,\Gamma)-\cost_z(S,C,\Gamma)|\leq O(\epsilon)\cdot \cost_z(Z,C,\Gamma).
$$
It remains to scale $\epsilon$.
\qed

\subsection{Proof of Theorem~\ref{thm:onez}}

\begin{theorem}[Restatement of Theorem~\ref{thm:onez}]
    There is an algorithm that given dataset $P \subseteq X$,
    center $c \in X$, $0 < \epsilon < 1$,
    computes a $2$-partition $\{ W, Z \}$ of $P$ and a weighted point set $S \subseteq P$ of size $3$,
    such that
    \begin{enumerate}
        \item $W$ consists of $O(\log\frac{z}{\epsilon})$ rings $\{R_i\}_i$
        where each $R_i \subseteq \ring(c, r_i, 2r_i)$ for some $r_i > 0$, and
        \item $S$ is an $(\epsilon, \cost_z(P, c))$-coreset for \onezC on $Z$,
    \end{enumerate}
    running in time $\tilde{O}(|P|k)$.
\end{theorem}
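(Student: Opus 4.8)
The plan is to reuse the skeleton of the proof of \Cref{thm:reduct_ring}: decompose $P$ into the geometric rings $P_i := P \cap \ring(c, 2^{i-1}, 2^i)$, fix a scale parameter $t = \Theta(\log\frac{z}{\epsilon})$, set aside a small collection $W$ of ``important'' rings, and cover the remainder $Z := P \setminus W$ by a constant number of explicitly placed points. The $k=1$ case enjoys two simplifications over the general construction. First, a query is a single center $u$, so at most one ring is ``important'' with respect to a query; the $\Theta(k)$ factors that inflate $t$ and $\err$ in \Cref{alg:main_full} disappear. Second, there are no assignment constraints, so we only need an ordinary additive-error coreset (\Cref{def:assign:add} with $k=1$), and the heavy two-point machinery of \Cref{thm:reduct_ring} is used only in a degenerate form.

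Concretely, if $P$ occupies at most $O(t)$ non-empty rings, take $W$ to be all of them, $Z = \emptyset$ (adding $c$ to both $Z$ and $S$ when $c \in P$), and we are done. Otherwise, cut the ring sequence at two indices $a < b$: let the \emph{inner block} $Z_{\mathrm{in}} := \bigcup_{i \le a} P_i$ be the longest prefix with $|Z_{\mathrm{in}}| \cdot 2^{az} \le (\tfrac{\epsilon}{z})^{O(z)} \cost_z(P, c)$ --- this automatically also makes $\cost_z(Z_{\mathrm{in}}, c)$ small --- and let the \emph{outer block} $Z_{\mathrm{out}} := \bigcup_{i > b} P_i$ be the longest suffix with $\cost_z(Z_{\mathrm{out}}, c) \le (\tfrac{\epsilon}{z})^{O(z)} \cost_z(P, c)$. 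Put into $W$ the non-empty rings whose index lies strictly between $a$ and $b$; set $Z := Z_{\mathrm{in}} \cup Z_{\mathrm{out}}$; and let $S$ consist of (i) a single surrogate for $Z_{\mathrm{in}}$, namely $c$ itself --- or, when $c \notin P$, the point of $Z_{\mathrm{in}}$ closest to $c$ --- carrying weight $w_P(Z_{\mathrm{in}})$, together with (ii) the two-point coreset $\{\pclose, \pfar\}_w$ of \Cref{thm:reduct_ring} applied to $Z_{\mathrm{out}}$ alone, so that it preserves both $w_P(Z_{\mathrm{out}})$ and $\cost_z(Z_{\mathrm{out}}, c)$ exactly. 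Thus $|S| \le 3$ and $W$ consists of non-empty rings; the content is in bounding their number.

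For the error, fix a query $u$ and let $2^{j-1} < \dist(u, c) \le 2^j$; the estimates come from the generalized triangle inequalities of \Cref{lem:gen:tri}. Every $x \in Z_{\mathrm{in}}$ has $\dist(x, c) \le 2^a$, so if $j \ge a + t$ then $\dist(x, u)^z \in (1 \pm \epsilon) \dist(u, c)^z$ for all of them, hence $\cost_z(Z_{\mathrm{in}}, u) = (1 \pm \epsilon)\, w_P(Z_{\mathrm{in}}) \dist(u, c)^z$, which is exactly what the surrogate contributes up to a $(1 \pm \epsilon)$ factor; and if $j < a + t$ then $u$ sits near $Z_{\mathrm{in}}$ but every distance involved is at most $2^{a+t+1}$, so the whole discrepancy is at most $(\tfrac{z}{\epsilon})^{O(z)} |Z_{\mathrm{in}}| 2^{az} \le \epsilon \cost_z(P, c)$. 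Symmetrically, every $x \in Z_{\mathrm{out}}$ has $\dist(x, c) > 2^b$, so for $j \le b - t$ we get $\dist(x, u)^z \in (1 \pm \epsilon) \dist(x, c)^z$, hence $\cost_z(Z_{\mathrm{out}}, u) = (1 \pm \epsilon) \cost_z(Z_{\mathrm{out}}, c)$, which the two-point coreset reproduces up to $(1 \pm \epsilon)$; and for $j > b - t$ the smallness of $\cost_z(Z_{\mathrm{out}}, c)$ absorbs the error into the additive term. Putting the two blocks together and rescaling $\epsilon$ by a constant gives the claimed $(\epsilon, \cost_z(P, c))$-guarantee on $Z$.

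The step I expect to be the main obstacle is exactly the size bound $|W| = O(\log\frac{z}{\epsilon})$. Cheapness of $Z_{\mathrm{in}}$ gives a lower bound on $2^a$ and cheapness of $Z_{\mathrm{out}}$ gives a lower bound on the cost sitting at radius $\ge 2^{b-1}$, but with the naive cut points above these only yield $|W| = O(\log|P|)$, because arbitrarily many moderate-cost rings can accumulate between $a$ and $b$ (e.g.\ $\Theta(L)$ rings of equal cost and geometrically decreasing cardinality). To get a genuinely $|P|$-independent bound one must keep rings by a ``doubling''-style rule rather than by a flat threshold --- keeping a ring only when it is, say, an $\epsilon$-fraction as large, both in cost and in ``cardinality times radius$^z$'', as everything accumulated so far from the inside --- so that only $O(\log\frac{z}{\epsilon})$ rings are ever kept, while every skipped ring is provably dominated, in every query evaluation, by $Z_{\mathrm{in}}$ together with the kept rings on either side of it. Getting this selection rule and the accompanying domination inequality precise, so that it simultaneously yields $|W| = O(\log\frac{z}{\epsilon})$, leaves $Z$ a disjoint union of one cheap inner block and one cheap outer block, and still lets the three explicit points of $S$ soak up all the error, is where the real work lies.
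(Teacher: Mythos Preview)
You have correctly pinpointed the obstruction: with your cut points, nothing controls $b-a$, and your own bad example gives $|W|=\Theta(\log|P|)$. The fix, however, is much simpler than the doubling-style selection rule you sketch, and it does not come from being cleverer about which rings to keep. The missing idea is to anchor both cutoffs at fixed multiples of the single reference scale $r:=(\cost_z(P,c)/|P|)^{1/z}$: set $\Pclose:=\{p:\dist(p,c)<\tfrac{\epsilon}{6z}r\}$, $\Pfar:=\{p:\dist(p,c)>\tfrac{120z}{\epsilon^2}r\}$, and $W:=P\setminus(\Pclose\cup\Pfar)$. Now $W$ sits in an annulus whose outer-to-inner radius ratio is $\poly(z/\epsilon)$, so it occupies $O(\log\tfrac{z}{\epsilon})$ geometric rings regardless of $|P|$, with no selection needed. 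Your inner-block analysis carries over verbatim, since $|\Pclose|\cdot(\tfrac{\epsilon}{6z}r)^z\le(\tfrac{\epsilon}{6z})^z\cost_z(P,c)$.

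There is one catch you would then have to deal with: under this cutoff $\cost_z(\Pfar,c)$ is \emph{not} small (in your own equal-cost example it is essentially all of $\cost_z(P,c)$), so your ``smallness absorbs the error'' step for the outer block no longer applies. The replacement argument splits on the query location. If $\dist(s,c)\le 5r/\epsilon$, then every $x\in\Pfar$ has $\dist(x,c)>\tfrac{24z}{\epsilon}\dist(s,c)$, hence $\dist^z(x,s)\in(1\pm O(\epsilon))\dist^z(x,c)$ by \Cref{lem:gen:tri}, and the two-point coreset (which preserves $\cost_z(\cdot,c)$ exactly) does the job. If $\dist(s,c)>5r/\epsilon$, then by Markov at least half of $P$ lies within $O(r)$ of $c$ and hence at distance $\Omega(r/\epsilon)$ from $s$, forcing $\cost_z(P,s)\ge\Omega(\epsilon^{-1})\cdot\cost_z(P,c)$; now even the crude bound $|\cost_z(\Pfar,s)-\cost_z(\{\pclose,\pfar\}_w,s)|\le 2^z\cost_z(P,c)$ is at most $\epsilon\cdot\cost_z(P,s)$ and gets charged to the multiplicative term. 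No ring-selection or domination inequality is required.
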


\paragraph{Ring decomposition.}
Let $r := \left(\frac{\cost_z(P, c)}{|P|}\right)^{1/z}$ denote the average cost of $P$.
We decompose $P$ into $3$ groups.
\begin{itemize}
    \item $\Pclose = \{p\in P \mid d(p,c)<\frac{\epsilon }{6z}\cdot r\}$.
    \item $\Pfar = \{p\in P \mid d(p,c)>\frac{120z}{\epsilon^2} r\}$.
    \item $\Pmain = P\setminus(\Pclose \bigcup \Pfar)$.
\end{itemize}
Define $W := \Pmain$, and it is clear that $W$ can be covered
by a union of $O(\log \frac{z}{\epsilon})$ rings of the form
$\ring(c, a, 2a)$ for some $a \geq 0$.
Define $Z := P \setminus W$, then $Z = \Pclose \cup \Pfar$.
It remains to define an $(\epsilon, \cost_z(P, c))$-coreset $S$ for $Z$.

\paragraph{Constructing coreset $S$.}
Recall that $Z = \Pclose \cup \Pfar$, so we construct coresets for $\Pclose$
and $\Pfar$ separately, and take the union of them.

\begin{itemize}
    \item For $\Pclose$, we add to $S$ a single coreset point $c$
    with weight $w(c) := |\Pclose|$. Note that if one insist looking for a subset of $P$ as coreset, one can replace $c$ with the closet point  $c_{\mathrm{min}}\in \Pclose$ to $c$. It only remains to scale $\epsilon$.
    \item For $\Pfar$, let $\pfar,\pclose\in \Pfar$ denote the further and closest point to $c$. For every $x\in \Pfar$ there is a unique $\lambda_x$ such that $\dist^z(x,c)=\lambda_x\cdot \dist^z(\pclose,c)+(1-\lambda_x)\cdot\dist^z(\pfar,c)$. We add $\pfar$ and $\pclose$ to $S$ and set the weight as $w(\pclose)=\sum_{x\in \Pfar} \lambda_x$ and $w(\pfar)=\sum_{x\in \Pfar} (1-\lambda_x)$. Note that $w(\pclose)+w(\pfar)=|\Pfar|$ and $\cost_z(\{\pclose,\pfar\}_w,c)=\cost_z(\Pfar,c)$.
\end{itemize}
Clearly $|S| = 3$ and we argue the error bound in the following.

\paragraph{Error analysis.}

Fix an arbitrary center $s\in X$.
For $\Pclose$, since we move all points in $\Pclose$ to $c$, by generalized triangle inequality Lemma~\ref{lem:gen:tri},
it incurs at most $$\frac{\epsilon}{2} \cdot \cost_z(\Pclose,s)+\big(\frac{6z}{\epsilon}\big)^{z-1}\cdot |\Pclose|\cdot\big (\frac{\epsilon r}{6z}\big)^z\leq \frac{\epsilon}{2}\cdot \left(\cost_z(P,s)+\cost_z(P,c)\right)$$ error.

So it suffices to prove that $$|\cost_z(\Pfar,s)-\cost_z(\{\pclose,\pfar\}_w,c)|\leq \frac{\epsilon}{2}\cdot \left(\cost_z(P,s)+\cost_z(P,c)\right).$$

In the following, we do a case-analysis with respect depending on whether
$ \dist(s, c)>\frac{5r}{\epsilon}$.

If $\dist(s, c) > \frac{5r}{\epsilon}$, we first observe that
$\cost_z(P, s) \geq \frac{2^{z+1}}{ \epsilon}\cdot \cost_z(P, c)$.
To see this, we note that in $P$, at least $|P|/2$ points are $r/2$ close to $c$ and by triangle inequality,
they are at least $(\frac{5}{\epsilon}-\frac{1}{2})\cdot r\geq \frac{4}{\epsilon}\cdot r$ far from $s$.
So $\cost_z(P, s) > \left(\frac{4r}{\epsilon}\right)^z\cdot \frac{|P|}{2}
\geq \frac{2^{z+1}}{\epsilon}\cdot \cost_z(P,c)$.
By generalized triangle inequality Lemma~\ref{lem:gen:tri} and the fact that $|\Pfar|=w(\pclose)+w(\pfar)$ we have,
\begin{align*}
|\cost_z(\Pfar, s) - \cost_z(\{\pclose,\pfar\}_w, s)|
&\leq 2^{z-1}\left(\cost_z(\Pfar, c)
+\cost_z(\{\pclose,\pfar\}_w, c) \right) \\
&= 2^z\cost_z(P,c)
\\
&\leq \frac{\epsilon}{2}\cdot \cost_z(P,s).
\end{align*}

Now consider the other case, which is $\dist(s, c)\leq \frac{5r}{\epsilon}$.
By construction we know that $$\min_{x\in \Pfar}\dist(x, c) >\frac{120z}{\epsilon^2}\cdot r\geq \frac{24z}{\epsilon}\cdot \dist(s,c).$$ By generalized triangle inequality Lemma~\ref{lem:gen:tri} we know that
\begin{align*}
&\quad |\cost_z(\Pfar, s) - \cost_z(\Pfar, c)| \\
&\leq  \frac{\epsilon}{8}\cdot \cost_z(\Pfar,s)+\big(\frac{24z}{\epsilon}\big)^{z-1}\cdot |\Pfar|\cdot \dist^z(s,c)
\\
&\leq \frac{\epsilon}{8}\cdot \cost_z(\Pfar,s) + \big(\frac{24z}{\epsilon}\big)^{z-1}\cdot |\Pfar|\cdot  \left(\frac{\epsilon\cdot \min_{x\in \Pfar} \dist(x,c)}{24z}\right)^z\\
&= \frac{\epsilon}{8}\cdot \cost_z(\Pfar,s)+\frac{\epsilon}{8} \cdot \cost_z(P,c).
\end{align*}
Similarly,
\begin{align*}
&\quad |\cost_z(\{\pclose,\pfar\}_w, s) - \cost_z(\{\pclose,\pfar\}_w, c)| \\
&\leq  \frac{\epsilon}{8}\cdot \cost_z(\{\pclose,\pfar\}_w,s)+\big(\frac{24z}{\epsilon}\big)^{z-1}\cdot \big(w(\pfar)+w(\pclose)\big)\cdot \dist^z(s,c)
\\
&\leq \frac{\epsilon}{8}\cdot \cost_z(\{\pclose,\pfar\}_w,s) + \big(\frac{24z}{\epsilon}\big)^{z-1}\cdot\big(w(\pfar)+w(\pclose)\big)\cdot  \left(\frac{\epsilon\cdot  \dist(\pclose,c)}{24z}\right)^z\\
&\leq \frac{\epsilon}{8}\cdot \cost_z(\{\pclose,\pfar\}_w,s) + \frac{\epsilon}{8}\cdot\cost_z(\{\pclose,\pfar\}_w,c)\\
&= \frac{\epsilon}{8}\cdot \cost_z(\{\pclose,\pfar\}_w,s) + \frac{\epsilon}{8}\cdot\cost_z(\Pfar,c)\\
&\leq \frac{\epsilon}{8}\cdot \cost_z(\{\pclose,\pfar\}_w,s)+\frac{\epsilon}{8}\cdot \cost_z(P,c).
\end{align*}

Recall that $\cost_z(\Pfar, c) = \cost_z(\{\pclose,\pfar\}_w, c)$ by construction.
Combining the above two inequalities we can show that, $$
|\cost_z(\{\pfar,\pclose\}_w, s) - \cost_z(\Pfar, s)| \leq  \frac{\epsilon}{2} \cdot\left(\cost_z(P,s)+ \cost_z(P, c)\right).
$$

\qed

     \newcommand{\Tclose}{\ensuremath{T_{\mathrm{close}}\xspace}}
\newcommand{\Tfar}{\ensuremath{T_{\mathrm{far}}\xspace}}

\section{Assignment-preserving Coresets for Rings in $\mathbb{R}^d$}
\label{sec:fair}
In this section, we show how to construct assignment-preserving coresets for \kMedian. For simplicity, throughout this section, we use $\cost(\cdot)$ to represent $\cost_1(\cdot)$.

\begin{theorem}\label{thm:ring_fair}
Let $c\in \mathbb{R}^d$, $r>0$, and $P\subseteq \ring(c,r,2r)$ be a dataset with $|P|=n$.
Let $D \subseteq P$ be a uniform sample of size
$m=\tilde{O}(\frac{k}{\epsilon^5}\log \delta^{-1})$ and re-weight $D$ such that $\forall x\in D, w_D(x) := \frac{n}{m}$.
Then with probability at least $1-\delta $, $D$ is an assignment-preserving $(\epsilon,nr)$-coreset for \kMedian.
\end{theorem}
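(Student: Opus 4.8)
The plan is to verify the guarantee of \Cref{def:assign:add} in two stages: first a deterministic reduction showing that only centers within distance $O(r/\epsilon)$ of the ring center $c$ matter (so that all relevant distances are bounded by $O(r/\epsilon)$), and then a uniform-convergence argument, driven by LP (Kantorovich) duality for the transportation problem that defines $\cost(\cdot,C,\Gamma)$, which collapses the quantification over all center sets $C$, all assignment constraints $\Gamma$, and all consistent assignments into a single additive-approximation statement about the \emph{uniform}, data-independent range space of unions of at most $k$ balls.

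\paragraph{Step 1: reduce to nearby centers.} Fix $C$ with $|C|\le k$ and a constraint $\Gamma$ (see \Cref{def:assign}). Split $C=\Cclose\sqcup\Cfar$ with $\Cfar:=\{u\in C:\dist(u,c)>\tfrac{4r}{\epsilon}\}$. For $u\in\Cfar$ and any $x$ lying in $\ring(c,r,2r)$ — in particular any $x\in P$ or $x\in D$ — the triangle inequality gives $\dist(x,u)\in(1\pm\epsilon)\dist(c,u)$. Hence for \emph{every} assignment $\sigma\sim\Gamma$, on $P$ or on $D$, one has $\cost^\sigma(\cdot,\Cfar)\in(1\pm\epsilon)\cdot F$ where $F:=\sum_{u\in\Cfar}\Gamma(u)\dist(c,u)$ depends only on $C$ and $\Gamma$, because $\sigma\sim\Gamma$ forces exactly $\Gamma(u)$ units onto each far center. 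Consequently $\cost(P,C,\Gamma)$ equals $(1\pm\epsilon)F$ plus the optimum of the partial-transportation problem routing the remaining mass into $\Cclose$ (column sums $\Gamma|_{\Cclose}$, row sums at most $w(x)$), and the same decomposition holds for $D$ with the \emph{identical} $F$. Since $\cost(P,C,\Gamma)\ge(1-\epsilon)F$, the $\epsilon F$ slack is charged to the allowed relative error. It therefore suffices to compare the partial-transportation costs of $P$ and $D$ into $\Cclose$, a configuration in which every point of $P\cup D$ and every center of $\Cclose$ lies in a ball of radius $O(r/\epsilon)$ around $c$.

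\paragraph{Step 2: duality and uniform convergence.} Encode the partial transportation into $\Cclose$ as a genuine transportation problem by adjoining a dummy center $u_\emptyset$ at distance $0$ from everything with demand $w(P)-\sum_{u\in\Cclose}\Gamma(u)$. Writing $\Gamma':=\Gamma|_{\Cclose}$ and $h_{C,\beta}(x):=\min\bigl\{0,\ \min_{u\in\Cclose}(\dist(x,u)-\beta(u))\bigr\}$, LP duality gives, for $Q\in\{P,D\}$ and after the harmless normalization $\beta(u_\emptyset)=0$,
\[
\cost(Q,\Cclose,\Gamma')=\max_{\beta}\Bigl({\textstyle\sum_{x\in Q}}w_Q(x)\,h_{C,\beta}(x)+{\textstyle\sum_{u}}\Gamma(u)\beta(u)\Bigr).
\]
A complementary-slackness argument shows the maximum is attained with $\|\beta\|_\infty=O(r/\epsilon)$, so we may restrict to such $\beta$; then $h_{C,\beta}$ takes values in $[-O(r/\epsilon),0]$. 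The term $\sum_u\Gamma(u)\beta(u)$ is identical for $P$ and $D$, so subtracting the two maxima and using $|\max_\beta a_\beta-\max_\beta b_\beta|\le\sup_\beta|a_\beta-b_\beta|$,
\[
\bigl|\cost(P,\Cclose,\Gamma')-\cost(D,\Cclose,\Gamma')\bigr|\ \le\ n\cdot\sup_{C,\beta}\ \Bigl|\tfrac1n{\textstyle\sum_{x\in P}}h_{C,\beta}(x)-\tfrac1m{\textstyle\sum_{y\in D}}h_{C,\beta}(y)\Bigr|.
\]
For $t<0$ the sublevel set $\{x:h_{C,\beta}(x)\le t\}=\bigcup_{u\in\Cclose}\{x:\dist(x,u)\le t+\beta(u)\}$ is a union of at most $k$ balls, while for $t\ge0$ it is everything; hence by the layer-cake identity $\sum_x h_{C,\beta}(x)=-\int_{-O(r/\epsilon)}^{0}|\{x:h_{C,\beta}(x)\le t\}|\,dt$, the right-hand side above is at most $O(r/\epsilon)$ times the worst-case additive discrepancy of the uniform sample $D$ over the range space of unions of at most $k$ balls. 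It remains only to invoke the uniform shattering-dimension bound for that range space (as in \Cref{sec:uniform_vc}) and the classical $\epsilon$-approximation sampling bound: a uniform sample of size $m=\tilde{O}(k\epsilon^{-5}\log\delta^{-1})$ is, with probability $1-\delta$, a $\Theta(\epsilon^{2})$-approximation of that range space, which after the $\Theta(1/\epsilon)$ diameter blow-up from Step~1 gives total error $O(\epsilon)\cdot(\cost(P,C,\Gamma)+nr)$ uniformly over all $C,\Gamma$. Rescaling $\epsilon$ by a constant, and noting $w_D(D)=\tfrac nm\cdot m=n=w_P(P)$ by construction, yields the claimed assignment-preserving $(\epsilon,nr)$-coreset.

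\paragraph{Main obstacle.} The genuine difficulty is the assignment-preserving requirement itself: unlike an ordinary coreset one must control $\min_\sigma\cost^\sigma$ for \emph{every} $\Gamma$ and every consistent $\sigma$, and a uniform sample distorts marginal masses. The key device is the duality reformulation in Step~2, which turns this triple quantifier into a single uniform-deviation bound over the data-independent, uniform-weight range space of unions of $k$ balls — a quantity for which sampling bounds are classical and, crucially, governed by the \emph{uniform} shattering dimension rather than the much larger weighted one. The remaining care is quantitative: the reduction to nearby centers in Step~1 is precisely what keeps the effective radius at $O(r/\epsilon)$ and hence the additive error at $O(\epsilon)\cdot nr$, and tracking this $1/\epsilon$ blow-up correctly is what produces the $\epsilon^{-5}$ in the sample size.
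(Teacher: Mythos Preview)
Your approach is genuinely different from the paper's. The paper proves the theorem by (i) first reducing the ambient dimension to $d=\tilde{O}(\epsilon^{-2}\log k)$ via the iterative-size-reduction/terminal-JL technique of \cite{BJKW21,NN19}, so that it suffices to establish a bound of $\tilde{O}(kd\epsilon^{-3})$; (ii) invoking a concentration inequality from \cite{cohen2019fixed} (\Cref{lemma:VincentCoreset}) that handles a single \emph{fixed} pair $(C,\Gamma)$ with $\tilde{O}(\epsilon^{-3})$ samples; (iii) building an explicit net $\mathcal{F}$ of $(C,\Gamma)$ pairs with $\log|\mathcal{F}|=\tilde{O}(kd)$ and showing every relevant $(C,\Gamma)$ is close to some element of $\mathcal{F}$ (\Cref{lemma:ring_fair_dis}); and (iv) taking a union bound over $\mathcal{F}$, with the regime $\Gamma(\Cfar)>\epsilon n/k$ handled separately by a crude diameter argument (\Cref{lemma:ring_fair_far}). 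Your route through Kantorovich duality is more structural and sidesteps both the black-box concentration lemma and the discretization of $\Gamma$: it collapses the triple quantifier over $(C,\Gamma,\sigma)$ into a single uniform-deviation bound over the range space of unions of $k$ balls, which is an elegant reduction.

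That said, the execution has two gaps. First, your bound is not dimension-free: the shattering dimension of unions of $k$ Euclidean balls in $\mathbb{R}^d$ is $\Theta(kd)$, so Step~2 yields a sample size carrying a factor of $d$, and you never invoke the dimension-reduction step the paper relies on to eliminate it; as written, the claimed $\tilde{O}(k\epsilon^{-5})$ does not follow. Second, the assertion that the optimal dual may be taken with $\|\beta\|_\infty=O(r/\epsilon)$ is not justified --- complementary slackness along a path in the optimal support graph only gives $O(kr/\epsilon)$ in general --- and in any case this bound is not needed. The right observation is that $x\mapsto h_{C,\beta}(x)$ is $1$-Lipschitz, so its \emph{oscillation} over the diameter-$4r$ ring is at most $4r$ for every $\beta$; since the constant part cancels in the difference of averages, the layer-cake integral has length $O(r)$ rather than $O(r/\epsilon)$. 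With that correction your argument yields $\tilde{O}(kd\epsilon^{-2})$ samples, and after the missing dimension-reduction step $\tilde{O}(k\epsilon^{-4})$, which is stronger than the stated $\tilde{O}(k\epsilon^{-5})$ and therefore still proves the theorem.
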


We can assume the input dimension $d = \tilde{O}(\epsilon^{-2}\log k)$,
by applying the iterative size reduction technique introduced in recent paper~\cite{BJKW21}
which is based on a terminal version of Johnson-Lindenstrauss Lemma~\cite{NN19}.\footnote{Strictly speaking, the iterative size reduction technique in \cite{BJKW21} is designed for classical $\epsilon$-coresets instead of our assignment-preserving $(\epsilon,A)$-coresets for \kMedian. The algorithm in \cite{BJKW21} iteratively construct $\epsilon_{i+1}$-coreset on $\epsilon_i$-coreset with carefully chosen $\epsilon_i$'s. Here, since we only require the argument work for a \emph{fixed} $A=nr$, we can apply the reduction in an identical way by iteratively constructing $(\epsilon_{i+1},A)$-coreset on $(\epsilon_i,A)$-coreset with the same set of $\epsilon_i$'s.} Thus it suffices to prove Theorem~\ref{thm:ring_fair} with target coreset size $m = \tilde{O}(\frac{kd}{\epsilon^3})$.

The following lemma shows that
it suffices to bound $| \cost(P,C,\Gamma)-\cost(D,C,\Gamma)|$,
for a $k$-point center set $C$ with assignment constraint $\Gamma$ such that the
total mass of assignment for the ``far'' portion of $C$ is small.

\begin{lemma}\label{lemma:ring_fair_far}
Let $P$ and $D$ be the dataset and coreset in Theorem~\ref{thm:ring_fair}. Let $C\subseteq \mathbb{R}^d,|C|=k$ and $\Gamma : C \to \mathbb{R}_+$ be an assignment constraint such that $\sum_{u\in C}\Gamma(u)=n$.
Let $\Cfar=\{u\in C\mid \dist(u,c)>5kr/\epsilon^2\}$.
If $\Gamma(\Cfar)=\sum_{u\in \Cfar} \Gamma(u)>\epsilon n/k$, $$
\big | \cost(P,C,\Gamma)-\cost(D,C,\Gamma)\big|<\epsilon \cost(P,C,\Gamma).
$$
\end{lemma}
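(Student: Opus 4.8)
The plan is to exploit the hypothesis $\Gamma(\Cfar) > \epsilon n / k$ to show that \emph{every} feasible assignment (for either $P$ or $D$) already has cost at least $\Omega(kr/\epsilon)$ per unit of far-mass, so the optimal cost $\cost(P,C,\Gamma)$ is large — at least $\Omega(nr)$ — while the gap between $\cost(P,C,\Gamma)$ and $\cost(D,C,\Gamma)$ can be bounded by $O(nr)$ using only crude triangle-inequality estimates. Dividing the latter by the former then yields a relative error that is $O(\epsilon)$ after rescaling.

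First I would lower-bound $\cost(P,C,\Gamma)$. Any $\sigma \sim \Gamma$ must send total mass $\Gamma(\Cfar) > \epsilon n/k$ to centers in $\Cfar$; each such center is at distance $> 5kr/\epsilon^2$ from $c$, and every point of $P \subseteq \ring(c,r,2r)$ is within $2r$ of $c$, so by the triangle inequality each unit of this mass pays at least $5kr/\epsilon^2 - 2r \geq 4kr/\epsilon^2$. Hence $\cost(P,C,\Gamma) \geq \Gamma(\Cfar)\cdot 4kr/\epsilon^2 > (\epsilon n/k)\cdot 4kr/\epsilon^2 = 4nr/\epsilon$. The same bound holds for $\cost(D,C,\Gamma)$ since $D$ also lies in $\ring(c,r,2r)$ and $w_D(D) = n$.

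Next I would bound the gap. The key structural fact is that $D$ and $P$ have the same total mass, $w_D(D) = \frac{n}{m}\cdot m = n = w_P(P)$, so given any $\sigma \sim \Gamma$ on $P$ achieving $\cost(P,C,\Gamma)$ I can build a feasible $\pi \sim \Gamma$ on $D$ of comparable cost, and vice versa. Concretely, fix a matching $M$ between the mass of $P$ and the mass of $D$ (both have total $n$), and transport $\sigma$ along $M$ to get $\pi$; by the triangle inequality, for matched pairs $x \in P, y \in D$ and center $u \in C$, $|\dist(x,u) - \dist(y,u)| \leq \dist(x,c) + \dist(y,c) \leq 4r$. Summing, $|\cost^\sigma(P,C) - \cost^\pi(D,C)| \leq 4nr$, so $\cost(D,C,\Gamma) \leq \cost(P,C,\Gamma) + 4nr$; the symmetric argument gives the reverse. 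Therefore $|\cost(P,C,\Gamma) - \cost(D,C,\Gamma)| \leq 4nr < \epsilon \cdot (4nr/\epsilon) < \epsilon\cdot \cost(P,C,\Gamma)$, which is the claimed bound (up to the constant, absorbed by rescaling $\epsilon$).

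The main obstacle — really the only subtle point — is constructing the transport map $M$ cleanly when the assignments $\sigma$ are fractional: I need to split the masses of $P$ and $D$ into atoms and match them so that for each center $u$ the mass $\sigma(\cdot, u)$ is carried to an equal mass on $D$, which is exactly a feasible flow in a bipartite transportation network and exists because $w_P(P) = w_D(D)$; this mirrors the matching argument already used in the proof of Lemma~\ref{lemma:fair:color}. Once that is in place, the rest is routine triangle-inequality bookkeeping. Note this lemma is purely deterministic — it holds for \emph{any} reweighted subset $D \subseteq \ring(c,r,2r)$ with $w_D(D) = n$ — so the uniform-sampling randomness plays no role here; it will only be needed for the complementary case $\Gamma(\Cfar) \leq \epsilon n/k$.
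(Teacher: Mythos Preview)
Your proposal is correct and follows essentially the same approach as the paper: lower-bound $\cost(P,C,\Gamma)>4nr/\epsilon$ from the far mass, upper-bound the gap by $4nr$ via the triangle inequality, and divide. The paper obtains the gap bound even more directly, without constructing a matching, via $|\cost(P,C,\Gamma)-\cost(D,C,\Gamma)|\leq \cost(P,c)+\cost(D,c)\leq 4nr$ (i.e., route all mass through $c$), so your transport construction, while correct, is not needed.
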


\begin{proof}
Recall that $P$ and $D$ are both subsets of $\ring(c,r,2r)$. Thus we have $$\max\{\cost(P,c),\cost(D,c)\}\leq 2nr$$

As $\Gamma(\Cfar)\geq \epsilon n/k$, at least $\epsilon n/k$ points in $P$ must have connection cost at least $5kr/\epsilon^2-2r>4kr/\epsilon^2$. So we know that $\cost(P,C,\Gamma)\geq \epsilon n/k\cdot 4kr/\epsilon^2=4nr/\epsilon$. So by triangle inequality we know that $$
\big | \cost(P,C,\Gamma)-\cost(D,C,\Gamma)\big|\leq \cost(P,c)+\cost(D,c)\leq 4nr<\epsilon \cost(P,C,\Gamma).
$$
\end{proof}

\begin{lemma}[{\cite[Lemma 13]{cohen2019fixed}}]
\label{lemma:VincentCoreset}
Let $C\subseteq \mathbb{R}^d,|C|=k$ and $\Gamma : C \to \mathbb{R}_+$ be an assignment constraint such that $\sum_{u\in C}\Gamma(u)=n$.
Let $Q$ be a uniform sample of $P\subseteq \ring(c,r,2r)$
with size $m=\tilde{O}(\epsilon^{-3}\log \delta^{-1})$ and re-weighted by $\forall x\in Q, w_Q(x)=n/m$.
Then with probability $1-\delta$,
$$
    |\cost(Q,C,\Gamma)-\cost(P,C,\Gamma)|\leq \varepsilon nr.
$$
\end{lemma}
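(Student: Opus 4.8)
The point is that since $P$, and hence every uniform sample $Q$, lies in a ring of radius $\Theta(r)$ around $c$, the value $\cost(X,C,\Gamma)$ is, up to a \emph{fixed} additive constant, a quantity that ranges only over an interval of width $O(nr)$ as $X$ varies over weighted subsets of the ring of total weight $n$; the whole content of the lemma is that uniform sampling pins this $O(nr)$-scale fluctuation down to within $\epsilon nr$. Concretely, set $L:=\sum_{u\in C}\Gamma(u)\dist(u,c)$ (which depends only on $C,\Gamma$) and $f(X):=\cost(X,C,\Gamma)-L$. Because every $x$ in the ring has $\dist(x,c)\in(r,2r]$, the reduced cost $e(x,u):=\dist(x,u)-\dist(u,c)$ lies in $[-2r,2r]$ for \emph{every} $u\in C$ (even a far one), and any $\sigma\sim\Gamma$ has connection cost $L+\sum_{x,u}\sigma(x,u)e(x,u)$; hence $f(X)=\min_{\sigma\sim\Gamma}\sum_{x,u}\sigma(x,u)e(x,u)\in[-2nr,2nr]$. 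Since $L$ is the same for $P$ and $Q$, it suffices to show $|f(P)-f(Q)|\le\epsilon nr$.

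First I would handle the lower bound $f(Q)\ge f(P)-\epsilon nr$. By LP duality for transportation, $f(X)=\max_{\beta:C\to\RR}\big(\sum_{x\in X}w_X(x)h_\beta(x)+\sum_{u\in C}\Gamma(u)\beta_u\big)$, where $h_\beta(x):=\min_{u\in C}(e(x,u)-\beta_u)$ is the best response of the other dual variables. The map $h_\beta$ is $1$-Lipschitz in $x$, so for \emph{every} $\beta$ its values on the ring span an interval of length at most $\Diam(\ring(c,r,2r))\le 4r$. Fix $\beta^\star$ optimal for $X=P$; it is still dual-feasible for $Q$, so $f(Q)\ge \tfrac nm\sum_{i=1}^m h_{\beta^\star}(X_i)+\sum_u\Gamma(u)\beta^\star_u$, where $X_1,\dots,X_m$ are the i.i.d.\ uniform samples. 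Since the $h_{\beta^\star}(X_i)$ are i.i.d., bounded in a common interval of length $4r$, with mean $\tfrac1n\sum_{x\in P}h_{\beta^\star}(x)$, Hoeffding's inequality gives $\big|\tfrac nm\sum_i h_{\beta^\star}(X_i)-\sum_{x\in P}h_{\beta^\star}(x)\big|\le\epsilon nr$ with probability $1-\delta$ as soon as $m=\Omega(\epsilon^{-2}\log\delta^{-1})$, which yields $f(Q)\ge f(P)-\epsilon nr$.

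For the upper bound $f(Q)\le f(P)+\epsilon nr$ I would build an explicit feasible assignment for $Q$ from an optimal one $\sigma^\star\sim\Gamma$ for $P$. Set $\sigma^Q(X_i,u):=\tfrac nm\sigma^\star(X_i,u)$; this has the correct supply $n/m$ at each sampled copy, and $e$-cost $\tfrac nm\sum_i\widehat g(X_i)$ with $\widehat g(x):=\sum_u\sigma^\star(x,u)e(x,u)\in[-2r,2r]$, so again by Hoeffding it is within $\tfrac\epsilon2 nr$ of $\sum_{x\in P}\widehat g(x)=f(P)$ once $m=\Omega(\epsilon^{-2}\log\delta^{-1})$. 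However $\sigma^Q$ meets only the perturbed demands $\widetilde\Gamma(u):=\tfrac nm\sum_i\sigma^\star(X_i,u)$ (with $\sum_u\widetilde\Gamma(u)=n$), so I would patch it to meet $\Gamma$ by rerouting mass between centers; since $|e|\le 2r$, rerouting one unit changes the $e$-cost by at most $4r$ (this is exactly where subtracting $L$ pays off — the same patch measured in the original cost could cost as much as the diameter of $C$), so the total patch cost is at most $2r\|\widetilde\Gamma-\Gamma\|_1$. It then remains to show $\|\widetilde\Gamma-\Gamma\|_1\le\tfrac\epsilon4 n$ with high probability, a multinomial-type concentration. (Alternatively one gets a two-sided bound directly: after the free shift-normalization $\max_u\beta_u=0$ one checks that pushing any $\beta_u$ below $-4r$ is never beneficial, so it suffices to take an $\epsilon r$-net of $\beta$ over $[-4r,0]^{|C|}$ and apply the Hoeffding estimate above to every net point, union-bounding over the net.)

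The step I expect to be hardest is making this upper bound quantitative enough to reach the stated size $\widetilde O(\epsilon^{-3}\log\delta^{-1})$ with \emph{no} dependence on $k=|C|$: the naive bound on the demand drift $\|\widetilde\Gamma-\Gamma\|_1$ loses a factor $\sqrt k$ (a Cauchy--Schwarz over the $k$ centers), and the naive $\beta$-net has size $(1/\epsilon)^{\Theta(k)}$, so both routes give only $\widetilde O(k\epsilon^{-2})$. Removing the $k$ is the technical heart of \cite[Lemma~13]{cohen2019fixed}, which exploits that within a single ring the additively-weighted Voronoi partition induced by the centers is effectively one-dimensional, so that only $\widetilde O(1)$ ``relevant'' radius thresholds ever matter per distance scale; I would follow that route.
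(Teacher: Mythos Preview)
The paper does not prove this lemma; it is quoted verbatim as \cite[Lemma~13]{cohen2019fixed}, so there is no in-paper proof to compare your sketch against.  Your sketch is correct as far as it goes.  The centering $e(x,u)=\dist(x,u)-\dist(u,c)\in[-2r,2r]$ is the right normalization, and your one-sided bound (fix the optimal dual $\beta^\star$ for $P$, apply Hoeffding to the $1$-Lipschitz function $h_{\beta^\star}$) cleanly gives $f(Q)\ge f(P)-\epsilon nr$ with $m=O(\epsilon^{-2}\log\delta^{-1})$, independently of $k$; in fact it already shows the bias inequality $\E[f(Q)]\ge f(P)$.  You are also right that the reverse direction is the real issue, and that both of your proposed routes --- bounding the demand drift $\|\widetilde\Gamma-\Gamma\|_1$, or netting over $\beta\in[-4r,0]^{k}$ --- naively lose a factor of $k$, whereas the stated bound $\tilde O(\epsilon^{-3}\log\delta^{-1})$ must be $k$-free (the only $k$ in the paper's final $\tilde O(kd\epsilon^{-3})$ sample size comes from the union bound over $\mathcal F$, not from this lemma).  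So you have correctly located the gap.

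What you have not done is close it.  Your last paragraph simply points back to \cite{cohen2019fixed} with a one-line heuristic (``the additively-weighted Voronoi partition induced by the centers is effectively one-dimensional'') that is too vague to count as a proof plan and, as phrased, is not obviously true in $\RR^d$ for $d>1$.  One $k$-free ingredient you overlooked is bounded differences: replacing a single sample $X_i$ by another ring point $X_i'$ changes $\cost(Q,C,\Gamma)$ by at most $\tfrac{n}{m}\dist(X_i,X_i')\le \tfrac{4nr}{m}$ (keep the same $\Gamma$-feasible assignment and just move that one weight-$n/m$ point), so McDiarmid already gives \emph{two-sided} concentration of $\cost(Q,C,\Gamma)$ around its mean with $m=O(\epsilon^{-2}\log\delta^{-1})$ and no $k$.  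Combined with $\E[\cost(Q,C,\Gamma)]\ge \cost(P,C,\Gamma)$ from your dual argument, the entire lemma reduces to the single \emph{bias} estimate $\E[\cost(Q,C,\Gamma)]\le\cost(P,C,\Gamma)+\epsilon nr$.  Establishing that bias bound, still without dependence on $k$, is the actual content being imported from \cite{cohen2019fixed}, and it is precisely the step your proposal leaves open.
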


Lemma~\ref{lemma:VincentCoreset} is a concentration inequality for a fixed center set with capacity constraints, given by \cite{cohen2019fixed}. To show the coreset property holds for all possible center sets, we carefully construct a discretization $\mathcal{F}$ of centers and the assignment constraints.

\paragraph{Definition of $\mathcal{F}$.}
Let $N$ denote an $\epsilon r$-net of the ball $B(c,\frac{5kr}{\epsilon^2})$.
So $|N|\leq (\frac{k}{\epsilon})^{O(d)}$.
Let $t := \lceil\frac{5k^2}{\epsilon^3}\rceil$
and $H := \{i \cdot \frac{n}{t} \mid i = 0, 1, \ldots, t\}$ denote
the set of multiples of $\frac{n}{t}$ that do not exceed $n$. Let $N\times H$ denote the set of weighted points $x$ such that $x\in N$ and $w(x)\in H$.
We define $$
    \mathcal{F}:=
    \left\{(C,\Gamma)\mid C\subseteq N,|C|\leq k,\forall x\in C, \Gamma(x)\in H,\sum_{x\in C}\Gamma(x)=n\right\}.
$$
Note that $|\mathcal{F}|\leq \big(\frac{k}{\epsilon}\big)^{O(kd)}\cdot (\frac{k}{\epsilon})^{O(k)}$ and thus $\log |\mathcal {F}|=\tilde{O}(kd\log\frac{k}{\epsilon})$.

In the following lemma, we show that the coreset property on $\mathcal{F}$ implies coreset property
on every $k$-point center set $C\subseteq \mathbb{R}^d$ and assignment $\Lambda$ with $\Lambda(\Cfar)\leq \epsilon n/k$.
\begin{lemma}\label{lemma:ring_fair_dis}
Let $(C,\Lambda)$ be a $k$-point center set in $\mathbb{R}^d$ with assignment constraint $\Lambda$ such that $\Lambda(C)=n$ and $\Lambda(\Cfar)\leq \epsilon n/k$ where $\Cfar=\{u\in C \mid \dist(u,c)>5r/\epsilon^2\}$, then there exists a $k$-point center set $S$ with assignment constraint $(S,\Gamma)\in \mathcal{F}$ such that for every weighted set $Q\subseteq \ring(c,r,2r)$ with $w_Q(Q)=n$,
$$
\cost(Q,C,\Lambda)\in \big(1\pm O(\epsilon)\big)\big(\cost(Q,S,\Gamma)+\Delta(C)\big)\pm O(\epsilon n r)
$$
where $\Delta(C) := \sum_{u\in \Cfar} \dist(u,c)\cdot \Lambda(u) $.
\end{lemma}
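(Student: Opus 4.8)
The plan is to discretize the pair $(C,\Lambda)$ in three independent steps, each distorting $\cost(Q,\cdot,\cdot)$ by at most a $(1\pm O(\epsilon))$ factor plus an additive $O(\epsilon nr)$, \emph{uniformly} over all admissible $Q\subseteq\ring(c,r,2r)$ with $w_Q(Q)=n$: first, collapse the far centers $\Cfar$ onto a single surrogate placed at $c$ that accounts for their aggregate cost $\Delta(C)$; second, snap the remaining close centers to the net $N$; third, round the assignment masses to the grid $H$. Since none of the three steps looks at $Q$, the resulting pair $(S,\Gamma)$ depends only on $(C,\Lambda)$, lies in $\mathcal F$ by construction, and chaining the estimates gives the claimed inclusion.

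For the first step, note that because $Q\subseteq\ring(c,r,2r)$ and $\dist(u,c)>5r/\epsilon^2$ for $u\in\Cfar$, the triangle inequality gives $\dist(x,u)\in(1\pm O(\epsilon^2))\dist(u,c)$ for every $x\in Q$ and $u\in\Cfar$; hence for \emph{any} assignment the total mass routed to $\Cfar$ contributes $(1\pm O(\epsilon^2))\Delta(C)$. I would set $C':=\Cclose\cup\{c\}$ with $\Lambda'$ equal to $\Lambda$ on $\Cclose$ and $\Lambda'(c):=\Lambda(\Cfar)$, so $\Lambda'(C')=n$ and $|C'|\le k$ (when $\Cfar\neq\emptyset$ we have $|\Cclose|\le k-1$; when $\Cfar=\emptyset$ the step is vacuous and $\Delta(C)=0$). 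Converting an optimal assignment for $\cost(Q,C,\Lambda)$ into one for $(C',\Lambda')$ by redirecting every unit sent to $\Cfar$ to the surrogate $c$ keeps the close-center part of the cost unchanged and adds at most $2r\,\Lambda(\Cfar)\le 2r\epsilon n/k=O(\epsilon nr)$, using the hypothesis $\Lambda(\Cfar)\le\epsilon n/k$; and the far part of the original optimum is at least $(1-O(\epsilon^2))\Delta(C)$. Conversely, converting an optimal assignment for $(C',\Lambda')$ into one for $(C,\Lambda)$ by splitting the mass at $c$ among the far centers (to match the per-center amounts $\Lambda(u)$) adds at most $(1+O(\epsilon^2))\Delta(C)$. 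This yields $\cost(Q,C,\Lambda)=\cost(Q,C',\Lambda')+(1\pm O(\epsilon^2))\Delta(C)\pm O(\epsilon nr)$, uniformly in $Q$.

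For the remaining two steps: every center of $C'$ lies in $B(c,5r/\epsilon^2)\subseteq B(c,5kr/\epsilon^2)$, so (taking $c\in N$ without loss of generality) I would snap each to a nearest net point, moving it by at most $\epsilon r$; for any fixed assignment this perturbs the cost by at most $\epsilon r\cdot n$, and merging centers that collide in $N$ by adding their masses leaves the cost unchanged while only shrinking the center count. Then I would round each mass down to the nearest multiple of $n/t$ and assign the residual (which is at most $kn/t$ and stays in $[0,n]$, hence in $H$) to a single center; since $t=\lceil 5k^2/\epsilon^3\rceil$ the total displaced mass is $O(kn/t)=O(\epsilon^3 n/k)$, and a standard transportation-polytope perturbation over a point set of diameter $O(r/\epsilon^2)$ changes the cost by $O(\epsilon^3 n/k\cdot r/\epsilon^2)=O(\epsilon nr)$. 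The final pair $(S,\Gamma)$ has $S\subseteq N$, $|S|\le k$, $\Gamma(x)\in H$ for all $x\in S$, and $\sum_{x\in S}\Gamma(x)=n$, i.e.\ $(S,\Gamma)\in\mathcal F$; combining the three estimates and absorbing $(1\pm O(\epsilon^2))\Delta(C)+\cost(Q,S,\Gamma)$ into $(1\pm O(\epsilon))(\cost(Q,S,\Gamma)+\Delta(C))$ finishes the proof.

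The hard part will be the first step, and specifically making it rigorous in \emph{both} directions: since $\cost(Q,\cdot,\cdot)$ is a minimum over transportation plans, one cannot argue distance-by-distance but must exhibit explicit plan conversions as above and check that the freedom in choosing which point-masses get ``wasted'' on far or surrogate centers only costs $O(\epsilon nr)$ — which is exactly where $\Lambda(\Cfar)\le\epsilon n/k$ enters. The net-snapping and grid-rounding steps are then routine once the diameter $O(r/\epsilon^2)$ and the displaced masses $O(\epsilon rn)$ and $O(\epsilon^3 n/k)$ are identified.
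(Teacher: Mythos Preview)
Your argument is correct and reaches the same conclusion, but it is organized differently from the paper's proof. The paper defines $(S,\Gamma)$ directly from $\Cclose$: it snaps each $u\in\Cclose$ to the net, rounds every mass down to a multiple of $n/t$, and dumps \emph{all} the slack (both the rounding residual and the entire far mass $\Lambda(\Cfar)$) onto the single close center $u^*$ of largest capacity. It then proves the two inequalities by explicit assignment surgery; in the harder (lower-bound) direction the choice of $u^*$ matters, because up to $2\epsilon n/k$ mass is rerouted to $u^*$, which can sit anywhere in $B(c,5kr/\epsilon^2)$, and the paper controls this via the ``$y^*$ trick'': since $\Gamma(u^*)\ge n/(2k)$, some point $y^*\in Q$ has distance to $u^*$ at most the average, so routing through $y^*$ costs $O(r)+O(\tfrac{k}{n})\Tclose$ per unit, giving a multiplicative $O(\epsilon)\Tclose$ term plus $O(\epsilon nr)$.

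Your three-step decomposition sidesteps this subtlety by introducing a surrogate center at $c$ to absorb $\Lambda(\Cfar)$ \emph{before} snapping and rounding. Because the surrogate sits at the ring center, routing the far mass to it costs only $2r\,\Lambda(\Cfar)=O(\epsilon nr)$, so you never need the $y^*$ averaging argument; the remaining rounding only displaces $O(kn/t)=O(\epsilon^3 n/k)$ mass, and the naive diameter bound $O(kr/\epsilon^2)$ already suffices. This makes your error analysis more modular and avoids the multiplicative $O(\epsilon)\Tclose$ term altogether. The price is one extra intermediate object $(C',\Lambda')$ and the small bookkeeping point that $c$ must itself be replaced by a nearby net point (you note ``$c\in N$ w.l.o.g.'', which is fine up to another $O(\epsilon nr)$). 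Both routes yield the same $(1\pm O(\epsilon))(\cost(Q,S,\Gamma)+\Delta(C))\pm O(\epsilon nr)$ conclusion.
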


\begin{proof}

Let $\Cclose := C\setminus \Cfar$. For every $u\in \Cclose$,
let $S(u)\in N$ be a net point such that $\dist\big(u,S(u)\big)\leq \epsilon r$.
Let $u^*\in \Cclose$ denote the center with largest capacity, namely, $u^*\in \mathrm{argmax}_{u\in \Cclose} \Lambda(u)$.
Clearly $\Lambda(u^*)\geq \frac{\Lambda(\Cclose)}{k}\geq \frac{n}{2k}$.

Recall that $t=\lceil\frac{5k^2}{\epsilon^3}\rceil$. For every $x\in \Cclose\setminus \{x^*\}$, we let $\Gamma\big(S(x)\big) := \lfloor \frac{t\Lambda(x)}{n}\rfloor\cdot \frac{n}{t}$.
We define
$$
    \Gamma\big(S(x^*)\big) := n-\sum_{x\in \Cclose\setminus \{x^*\}} \Gamma\big(S(x)\big).
$$

As all $\Gamma\big(S(x)\big)$'s are multiples of $\frac{n}{t}$ and sum up to $n$, we know that $(S,\Gamma)\in \mathcal{F}$. We are ready to prove the lemma.

To simplify the presentation, we observe that it suffices to assume $S=\Cclose$. To see this, recall that $\forall u\in \Cclose, \dist(u,S(u))\leq \epsilon r$, thus replacing every $u$ with $S(u)$ produces at most $\epsilon n r$ error, which is affordable.

To prove the upper bound, $\cost(Q,C,\Lambda)\leq (1+ O(\epsilon))\big(\cost(Q,S,\Gamma)+\Delta(C)\big)+\epsilon n r$, it suffices to construct an assignment $\sigma': Q\times C \to \mathbb{R}_{+}$ that is consistent with $\Lambda$ so as
$$
\cost^{\sigma'}(Q,C)\leq (1+ \epsilon)\big(\cost(Q,S,\Gamma)+\Delta(C)\big)+\epsilon n r.
$$

Recall that we have assumed w.l.o.g, $S=\Cclose$. By construction we know that $\Lambda(u)-\Gamma(u)\in [0,\frac{n}{t}]$ for $u\in \Cclose\setminus \{u^*\}$ and $\Gamma(u^*)-\Lambda(u^*)\in [0,2\epsilon n/k]$.

To construct $\sigma'$, we modify the optimal assignment corresponding to $\cost(Q,S,\Gamma)$. Specially, we arbitrarily disconnect $\Gamma(u^*)-\Lambda(u^*)$ mass of points from $Q$ to $u^*$ in $\Gamma(u^*)$ and distribute the mass to $\Cclose\setminus \{u^*\}$ and $\Cfar$ to satisfy the requirements $\Lambda$ on them. We claim that by doing this, the connection cost increases by at most $$
(1+\epsilon)\Delta(C)+\epsilon n r.
$$
To see this, we first observe that sending the matching mass from $\ring(c,r,2r)$ to $\Cfar$ always costs at most $(1+\epsilon)\Delta(C)$. On the other hand, as $\Cclose\subseteq B(c,\frac{5kr}{\epsilon^2})$, and we send at most $$
\sum_{u\in \Cclose\setminus \{u^*\}} \big(\Lambda(u)-\Gamma(u)\big)\leq \frac{kn}{t}
$$
additional mass to $\Cclose\setminus \{u^*\}$,
the cost in this part increases by at most $\frac{kn}{t}\cdot \frac{5kr}{\epsilon^2}\leq \epsilon n r$. Thus we have proved the upper bound.

\medspace

It remains to prove the lower bound, $\cost(Q,C,\Lambda)\geq \big(1- O(\epsilon)\big)\big(\cost(Q,S,\Gamma)+\Delta(C)\big)-O(\epsilon n r)$. Let $\sigma$ denote the optimal assignment for $\cost(Q,C,\Lambda)$, namely, $\cost(Q,C,\Lambda)=\cost^\sigma(Q,C)$.

Let $\Tfar=\sum_{q\in Q}\sum_{u\in \Cfar} \sigma(q,u)\dist(q,u)$ and $\Tclose=\sum_{q\in Q}\sum_{u\in \Cclose} \sigma(q,u)\dist(q,u)$. So $\cost(Q,C,\Lambda)=\Tfar+\Tclose$. We observe that $\Tfar\geq (1- \epsilon) \Delta(C)$. So we just need to prove $\Tclose\geq \big(1- O(\epsilon)\big) \cost(Q,S,\Gamma)-O(\epsilon n r)$. It suffices to construct an assignment $\pi:Q\times S\to \mathbb{R}_+$ that is consistent with $\Gamma$ and $$
\cost^{\pi}(Q,S)\leq \big(1+O(\epsilon)\big)\Tclose+O(\epsilon n r).
$$

To construct $\pi$, we modify $\sigma$. Specifically, we arbitrarily disconnect $\Lambda(u)-\Gamma(u)$ mass for every $u\in \Cclose\setminus \{u^*\}$ and disconnect all mass connecting to $\Cfar$, and send all those mass to $u^*$. We note that we have re-allocated at most $2\epsilon n/k$ mass.

Let $y^*\in Q$ be a point such that $\dist(y^*,u^*)\leq \frac{\sum_{q\in Q} \sigma(q,u^*)\dist(q,u^*)}{\Gamma(u^*)}$. Note that such $y^*$ exists as there is always some point that contributes at most the average. Thus by triangle inequality, for every $x\in Q$, \begin{eqnarray*}
\dist(x,u^*)&\leq& \dist(x,c)+\dist(c,y^*)+\dist(y^*,u^*)\\
&\leq& 2r+2r+\frac{\sum_{q\in Q} \sigma(q,u^*)\dist(q,u^*)}{\Gamma(u^*)}\\
&=&4r+\frac{\sum_{q\in Q} \sigma(q,u^*)\dist(q,u^*)}{\Gamma(u^*)}
\end{eqnarray*}

Thus we know that the re-allocation of mass increases the cost by at most
$$
\bigg(4r+\frac{\sum_{q\in Q} \sigma(q,u^*)\dist(q,u^*)}{\Gamma(u^*)}\bigg)\cdot \frac{2\epsilon n}{k}\leq O(\epsilon n r)+O(\epsilon \cdot \Tclose)
$$
where we have used the fact that $\Gamma(u^*)\geq \frac{n}{2k}$ and $\sum_{q\in Q} \sigma(q,u^*)\dist(q,u^*)\leq \Tclose$.

So we have constructed such $\pi$ and thus proved the lower bound.
\end{proof}

\begin{proof}[Proof of Theorem \ref{thm:ring_fair}] Replacing $\delta$ with $\frac{\delta}{|\mathcal{F}|}$ in Lemma \ref{lemma:VincentCoreset}.
By union bound, \Cref{lemma:VincentCoreset} and the fact that
the uniform sample has size $\tilde{O}(\frac{kd}{\epsilon^3}\cdot \log \delta^{-1})$,
we know that w.p. at least $1-\delta$, the coreset property holds for all $(C,\Gamma)\in \mathcal{F}$.
By \Cref{lemma:ring_fair_dis}, we further know that the coreset property holds for all $(C,\Gamma)$ such that $\Gamma(\Cfar)\leq \epsilon n/k$.
By \Cref{lemma:ring_fair_far}, we know that the coreset property also holds for those $(C,\Gamma)$ such that $\Gamma(\Cfar)> \epsilon n/k$.
\end{proof}

\subsection{$\epsilon$-Coresets for Capacitated and Fair \kMedian}

Combing Theorem~\ref{thm:ring_fair} and Theorem~\ref{thm:reduct_ring} in the way of Section~\ref{sec:additive_err}, we obtain the algorithm for constructing assignment-preserving $\epsilon$-coresets.

\begin{theorem}\label{thm:capC}
There is a near-linear time algorithm that takes a data set $P\subseteq \mathbb{R}^d$ and outputs an assignment-preserving $\epsilon$-coreset $D\subseteq P$ with size $|D|=\tilde{O}(\frac{k^3}{\epsilon^6})$ for \kMedian. In particular, this implies an $\epsilon$-coreset for Capacitated \kMedian.
\end{theorem}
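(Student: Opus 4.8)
The plan is to assemble Theorem~\ref{thm:capC} from the two building blocks already in hand --- the reduction to rings (Theorem~\ref{thm:reduct_ring}) and the uniform-sampling coreset for a single ring (Theorem~\ref{thm:ring_fair}) --- glued together by the standard additive-to-multiplicative conversion of Section~\ref{sec:additive_err}. First I would run a near-linear-time $(O(1),O(1))$-bicriteria approximation to obtain a center set $C^*$ with $|C^*|=O(k)$, and partition $P$ into clusters $P_1,\dots,P_{O(k)}$ by assigning each point to its nearest center $c_i\in C^*$; the bicriteria guarantee gives $\sum_i \cost(P_i,c_i)=\cost(P,C^*)=O(\OPT_k)$, where $\OPT_k$ is the optimal unconstrained \kMedian cost. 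For each pair $(P_i,c_i)$, apply Theorem~\ref{thm:reduct_ring} with $z=1$ (so the $2^{O(z\log z)}$ factor is a constant) to get a $2$-partition $\{W_i,Z_i\}$ of $P_i$, where $W_i$ is a union of $\tilde{O}(k/\epsilon)$ rings $\{R_{i,j}\subseteq\ring(c_i,r_{i,j},2r_{i,j})\}_j$ and $S_i^Z$ is an assignment-preserving $(\epsilon,\cost(P_i,c_i))$-coreset for $Z_i$ of size $\tilde{O}(k/\epsilon)$, both computed in time $\tilde{O}(|P_i|k)$.

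Next, for every ring $R_{i,j}$ I would invoke Theorem~\ref{thm:ring_fair} with failure probability $\delta':=\delta/\tilde{O}(k^2/\epsilon)$ (the total number of rings). After the dimension-reduction step described right after Theorem~\ref{thm:ring_fair} (reducing to $d=\tilde{O}(\epsilon^{-2}\log k)$), each ring coreset $S_{i,j}$ has size $\tilde{O}(kd/\epsilon^3)=\tilde{O}(k/\epsilon^5)$ and is, with probability $1-\delta'$, an assignment-preserving $(\epsilon,|R_{i,j}|\,r_{i,j})$-coreset for $R_{i,j}$ that preserves total weight. Set $D:=\bigcup_i\bigl(S_i^Z\cup\bigcup_j S_{i,j}\bigr)$. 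A union bound over the $\tilde{O}(k^2/\epsilon)$ rings makes all ring coresets succeed simultaneously with probability $1-\delta$; the size is $\tilde{O}(k^2/\epsilon)\cdot\tilde{O}(k/\epsilon^5)+\tilde{O}(k^2/\epsilon)=\tilde{O}(k^3/\epsilon^6)$, and every step runs in near-linear time (plus $\poly(k/\epsilon)$ for drawing and reweighting the samples).

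For the error, the key arithmetic is that each ring $R_{i,j}$ contributes additive error at most $\epsilon\,|R_{i,j}|\,r_{i,j}\le\epsilon\,\cost(R_{i,j},c_i)$, since every point of the ring lies at distance more than $r_{i,j}$ from $c_i$; summing over the disjoint rings within cluster $i$ gives at most $\epsilon\,\cost(W_i,c_i)\le\epsilon\,\cost(P_i,c_i)$, and $S_i^Z$ adds another $\epsilon\,\cost(P_i,c_i)$, so over all $i$ the accumulated additive error is $O(\epsilon)\sum_i\cost(P_i,c_i)=O(\epsilon)\cdot\OPT_k$. Now fix any $C$ with $|C|\le k$ and assignment constraint $\Gamma$, and apply the standard argument of Section~\ref{sec:additive_err}: take an optimal $\Gamma$-consistent assignment for $(P,C)$, restrict it to each ring and each $Z_i$ to obtain per-piece assignment constraints, invoke each piece's coreset guarantee, and recombine (exactly as in the ``concluding the error analysis'' paragraph of the proof of Theorem~\ref{thm:reduct_ring}); since every piece preserves total weight this is valid, and the symmetric direction gives the matching bound. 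Together these yield $|\cost(P,C,\Gamma)-\cost(D,C,\Gamma)|\le O(\epsilon)\cost(P,C,\Gamma)+O(\epsilon)\OPT_k\le O(\epsilon)\cost(P,C,\Gamma)$, using $\cost(P,C,\Gamma)\ge\cost(P,C)\ge\OPT_k$; rescaling $\epsilon$ finishes the main statement, and the ``in particular'' claim is immediate because capacitated \kMedian is the special case in which $\Gamma$ ranges over the capacity vectors, as noted after Definition~\ref{def:assign_coreset}.

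The main obstacle I expect is not a single hard inequality but the careful bookkeeping in this last step: one must verify that the per-piece additive errors telescope against $\OPT_k$ rather than against something larger, and --- more delicately --- that an arbitrary global assignment constraint $\Gamma$ can be consistently split into per-piece constraints and recombined in both directions while keeping the weight-preservation property that assignment-preserving coresets require. This is precisely what the Section~\ref{sec:additive_err} machinery together with the already-completed error analysis of Theorem~\ref{thm:reduct_ring} handle, so the remaining work is to invoke them with the correct accounting of the ring radii and cluster costs.
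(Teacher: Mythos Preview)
Your proposal is correct and follows essentially the same approach as the paper: the paper's proof is simply the one-line statement ``Combining Theorem~\ref{thm:ring_fair} and Theorem~\ref{thm:reduct_ring} in the way of Section~\ref{sec:additive_err},'' and you have spelled out precisely those ingredients (bicriteria approximation, ring decomposition, uniform sampling on rings, union bound, additive-to-multiplicative conversion) with the correct size and error accounting.
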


\paragraph{Fair clustering} Suppose $P\subseteq \mathbb{R}^d$ is a fair \kMedian instance with groups $P_1, \ldots, P_l\subseteq P$.  Let $\Delta$ denote the number of combinations of groups that one data point can belong to.
We note that our assignment-preserving coresets (\Cref{def:assign_coreset})
matches the case $\Delta=1$.
Thanks to a reduction of~\cite{HJV19}, we can use our assignment-preserving $\epsilon$-coreset to construct $\epsilon$-coreset for fair \kMedian.

\begin{theorem}[{\cite[Theorem 4.3]{HJV19}}]
Suppose there is an algorithm that for any instance $P\subseteq \mathbb{R}^d$ with groups $P_1, \ldots, P_l$, constructs an assignment-preserving $\epsilon$-coreset for \kMedian with probability $1-\delta$ in time $T(|P|,\epsilon,\delta)$.
Then there is an algorithm $\mathcal{A}$ that for any fair \kMedian instance $P$
such that $P$ can be partitioned into $\Delta$ disjoint groups $P^{(1)},\ldots,P^{(\Delta)}$ where each $P^{(i)}$ consists of points that belong to the same combination of groups,
$\mathcal{A}$ constructs an $\epsilon$-coreset on $P$ for fair \kMedian with probability $1-\delta$ in time
$$
    \tilde{O}\left(\sum_{i=1}^{\Delta} T\big(|P^{(i)}|,O(\epsilon),O(\delta)\big)\right).
$$
\end{theorem}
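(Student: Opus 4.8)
The plan is a \emph{divide-and-conquer over color classes}: invoke the given construction separately on each of the $\Delta$ disjoint parts $P^{(1)},\dots,P^{(\Delta)}$ — each of which carries a single group-combination, so this is exactly the $\Delta=1$ case matched by \Cref{def:assign_coreset} — with error parameter $\epsilon'=\Theta(\epsilon)$ and per-part failure probability scaled so that a union bound over the $\Delta$ parts yields overall failure at most $\delta$ (the resulting $\mathrm{polylog}(\Delta)$ overhead is absorbed into the $\tilde{O}(\cdot)$). This produces weighted subsets $S^{(i)}\subseteq P^{(i)}$ with $w_{S^{(i)}}(S^{(i)})=w_{P^{(i)}}(P^{(i)})$, and we output $S:=\bigcup_{i=1}^{\Delta}S^{(i)}\subseteq P$, with points inheriting their group memberships from $P$. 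The running time is the sum of the per-part times; the work is to show that $S$ is an $\epsilon$-coreset for the fair \kMedian instance. Write $\phi(\cdot,C)$ for the optimal fair \kMedian cost of a weighted set with a fixed center set $C$, $|C|\le k$; we must prove $\phi(S,C)\in(1\pm\epsilon)\phi(P,C)$ for every such $C$.

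The first and crucial step is a structural observation: \emph{feasibility of a fair assignment is determined solely by the mass vector} $\big(\sigma(P^{(i)},c)\big)_{i\in[\Delta],\,c\in C}$. Indeed, by construction of the partition each fairness group $P_j$ is a disjoint union of some color classes, so $\sigma(P_j,c)=\sum_{i:\,P^{(i)}\subseteq P_j}\sigma(P^{(i)},c)$ and $\sigma(P,c)=\sum_i\sigma(P^{(i)},c)$; hence the ratio constraints $\sigma(P_j,c)/\sigma(P,c)\in[\alpha_j,\beta_j]$, which together with the per-point constraints $\sum_{c}\sigma(x,c)=w(x)$ are the \emph{only} constraints on $\sigma$, involve $\sigma$ only through these aggregates. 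Two consequences: \textbf{(i) Restriction.} Any fair assignment $\sigma$ on $P$ induces, for each $i$, an assignment constraint $\Gamma_i:C\to\mathbb{R}_+$ by $\Gamma_i(c):=\sigma(P^{(i)},c)$, with $\sum_c\Gamma_i(c)=w_{P^{(i)}}(P^{(i)})$, and $\cost^\sigma(P^{(i)},C)\ge\cost(P^{(i)},C,\Gamma_i)$ since $\sigma$ restricted to $P^{(i)}$ is consistent with $\Gamma_i$. \textbf{(ii) Gluing.} Given assignment constraints $\{\Gamma_i\}_i$ that arose as in (i) from a common fair $\sigma$, any choice of $\sigma_i\sim\Gamma_i$ on each $P^{(i)}$ (respectively on each $S^{(i)}$) concatenates to a fair assignment of $P$ (respectively of $S$) of cost $\sum_i\cost^{\sigma_i}(P^{(i)},C)$: the glued assignment has the \emph{same} aggregate mass vector as $\sigma$ (group memberships being inherited), and feasibility depends only on that vector.

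Given these, the cost comparison is routine. Fix $C$ with $|C|\le k$. Since $w_{S^{(i)}}(S^{(i)})=w_{P^{(i)}}(P^{(i)})$, each $\Gamma_i$ from (i) is a valid assignment constraint for both $P^{(i)}$ and $S^{(i)}$, so the assignment-preserving $\epsilon'$-coreset guarantee gives $\cost(S^{(i)},C,\Gamma_i)\in(1\pm O(\epsilon'))\cost(P^{(i)},C,\Gamma_i)$. Taking $\sigma^\star$ to be an optimal fair assignment for $(P,C)$ and $\Gamma_i$ its restrictions, gluing (ii) on $S$, then the coreset guarantee, then (i) give
\begin{align*}
\phi(S,C)\ &\le\ \sum_i\cost(S^{(i)},C,\Gamma_i)\ \le\ (1+O(\epsilon'))\sum_i\cost(P^{(i)},C,\Gamma_i)\\
&\le\ (1+O(\epsilon'))\sum_i\cost^{\sigma^\star}(P^{(i)},C)\ =\ (1+O(\epsilon'))\,\phi(P,C).
\end{align*}
The reverse inequality $\phi(P,C)\le(1+O(\epsilon'))\phi(S,C)$ is symmetric: start from an optimal fair assignment for $(S,C)$, restrict it to obtain constraints $\Gamma_i$, apply the other side of the coreset guarantee to each part, and glue back to a fair assignment of $P$. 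Taking the constant in $\epsilon'=\Theta(\epsilon)$ small enough turns this into $\phi(S,C)\in(1\pm\epsilon)\phi(P,C)$, i.e.\ $S$ is an $\epsilon$-coreset for fair \kMedian.

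I expect the main obstacle to be making the gluing step (ii) fully rigorous: one must verify that concatenating independently chosen $\Gamma_i$-consistent assignments lands back in the fair-assignment polytope, which hinges precisely on the $\Gamma_i$'s being the restrictions of a single fair $\sigma$ (so their sums over the color classes making up each group satisfy every ratio constraint) and on the fact that, within a color class, fairness only ``sees'' the total mass routed to each center. A secondary, purely routine matter is the slack bookkeeping: composing the coreset's two-sided $(1\pm\epsilon')$ guarantee with the inequality $\cost(P^{(i)},C,\Gamma_i)\le\cost^{\sigma^\star}(P^{(i)},C)$ forces the per-part calls to use error $\Theta(\epsilon)$ rather than $\epsilon$, and the union bound over $\Delta$ parts forces the per-part failure probability down by a $1/\Delta$ factor whose logarithmic cost vanishes into the $\tilde{O}(\cdot)$.
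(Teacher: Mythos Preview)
The paper does not give its own proof of this statement; it is quoted as \cite[Theorem 4.3]{HJV19} and used as a black box. Your reconstruction is correct and is precisely the intended argument: run the assignment-preserving coreset construction independently on each color class $P^{(i)}$, then use the observation that the $(\alpha,\beta)$-fairness constraints factor through the aggregate mass vector $(\sigma(P^{(i)},c))_{i,c}$ to glue per-part optimal constrained assignments into a fair assignment (and conversely), so that the per-part $(1\pm\epsilon')$ guarantees sum to a global one.
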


We thus can prove the following theorem.

\begin{theorem}\label{thm:fairC}
There is a linear algorithm that constructs an $\epsilon$-coreset for fair \kMedian with size $\tilde{O}(\Delta\cdot \frac{k^3}{\epsilon^6})$.
\end{theorem}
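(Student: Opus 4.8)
The plan is to compose two results already in hand: the assignment-preserving coreset construction of \Cref{thm:capC} and the black-box reduction of \cite{HJV19} quoted just above. There is essentially no new mathematics, only a careful instantiation of the reduction.

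First I would recall that an assignment-preserving $\epsilon$-coreset (\Cref{def:assign_coreset}) is precisely the object the reduction takes as its subroutine: it corresponds to the degenerate case $\Delta=1$ of fair \kMedian, in which every point belongs to the same combination of the groups $P_1,\dots,P_l$, so the fairness requirement collapses to a single family of assignment constraints $\Gamma$ with $\sum_{c\in C}\Gamma(c)=w_P(P)$. \Cref{thm:capC} supplies exactly such a subroutine: given any weighted $P^{(i)}\subseteq\mathbb{R}^d$ it outputs, in near-linear time, an assignment-preserving $\epsilon$-coreset $D^{(i)}\subseteq P^{(i)}$ of size $\tilde{O}(k^3/\epsilon^6)$. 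If the reduction demands per-instance failure probability $\delta$ rather than the constant guarantee, one boosts it in the standard way at the cost of a $\log\delta^{-1}$ factor in size and time, so that $T(|P^{(i)}|,\epsilon,\delta)=\tilde{O}(|P^{(i)}|\,k\,\log\delta^{-1})$.

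Second I would invoke the quoted theorem of \cite{HJV19}. It partitions the fair instance $P$ into the $\Delta$ disjoint classes $P^{(1)},\dots,P^{(\Delta)}$, one per combination of groups that a point can have; this partition is computed in linear time by a single pass over the input. Feeding the subroutine of \Cref{thm:capC} into the reduction and taking the union $D:=\bigcup_{i=1}^\Delta D^{(i)}$ yields an $\epsilon$-coreset for fair \kMedian on $P$, after rescaling $\epsilon$ and $\delta$ by the $O(\cdot)$ constants hidden in the reduction (which we absorb by starting from a slightly smaller $\epsilon$ and $\delta$). Its size is $\sum_{i=1}^{\Delta}\tilde{O}(k^3/\epsilon^6)=\tilde{O}\big(\Delta\cdot k^3/\epsilon^6\big)$, and, since the $P^{(i)}$ partition $P$, its construction time is $\tilde{O}\big(\sum_{i=1}^{\Delta} T(|P^{(i)}|,O(\epsilon),O(\delta))\big)=\tilde{O}(|P|\,k)$, i.e.\ near-linear; this gives the theorem.

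The only things to be careful about are bookkeeping: (i) checking that the notion of assignment-preserving coreset in \Cref{def:assign_coreset} matches the input contract of the \cite{HJV19} reduction (it does, as both are phrased via assignment constraints $\Gamma$ summing to the total weight); (ii) tracking how the reduction rescales the accuracy and confidence parameters so that the final output is a genuine $\epsilon$-coreset with, say, constant success probability, for which a union bound over the $\Delta$ sub-calls with $\delta'=\delta/\Delta$ suffices; and (iii) confirming the $\Delta$ classes are disjoint so that $D\subseteq P$ and the total size is the sum of the per-class sizes. None of these is a real obstacle — the entire substance has already been carried out in \Cref{thm:capC}, and transitively in \Cref{thm:reduct_ring} and \Cref{thm:ring_fair}.
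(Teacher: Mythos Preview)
Your proposal is correct and follows exactly the paper's approach: the paper does not spell out a proof but simply states ``We thus can prove the following theorem'' immediately after quoting the \cite{HJV19} reduction and \Cref{thm:capC}, so the intended argument is precisely the composition you describe.
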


\paragraph{Extension to \kMeans.}
We can obtain a
$\poly(k/\epsilon)$-sized assignment-preserving $(\epsilon,nr)$-coresets for \kMeans using a very similar argument as in the \kMedian case.
In particular, we replace Lemma~\ref{lemma:VincentCoreset} with a similar concentration inequality for capacitated \kMeans
that appears in (the full version of) \cite{cohen2019fixed},
stated as follows.

\begin{lemma}[Lemma 32 in the full version of \cite{cohen2019fixed}]
\label{lemma:VincentCoresetKmeans}
Let $C\subseteq \mathbb{R}^d,|C|=k$ and $\Gamma : C \to \mathbb{R}_+$ be an assignment constraint such that $\sum_{u\in C}\Gamma(u)=n$.
Let $Q$ be a uniform sample of $P\subseteq \ring(c,r,2r)$
with size $m=\tilde{O}(\epsilon^{-3}\log \delta^{-1})$ and re-weighted by $\forall x\in Q, w_Q(x)=n/m$.
Then with probability $1-\delta$,
$$
    |\cost_2(Q,C,\Gamma)-\cost_2(P,C,\Gamma)|\leq \varepsilon nr^2+\epsilon\cdot \cost_2(P,C,\Gamma).
$$
\end{lemma}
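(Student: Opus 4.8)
The plan is to mirror the proof of the \kMedian version (\Cref{lemma:VincentCoreset}), replacing distances by squared distances and tracking the extra multiplicative term that squaring forces. Fix a single pair $(C,\Gamma)$. We establish the two one-sided bounds $\cost_2(Q,C,\Gamma)\le(1+\epsilon)\cost_2(P,C,\Gamma)+\epsilon n r^2$ and $\cost_2(P,C,\Gamma)\le(1+\epsilon)\cost_2(Q,C,\Gamma)+\epsilon n r^2$ by the same template: take an optimal $\Gamma$-consistent assignment on one side, transfer it point-by-point to the other side, repair the (small) violations of the capacity constraint $\Gamma$, and bound the resulting change in cost. Since $Q\subseteq\ring(c,r,2r)$ has total weight $n$ just like $P$, the argument is symmetric in $P$ and $Q$, so it suffices to describe one direction.

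Split $C=\Cclose\cup\Cfar$ at threshold $\dist(u,c)\lessgtr r/\epsilon$. For a far center $u\in\Cfar$, every point $x\in\ring(c,r,2r)$ has $\dist(x,u)=\dist(u,c)\,(1\pm O(r/\dist(u,c)))=\dist(u,c)\,(1\pm O(\epsilon))$, hence $\dist(x,u)^2\in(1\pm O(\epsilon))\dist(u,c)^2$; therefore, for \emph{any} $\Gamma$-consistent assignment of $P$ or of $Q$, the cost of the mass routed to $\Cfar$ equals $\sum_{u\in\Cfar}\Gamma(u)\dist(u,c)^2$ up to a $(1\pm O(\epsilon))$ factor, independent of which specific points are sent. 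This is precisely where the \kMeans case diverges from \kMedian: there the far-center discrepancy is $O(r)\cdot\Gamma(u)$, summing to the purely additive $O(nr)$, whereas after squaring it is $O(\epsilon)\cdot\Gamma(u)\dist(u,c)^2$, which we can only absorb into the multiplicative $\epsilon\cdot\cost_2(P,C,\Gamma)$ term. For close centers all squared distances are bounded, $\dist(x,u)^2\le(2r+r/\epsilon)^2=O(r^2/\epsilon^2)$. Given the optimal $\Gamma$-consistent $\sigma$ for $P$, transfer it to the $Q$-assignment $\pi_0(q,u):=\tfrac{n}{m}\sigma(q,u)$; then $\E[\pi_0(Q,u)]=\Gamma(u)$ and $\E[\cost_2^{\pi_0}(Q,\Cclose)]=\cost_2^{\sigma}(P,\Cclose)\le\cost_2(P,C,\Gamma)$. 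Each summand $\tfrac{n}{m}\sum_{u\in\Cclose}\sigma(q,u)\dist(q,u)^2$ lies in $[0,O(r^2/\epsilon^2)]$ and each $\tfrac{n}{m}\sigma(q,u)\in[0,n/m]$, so a Bernstein/Chernoff bound with $m=\tilde O(\epsilon^{-3}\log\delta^{-1})$, together with a union bound over the $k$ centers, shows that with probability $1-\delta$ both $|\cost_2^{\pi_0}(Q,\Cclose)-\cost_2^{\sigma}(P,\Cclose)|\le\epsilon n r^2$ and the total capacity violation $\sum_{u\in C}|\pi_0(Q,u)-\Gamma(u)|$ is tiny. Re-routing that small amount of mass to restore $\Gamma$ (at cost $O(r^2/\epsilon^2)$ per unit for close centers, and at the $\Gamma$-determined cost above for far centers) changes the objective by at most $O(\epsilon)n r^2+O(\epsilon)\cost_2(P,C,\Gamma)$; invoking the generalized triangle inequality \Cref{lem:gen:tri} where needed to pass between $\dist(\cdot,u)^2$ and $\dist(\cdot,c)^2$ for close centers. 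This produces a feasible $\Gamma$-consistent assignment of $Q$ of cost $\le(1+O(\epsilon))\cost_2(P,C,\Gamma)+O(\epsilon)n r^2$; the reverse bound follows symmetrically, and a final rescaling of $\epsilon$ by a constant finishes the proof.

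I expect the main obstacle to be the bookkeeping forced by the hard constraint $\Gamma$: unlike the unconstrained setting one cannot simply transfer the optimal assignment, but must simultaneously control (i) the concentration of each $\Gamma(u)$-mass under the uniform sample, (ii) the cost of the repair step that restores feasibility, and (iii) the fact that the optimum for $Q$ may route mass to far centers quite differently from the optimum for $P$ — this last point is tamed only by the observation that every routing to $\Cfar$ has cost $\sum_{u\in\Cfar}\Gamma(u)\dist(u,c)^2(1\pm O(\epsilon))$ regardless. Getting the sample size down to the claimed $\tilde O(\epsilon^{-3})$, rather than the $\tilde O(\epsilon^{-6})$ a naive Hoeffding bound over the range $O(r^2/\epsilon^2)$ would give, needs the variance-sensitive (Bernstein) estimate and the particular close/far threshold, exactly as in \cite{cohen2019fixed}.
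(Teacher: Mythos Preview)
The paper does not prove this lemma: it is stated as a citation (Lemma~32 in the full version of \cite{cohen2019fixed}) and used as a black box, exactly like its \kMedian counterpart \Cref{lemma:VincentCoreset}. There is therefore no in-paper argument to compare your proposal against.

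Your sketch follows the natural template --- the close/far split of $C$, the observation that all mass routed to $\Cfar$ costs $\sum_{u\in\Cfar}\Gamma(u)\dist(u,c)^2$ up to a $(1\pm O(\epsilon))$ factor regardless of which ring points are sent (this is precisely the source of the extra multiplicative $\epsilon\cdot\cost_2(P,C,\Gamma)$ term compared with the purely additive \kMedian bound), transferring the optimal assignment, and repairing the small capacity violations. One caution on the sample-size accounting: the claim that Bernstein over the range $O(r^2/\epsilon^2)$ already yields $m=\tilde O(\epsilon^{-3})$ is not substantiated by your sketch. With $\Var(X_i)\le O(r^2/\epsilon^2)\cdot\E[X_i]$ and allowed deviation $\epsilon\,\E[X_i]+\epsilon r^2$, a direct Bernstein calculation appears to give only $\tilde O(\epsilon^{-4})$ in the regime where $\E[X_i]=\Theta(r^2)$; the same issue already arises for the \kMedian version. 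Since you ultimately defer to ``exactly as in \cite{cohen2019fixed}'' for this step anyway, you should consult that proof directly rather than assert the bound.
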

     \section{Coresets for \kzC via Uniform Shattering Dimension}
\label{sec:uniform_vc}

As discussed earlier, our new framework Theorem \ref{thm:reduct_ring} allows one to construct coresets merely via uniform sampling which is naturally captured by the uniform shattering dimension of the metric space. We present a few results via uniform shattering dimension.

\paragraph{Functional representation of distance functions.}
As in~\cite{DBLP:conf/stoc/FeldmanL11,FSS20,BJKW21}, we consider
functional representation of distance functions
$\calF = \{ f_x : X \to \mathbb{R}_+ \}_{x \in P}$ for a data set $P$.
Intuitively, each $f_x \in \calF$ corresponds to a data point $x \in P$,
and it intends to represent $\dist(x, \cdot)$.
However, the generality of
functional representation enables one to consider
alternative definitions of $f_x$, particularly $f_x = \dist'(x, \cdot)$
for some slightly-perturbed $\dist'$ from $\dist$.
This perturbation has been shown useful for obtaining small
coresets in several recent works~\cite{huang2018epsilon,BJKW21}.

In the following Definition~\ref{def:sdim}, we define shattering dimension
which is a key measure for the complexity of $\calF$.
In particular, it has a direct relation to the size of coresets (Theorem~\ref{thm:sdim2coreset}).

\begin{definition}[Shattering dimension]
    \label{def:sdim}
    Consider $\calF = \{ f_x : X \to \mathbb{R}_+ \}_{x \in P}$.
For $c \in X, r \geq 0$, define $B_\calF(c, r) := \{ f_x \in \calF : f_x(c) \leq r \}$.
    The shattering dimension of $\calF$, denoted as $\sdim(\calF)$,
    is defined as the smallest integer $t \geq 1$, such that
    \begin{align*}
        \forall \calH \subseteq \calF, |\calH| \geq 2,\qquad
            \left| \left\{
                B_\calF(c, r) \cap \calH : c \in Y, r\geq 0
            \right\} \right|
        \leq |\calH|^t.
    \end{align*}
\end{definition}

\begin{theorem}
    \label{thm:sdim2coreset}
    Let $c \in X, r > 0$ and let $R \subseteq \ring(c, r, 2r)$.
    If there exists a set of functions $\calF := \{ f_x : X \to \mathbb{R}_+\}_{x \in R}$
    such that
    \begin{align*}
        \forall x \in R, c \in X,\qquad
        \dist(x, c) \leq f_x(c) \leq (1 + \epsilon) \cdot \dist(x, c),
    \end{align*}
    then for every $0 < \epsilon, \delta < 1$,
    a uniform sample $S \subseteq R$ of size
    $\tilde{O}_z(\epsilon^{-2}) \cdot k \log \delta^{-1} \cdot \sdim(\calF)$
    with each point $p \in S$ reweighted by $w_S(p) := \frac{|R|}{|S|}$,
    is an $(\epsilon, \cost_z(R, c))$-coreset for \kzC on $R$ with probability at least $1 - \delta$.
\end{theorem}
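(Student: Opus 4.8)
The plan is to reduce the statement to a single $\epsilon$-approximation bound for a range space built from $\calF$, and then use the ring structure to keep the dependence on $\epsilon$ at $\epsilon^{-2}$ rather than the naive $\epsilon^{-(2z+2)}$. Since $\dist(x,\cdot)\le f_x(\cdot)\le(1+\epsilon)\dist(x,\cdot)$, it suffices to produce a coreset for the surrogate cost $\cost_{f,z}(R,C):=\sum_{x\in R}\min_{u\in C}f_x(u)^z$, because $\cost_z$ and $\cost_{f,z}$ agree up to a factor $(1+\epsilon)^z$ on $R$ and on any reweighted subset of $R$, and rescaling $\epsilon$ by a $z$-dependent constant at the very end absorbs this. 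First I would record two geometric facts about a ring $R\subseteq\ring(c,r,2r)$: every $x\in R$ satisfies $\dist(x,c)\in(r,2r]$, so $|R|\,r^z<\cost_z(R,c)\le|R|(2r)^z$; and for any $C$, writing $\rho:=\dist(c,C)$, the triangle inequality gives $\dist(x,C)\in[(\rho-2r)_+,\,\rho+2r]$ for all $x\in R$. In particular, if $\rho>4r$ then every $x\in R$ has $\dist(x,C)>\rho/2$ and hence $\cost_z(R,C)>|R|(\rho/2)^z$, while if $\rho\le4r$ then $\rho^z\le4^z r^z$; combining, the largest value $\tau_{\max}:=\max_{x\in R}f_x(C)\le 2(\rho+2r)$ obeys $\tau_{\max}^z=O_z\big(\cost_z(R,C)/|R|+\cost_z(R,c)/|R|\big)$.

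Next I would set up the range space. For $C$ with $|C|\le k$ and $\tau\ge0$ let $A_{C,\tau}:=\{x\in R:\min_{u\in C}f_x(u)>\tau\}$, which (identifying $x$ with $f_x$) equals $\bigcap_{u\in C}\big(R\setminus B_\calF(u,\tau)\big)$. Each complement $R\setminus B_\calF(u,\tau)$ lives in a range space of shattering dimension $\sdim(\calF)$, so by the standard composition lemma the $k$-fold intersections $A_{C,\tau}$ form a range space of shattering dimension $D=O\big(k\,\sdim(\calF)\log k\big)=\tilde O(k\,\sdim(\calF))$. By the $\epsilon$-approximation theorem for bounded-dimension range spaces, a uniform sample $S\subseteq R$ of size $m=\tilde O_z(\epsilon^{-2})\cdot k\,\sdim(\calF)\log\delta^{-1}$, reweighted by $w_S(p):=|R|/m$, is with probability at least $1-\delta$ an $\epsilon'$-approximation for this range space with $\epsilon'=\Theta_z(\epsilon)$; that is, for every $C,\tau$ simultaneously, $\big|\,|A_{C,\tau}|-\tfrac{|R|}{m}|A_{C,\tau}\cap S|\,\big|\le\epsilon'|R|$. (If $|R|\le m$ one simply takes $S=R$.)

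The core step is the layer-cake identity $t^z=\int_0^\infty z\tau^{z-1}\mathbf{1}[t>\tau]\,d\tau$, which gives $\cost_{f,z}(R,C)=\int_0^\infty z\tau^{z-1}|A_{C,\tau}|\,d\tau$ and $\cost_{f,z}(S,C)=\int_0^\infty z\tau^{z-1}\tfrac{|R|}{m}|A_{C,\tau}\cap S|\,d\tau$, with both integrands vanishing for $\tau\ge\tau_{\max}$. Subtracting and applying the $\epsilon'$-approximation bound pointwise in $\tau$,
\[
\big|\cost_{f,z}(R,C)-\cost_{f,z}(S,C)\big|\le\epsilon'|R|\int_0^{\tau_{\max}}z\tau^{z-1}\,d\tau=\epsilon'|R|\,\tau_{\max}^z=O_z(\epsilon')\big(\cost_z(R,C)+\cost_z(R,c)\big),
\]
using the bound on $\tau_{\max}^z$ from the first step. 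Passing back from $\cost_{f,z}$ to $\cost_z$ through the $(1+\epsilon)^z$ sandwich and rescaling $\epsilon$ (hence $\epsilon'$) by a $z$-dependent constant yields $\big|\cost_z(R,C)-\cost_z(S,C)\big|\le\epsilon\big(\cost_z(R,C)+\cost_z(R,c)\big)$ for all $C$ with $|C|\le k$, which is precisely the claimed $(\epsilon,\cost_z(R,c))$-coreset guarantee.

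The main obstacle — and the only place the ring hypothesis is genuinely used — is the estimate $\tau_{\max}^z=O_z(\cost_z(R,C)/|R|+\cost_z(R,c)/|R|)$. Without it the layer-cake bound only yields error $\epsilon'|R|\tau_{\max}^z$ with $\tau_{\max}$ possibly enormous, which would force $\epsilon'\approx\epsilon^{z+1}$ and a sample of size $\epsilon^{-2(z+1)}$; the ring structure is exactly what lets the expensive top layers of the integral (where $\tau\approx\rho$ is large) be charged against $\cost_z(R,C)$ itself, so a single $\Theta_z(\epsilon)$-approximation of the usual $\epsilon^{-2}$ size suffices. A secondary technical point is the composition lemma controlling the shattering dimension of $k$-fold intersections, which is what lets the $k$-center cost inherit the clean $\tilde O(k\,\sdim(\calF))$ dimension bound from the single-ball range space of $\calF$; low-dimensional refinements (discrepancy-based $\epsilon$-approximations) plug in at this same step to give the sharper $\epsilon^{-1.5}$-type bounds used elsewhere in the paper.
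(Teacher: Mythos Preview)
Your proof is correct and follows essentially the same route as the paper: both pass to the $k$-center range space induced by $\calF$, invoke the classical uniform-sampling $\epsilon$-approximation bound (the paper via Lemmas~\ref{lemma:alpha_approx} and~\ref{lemma:ksdim2sdim}, you via the composition lemma), and convert approximation to coreset error through the layer-cake identity, with the ring hypothesis entering only to control the top of the integral. The paper phrases that last step slightly differently---restricting the integration to the interval $[L_1,L_2]$ where the integrand is nontrivial and bounding $L_2^z-L_1^z$ rather than your direct case split on $\rho=\dist(c,C)$ to control $\tau_{\max}^z$---but the two arguments are equivalent.
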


Indeed, similar theorems that relate shattering dimension to coresets have been
discovered in the literature~\cite{DBLP:conf/stoc/FeldmanL11,FSS20,BJKW21}.
However, a fundamental difference is that ours work for \emph{uniform} shattering dimension,
or put it another way, shattering dimension of the range space of \emph{unweighted} balls,
while previous works require a \emph{universal} upper bound for $\sdim(\calF_v := \{ v(x) \cdot f_x \}_{x \in P})$
over \emph{all} $v : P \to \mathbb{R}_+$.
This requirement of uniform shattering dimension is much relaxed, and we shall see this immediately implies many completely-new and/or improved coreset results.

Technically, a key difference from previous arguments is that we do not,
and actually cannot, use the sensitivity sampling framework in the analysis.
In particular, it can be verified that even for a ring data set,
the sensitivity can still vary a lot between points,
which suggests that our proposed uniform sampling algorithm has to suffer a large error.
However, we reach this negative conclusion exactly because the sensitivity sampling
framework always aims for a strict/strong guarantee of multiplicative error,
while in our case we can actually accept an \emph{additive} error of $\epsilon \cdot \cost(R, c)$.
Note that this additive error can be very significant compared with the optimal solution of $R$,
since this $c$ is not necessarily a near-optimal center for $R$.
Hence, our analysis crucially charges the error from the uniform sampling to
this additive term.

Since we do not need the sensitivity framework and thus no sensitivity bound
is necessary,
it both simplifies the analysis and improves the coreset size.
In particular, there is usually an additional $k$ factor in the sensitivity bound
which must be multiplied in the coreset size, and we save this term completely.
The only $k$ comes from the shattering dimension analysis (Lemma~\ref{lemma:ksdim2sdim}).

\begin{proof}[Proof of Theorem~\ref{thm:sdim2coreset}]

    Since we need to work with centers of $k$ points,
    we need to define the $k$-extension of the function set:
    let $\calF^{(k)} := \{ f_x^{(k)} : X^k \to \mathbb{R}_+ \}_{x \in R}$,
    such that $\forall x \in R, C \in X^k$,
    $f_x^{(k)}(C) := \min_{c \in C}{f_x(c)}$.
    Clearly,
    \begin{equation}
        \label{eqn:distorsion}
        \forall C \in X^k, x \in R,\qquad \dist(x, C) \leq f_x^{(k)}(C) \leq (1 + \epsilon) \cdot \dist(x, C).
    \end{equation}
    We say $\mathcal{D} \subseteq \calF^{(k)}$ is
    an $\alpha$-approximation $(\alpha > 0)$ of $\calF^{(k)}$ if
    \begin{align*}
        \forall C \in X^k, r \geq 0, \qquad
        \left| \frac{|B_\mathcal{D}(C, r)|}{|\mathcal{D}|} - \frac{|B_{\calF^{(k)}}(C, r)|}{|\calF^{(k)}|} \right| \leq \alpha.
    \end{align*}
    In the following Lemma~\ref{lemma:approx2coreset}
    we relate $\alpha$-approximation to coresets with additive error.

    \begin{lemma}
        \label{lemma:approx2coreset}
        Suppose $\mathcal{D} \subseteq \calF^{(k)}$ is an $\frac{\epsilon}{O_z(1)}$-approximation of $\calF^{(k)}$.
        Let $D := \{ x : f_x^{(k)} \in \mathcal{D} \} \subseteq R$
        be the corresponding point set of the functional respresentation $\mathcal{D}$.
        Let every point of $D$ be weighted by $\frac{|R|}{|D|}$,
        then $D$ is an $\left(\epsilon, \cost_z(R, c)\right)$-coreset for \kzC on $R$.
    \end{lemma}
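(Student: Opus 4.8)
The plan is to translate the $\alpha$-approximation guarantee on $\mathcal{D}$ into a coreset guarantee by integrating over the level sets of the distance function. First I would fix an arbitrary center set $C\in X^k$ and observe that for any point set $A\subseteq R$ (with unit weights) and the $k$-extension $\calF^{(k)}$,
\[
\cost_z(A,C)=\sum_{x\in A}\dist(x,C)^z,
\]
while the perturbed cost $\widehat{\cost}_z(A,C):=\sum_{x\in A}f^{(k)}_x(C)^z$ satisfies $\cost_z(A,C)\le\widehat{\cost}_z(A,C)\le(1+\epsilon)^z\cost_z(A,C)$ by \eqref{eqn:distorsion}. So it suffices to compare $\widehat{\cost}_z$ on $R$ and on the reweighted $D$, and then absorb the $(1+\epsilon)^z$ distortion into the final multiplicative error after rescaling $\epsilon$.

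The key step is the layer-cake identity: for each nonnegative function $g=f^{(k)}_x(C)^z$ on a weighted set, $g(x)=\int_0^\infty \mathbf{1}[g(x)>\tau]\,d\tau$, so
\[
\frac{1}{|R|}\widehat{\cost}_z(R,C)=\int_0^\infty \frac{|\{x\in R: f^{(k)}_x(C)^z>\tau\}|}{|R|}\,d\tau
=\int_0^\infty\Bigl(1-\tfrac{|B_{\calF^{(k)}}(C,\tau^{1/z})|}{|\calF^{(k)}|}\Bigr)d\tau,
\]
and similarly for $D$ with $\mathcal{D}$ in place of $\calF^{(k)}$. Subtracting, the $\frac{\epsilon}{O_z(1)}$-approximation property bounds the integrand pointwise by $\frac{\epsilon}{O_z(1)}$, but the integral runs over an unbounded range, so I would truncate it at $\tau_{\max}:=(4r)^z$: since $R\subseteq\ring(c,r,2r)$ we have $\dist(x,c)\le 2r$ hence $f^{(k)}_x(C)\le f_x(c)\le(1+\epsilon)\dist(x,c)\le 4r$ for every $x$, and the same holds for every $x\in D\subseteq R$, so both integrands vanish for $\tau>\tau_{\max}$. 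Therefore
\[
\Bigl|\tfrac{1}{|R|}\widehat{\cost}_z(R,C)-\tfrac{1}{|D|}\textstyle\sum_{x\in D}f^{(k)}_x(C)^z\Bigr|
\le \tau_{\max}\cdot\tfrac{\epsilon}{O_z(1)} = \tfrac{(4r)^z\epsilon}{O_z(1)}\le \tfrac{\epsilon}{O_z(1)}\cdot r^z\cdot 4^z.
\]
Multiplying by $|R|$ and choosing the hidden constant in $\frac{\epsilon}{O_z(1)}$ to beat $4^z$ (which is where the $O_z(1)$ must swallow a $4^z$, i.e. a $2^{O(z)}$ factor) gives an additive error at most $\epsilon\cdot r^z|R|$, and since $\cost_z(R,c)=\sum_{x\in R}\dist(x,c)^z\ge |R| r^z$ (because $\dist(x,c)>r$ on a ring), this additive error is at most $\epsilon\cdot\cost_z(R,c)$, as desired. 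Finally I would revert the $\widehat{\cost}_z$-to-$\cost_z$ distortion: $\cost_z(D_w,C)=\frac{|R|}{|D|}\sum_{x\in D}\dist(x,C)^z$ lies within a $(1\pm\epsilon')$ factor of $\frac{|R|}{|D|}\sum_{x\in D}f^{(k)}_x(C)^z$ for $\epsilon'=O_z(\epsilon)$, and likewise for $R$, so combining the three comparisons and rescaling $\epsilon$ yields $|\cost_z(R,c)\text{-normalized error}|\le O_z(\epsilon)\cdot(\cost_z(R,C)+\cost_z(R,c))$, which after a final rescaling of $\epsilon$ is exactly the claim.

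**The main obstacle** is handling the interaction between the multiplicative $(1+\epsilon)^z$ perturbation of $f^{(k)}$ and the additive $\alpha$-approximation cleanly, while keeping every $z$-dependent constant inside a single $2^{O(z\log z)}$-type factor — in particular making sure the truncation bound $(4r)^z$ and the powering inequality $(1+\epsilon)^z\le 1+O_z(\epsilon)$ are both absorbed by the $O_z(1)$ slack allotted in the hypothesis on $\mathcal{D}$, rather than compounding. A secondary technical point is that the layer-cake argument is stated for unit-weight $R$ and a uniform-weight $D$, so the weight $\frac{|R|}{|D|}$ factors out globally and the comparison really is between two \emph{distributions} (the empirical measures of $R$ and $D$), which is exactly what the $\alpha$-approximation definition controls; I would make this normalization explicit so that the integral identities are literally equalities of measures of sublevel sets. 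Everything else is routine.
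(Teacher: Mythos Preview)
There is a genuine gap in your truncation step. You write that $f^{(k)}_x(C)\le f_x(c)\le(1+\epsilon)\dist(x,c)\le 4r$, but $f^{(k)}_x(C)=\min_{c'\in C}f_x(c')$ and the ring center $c$ need not belong to the query set $C\in X^k$. If every center in $C$ is at distance, say, $100r$ from the ring, then $f^{(k)}_x(C)\approx 100r$ for all $x\in R$, and your truncation at $\tau_{\max}=(4r)^z$ is simply false. Consequently the bound you obtain, an additive $\epsilon\cdot r^z|R|$ error, cannot hold in general: the error from the $\alpha$-approximation on the interval where the integrand is nonzero really does scale with how far $C$ is from $R$, and this must be charged to $\cost_z(R,C)$, not only to $\cost_z(R,c)$.

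The paper's proof fixes exactly this point. Rather than bounding $f^{(k)}_x(C)$ absolutely, it observes that as $x$ ranges over $R$ the values $f^{(k)}_x(C)$ lie in an interval $[L_1,L_2]$ with $L_2-L_1\le(2+O(\epsilon))r$ (this follows from $\Diam(R)\le 2r$ and the $(1+\epsilon)$ distortion of $f$). Outside $[L_1,L_2]$ both $|B_{\calF^{(k)}}(C,t)|/|\calF^{(k)}|$ and $|B_{\mathcal{D}}(C,t)|/|\mathcal{D}|$ are identically $0$ or identically $1$, so the layer-cake integrals agree exactly there. The error is then $O(\epsilon)|R|\cdot(L_2^z-L_1^z)$, and one uses $L_2^z-L_1^z\le 2^{z-1}(L_1^z+(L_2-L_1)^z)\le O_z(1)(L_1^z+r^z)$; the $L_1^z$ term is charged to $\cost_z(R,C)\ge|R|L_1^z$ and the $r^z$ term to $\cost_z(R,c)\ge|R|r^z$. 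This is why the final guarantee is $O_z(\epsilon)\cdot(\cost_z(R,C)+\cost_z(R,c))$ rather than $O_z(\epsilon)\cdot\cost_z(R,c)$ alone. Your argument can be repaired by replacing the truncation at $(4r)^z$ with this $[L_1,L_2]$ localization; the rest of your outline (layer-cake identity, absorbing the $(1+\epsilon)^z$ distortion) is fine.
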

    \begin{proof}
       We note that it suffices to prove $D$ is an $\left(O_z(\epsilon),\cost_z(R,c)\right)$-coreset when $\mathcal{D}$ is an $\epsilon$-approximation of $\mathcal{F}^{(k)}$ as we can scale back $\epsilon$ in the end.

        Fix a center $C \in X^k$.
By the formula of integration by part we know that,
        \begin{align*}
            \sum_{f^{(k)}_x \in \calF^{(k)}} \left({f_x^{(k)}(C)}\right)^z
            &= \int_{t\geq 0} ( |\calF^{(k)}| - |B_{\calF^{(k)}}(C, t)| ) \cdot z\cdot t^{z-1} dt
        \end{align*}

        We observe that there is an interval $[L_1, L_2]$ such that
        $ B_{\calF^{(k)}}(C, t) = \emptyset$ when $t < L_1$,
        and $B_{\calF^{(k)}}(C, t) = \calF^{(k)}$ when $t > L_2$.
        Since $\mathcal{D} \subseteq \calF^{(k)}$, we have
        \begin{align*}
            |B_{\calF^{(k)}}(C, t)| = \frac{|\calF^{(k)}|}{|\mathcal{D}|} \cdot
            |B_{\mathcal{D}}(C, t)|
        \end{align*}
        when $t \notin [L_1, L_2]$.
        Moreover, because of the distortion bound of $f^{(k)}_x$'s,
        and that $\Diam(R) \leq 2r$, we have $L_2 - L_1 \leq (2 + O(\epsilon)) \cdot r$.
        Therefore, by the definition of $\epsilon$-approximation
        and the correspondence between $R$ and $\calF^{(k)}$
        as well as that between $D$ and $\mathcal{D}$, we have
        \begin{align*}
\sum_{f^{(k)}_x \in \calF^{(k)}}\left({f_x^{(k)}(C)}\right)^z
            &= \int_{t\geq 0} ( |\calF^{(k)}| - |B_{\calF^{(k)}}(C, t)| )\cdot z\cdot t^{z-1} dt \\
            &= \int_{t \in [L_1, L_2]} ( |\calF^{(k)}| - |B_{\calF^{(k)}}(C, t)| ) \cdot z\cdot t^{z-1}  dt \\
                &\quad + \int_{t \in \mathbb{R}_{\geq 0} \setminus [L_1, L_2]} ( |\calF^{(k)}| - |B_{\calF^{(k)}}(C, t)| ) \cdot z\cdot t^{z-1} dt  \\
            &\in \int_{t \in [L_1, L_2]} \left( |\calF^{(k)}| - \frac{|\calF^{(k)}|}{|\mathcal{D}|}
            \cdot |B_\mathcal{D}(C, t)| \pm \epsilon \cdot |\calF^{(k)}| \right) \cdot z\cdot t^{z-1} dt \\
                &\qquad + \int_{t \in \mathbb{R}_{\geq 0} \setminus [L_1, L_2]}
                \left(|\calF^{(k)}| - \frac{|\calF^{(k)}|}{|\mathcal{D}|} \cdot
            |B_{\mathcal{D}}(C, t)|\right) \cdot z\cdot t^{z-1} dt \\
            &\in \int_{t \geq 0} \left( |\calF^{(k)}| - \frac{|\calF^{(k)}|}{|\mathcal{D}|}
                \cdot |B_{\mathcal{D}}(C, t)|\right)dt
                \pm O(\epsilon) \cdot  |\calF^{(k)}| \int_{t\in [L_1,L_2]} z\cdot t^{z-1} dt \\
            &\in \left(\sum_{f_x^{(k)} \in \mathcal{D}} w_D(f_x^{(k)}) \cdot f_x^{(k)}(C) \right) \pm O(\epsilon )\cdot |R|\cdot (L_2^z-L_1^z)  \\
            &\in   \cost_z(D, C) \pm O(\epsilon) \cdot |R| \cdot 2^{z-1}\cdot (L_1^z+(L_2-L_1)^z) \\
            &\in
            \cost_z(D,C)\pm O_z(\epsilon)\cdot |R|\cdot (L_1^z+r^z)
            \\
            &\in \cost_z(D,C)\pm O_z(\epsilon)\cdot\left( \cost_z(R,C)+\cost_z(R,c)
            \right)
        \end{align*}
        where for the last inequality we have used the fact that $\cost_z(R,C)\geq |R|\cdot L_1^z$ and $\cost_z(R,c)\geq |R|\cdot r^z$.
        Combining this with the fact that
        $\sum_{f^{(k)}_x \in \calF^{(k)}}{f_x^{(k)}(C)} \in (1 \pm O_z(\epsilon)) \cdot \cost_z(R, C)$,
        we conclude that
        \begin{align*}
            |\cost_z(R, C) - \cost_z(D, C)| \leq O_z(\epsilon) \cdot (\cost_z(R, C) + \cost_z(R, c)).
        \end{align*}
        It remains to scale $\epsilon$ and
        this finishes the proof of Lemma~\ref{lemma:approx2coreset}.
    \end{proof}
    \begin{lemma}[\cite{vapnik1971uniform,DBLP:conf/soda/LiLS00}]
        \label{lemma:alpha_approx}
        For every $0 < \alpha < 1$,
        a uniform sample of size
        \begin{align*}
            O_z(\epsilon^{-2} \cdot (\sdim(\calF^{(k)}) \log \epsilon^{-1} + \log \delta^{-1}))
        \end{align*}
        from $\calF^{(k)}$
        is an $\alpha$-approximation for $\calF^{(k)}$ with probability at least $ 1 - \delta $.
    \end{lemma}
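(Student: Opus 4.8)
The assertion is the classical $\varepsilon$-approximation sampling theorem of Vapnik--Chervonenkis, in the refined form of Li--Long--Srinivasan, applied to the range space naturally attached to $\calF^{(k)}$, so the plan is to set up that range space and run the standard symmetrization argument. Concretely, take as ground set the finite family $\calF^{(k)}$ (note $|\calF^{(k)}| = |R| < \infty$) and as the range family $\mathcal{R} := \{ B_{\calF^{(k)}}(C,r) : C \in X^k,\ r \ge 0\}$. Writing $d := \sdim(\calF^{(k)})$, the definition of shattering dimension (\Cref{def:sdim}, with the center $c$ replaced by a $k$-tuple $C$) says exactly that the shatter function of $(\calF^{(k)}, \mathcal{R})$ is bounded by $n \mapsto n^{d}$: any subfamily $\calH$ with $|\calH| = n \ge 2$ is cut into at most $n^{d}$ distinct traces by the ranges. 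Moreover an $\alpha$-approximation in the sense used in the proof of \Cref{thm:sdim2coreset} is precisely a subfamily $\mathcal{D}$ with $\bigl| |B \cap \mathcal{D}|/|\mathcal{D}| - |B|/|\calF^{(k)}| \bigr| \le \alpha$ for every $B \in \mathcal{R}$; hence it suffices to exhibit a uniform sample of the stated size that is an $\alpha$-approximation for $\alpha = \Theta_z(\epsilon)$, the constant absorbing the $O_z(1)$ slack demanded by \Cref{lemma:approx2coreset}.

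The heart of the proof is the double-sampling step. First I would take two independent uniform samples $S, S'$ of size $m$ and show that if $S$ is not an $\alpha$-approximation then, since $m = \Omega(\alpha^{-2})$ forces (by Chebyshev) $S'$ to land within $\alpha/2$ of the true frequency on any fixed range with probability at least $1/2$, the two empirical measures disagree by more than $\alpha/2$ on some range with probability at least $\tfrac12 \Pr[S\text{ is bad}]$. Next, by the usual swap symmetrization, the latter probability equals the probability --- after conditioning on the $2m$-point multiset $S \cup S'$ and drawing a uniformly random split of it into two halves --- that some range splits these halves unevenly; conditioned on $S \cup S'$, the ranges realize at most $(2m)^{d}$ distinct traces on these points, and for each fixed trace the probability (over the random split, which makes the relevant half-count hypergeometric) of an imbalance exceeding $\alpha m / 2$ is at most $2e^{-\Omega(\alpha^2 m)}$ by Hoeffding's inequality for sampling without replacement. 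A union bound over the $\le (2m)^{d}$ traces then bounds the overall failure probability by $(2m)^{d} \cdot 2e^{-\Omega(\alpha^2 m)}$.

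Finally I would choose $m$ so that $(2m)^{d} \cdot 2e^{-\Omega(\alpha^2 m)} \le \delta$; since $\log m = O(\log d + \log \alpha^{-1})$, the inequality holds once $m = O\bigl(\alpha^{-2}(d \log \alpha^{-1} + d \log d + \log \delta^{-1})\bigr)$, and substituting $\alpha = \Theta_z(\epsilon)$ while folding the lower-order $d\log d \le \sdim(\calF^{(k)})\log\sdim(\calF^{(k)})$ term into $O_z(\cdot)$ (or, more cleanly, invoking the Li--Long--Srinivasan analysis, which replaces the crude trace count $(2m)^d$ by a packing-number bound of $\alpha^{-O(d)}$ on the number of ranges up to symmetric difference $\alpha/2$ and yields the clean $m = O(\alpha^{-2}(d + \log\delta^{-1}))$) gives the claimed bound $O_z\bigl(\epsilon^{-2}(\sdim(\calF^{(k)}) \log \epsilon^{-1} + \log \delta^{-1})\bigr)$.

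The only genuinely delicate ingredient, needed solely for the sharpest form, is the Haussler-type packing estimate behind the Li--Long--Srinivasan refinement; the rest --- the symmetrization, the Hoeffding tail for the hypergeometric, and solving for $m$ --- is entirely routine. Since the bound we actually need retains the $\log\epsilon^{-1}$ factor, the elementary Vapnik--Chervonenkis argument already suffices, and in the write-up we simply cite \cite{vapnik1971uniform,DBLP:conf/soda/LiLS00} for the precise constants.
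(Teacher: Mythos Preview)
Your proposal is correct and in fact goes well beyond what the paper does: the paper gives no proof of this lemma at all, simply citing \cite{vapnik1971uniform,DBLP:conf/soda/LiLS00} and moving on. Your sketch of the VC symmetrization argument (with the optional Li--Long--Srinivasan refinement) is the standard route to this classical bound, and you also correctly unpack the slight notational slip in the statement (the sample size is written in $\epsilon$ while the lemma is quantified over $\alpha$, the intended reading being $\alpha = \Theta_z(\epsilon)$ as dictated by \Cref{lemma:approx2coreset}).
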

    \begin{lemma}[{\cite[Lemma 6.5]{DBLP:conf/stoc/FeldmanL11}, \cite[Claim 6.1]{huang2018epsilon}}]
        \label{lemma:ksdim2sdim}
        $\sdim(\calF^{(k)}) \leq k \cdot \sdim(\calF)$.
    \end{lemma}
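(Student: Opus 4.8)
The plan is to reduce everything to one set-theoretic observation: a ball of $\calF^{(k)}$ is a union of $k$ ordinary balls of $\calF$. Identify $\calF^{(k)}$ with $\calF$ via the bijection $f_x^{(k)}\leftrightarrow f_x$. Then for every $C=\{c_1,\dots,c_k\}\in X^k$ and every $r\ge 0$,
$$
B_{\calF^{(k)}}(C,r)=\bigcup_{i=1}^{k} B_{\calF}(c_i,r),
$$
since $f_x^{(k)}(C)=\min_{c\in C}f_x(c)\le r$ precisely when $f_x(c_i)\le r$ for some $i$, i.e. $f_x\in B_\calF(c_i,r)$ for some $i$. This is the only idea in the proof; the remainder is counting.

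Next I would fix an arbitrary $\calH\subseteq\calF^{(k)}$ with $|\calH|=n\ge 2$ and regard it as a subfamily of $\calF$. Writing $s:=\sdim(\calF)$, \Cref{def:sdim} applied to $\calF$ with this subfamily $\calH$ gives that the collection of traces $\mathcal{B}:=\{\,B_\calF(c,\rho)\cap\calH:\ c\in X,\ \rho\ge0\,\}$ has size at most $n^{s}$. By the displayed identity, every trace $B_{\calF^{(k)}}(C,r)\cap\calH$ equals $\bigcup_{i=1}^{k}\bigl(B_\calF(c_i,r)\cap\calH\bigr)$, which is a union of at most $k$ members of $\mathcal{B}$; hence the number of distinct such traces is at most $|\mathcal{B}|^{k}\le n^{sk}=|\calH|^{sk}$. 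Since $\calH$ was arbitrary and $sk\ge 1$, the integer $t=sk$ witnesses the requirement of \Cref{def:sdim} for $\calF^{(k)}$, so $\sdim(\calF^{(k)})\le k\cdot\sdim(\calF)$.

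I do not anticipate a genuine obstacle: this is essentially a bookkeeping lemma. Two minor points deserve a sentence. First, one may assume $|C|=k$ throughout, as padding a smaller center set with repeated points leaves $f_x^{(k)}(C)$ unchanged. Second, the bound $|\mathcal{B}|^{k}$ is generous — it allows the $k$ constituent balls to use different radii, whereas a genuine trace of $\calF^{(k)}$ uses one common radius $r$ — which only helps the inequality. The same union-counting argument also reproves the classical pseudodimension bound for $k$-fold minima, but the clean inequality above is exactly what the analysis of \Cref{thm:sdim2coreset} needs, in combination with \Cref{lemma:alpha_approx}.
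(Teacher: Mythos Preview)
Your argument is correct and is exactly the standard proof of this fact: identify $\calF^{(k)}$ with $\calF$, observe that each $k$-ball is a union of $k$ ordinary balls, and bound the number of traces by $|\mathcal{B}|^k\le |\calH|^{k\cdot\sdim(\calF)}$. The paper itself does not give a proof at all --- it simply cites \cite[Lemma~6.5]{DBLP:conf/stoc/FeldmanL11} and \cite[Claim~6.1]{huang2018epsilon} --- and the argument in those references is precisely the union-of-balls counting you wrote down, so there is nothing to compare.
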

    We finish the proof of Theorem~\ref{thm:sdim2coreset}
    by combining Lemma~\ref{lemma:approx2coreset},
    Lemma~\ref{lemma:alpha_approx} and Lemma~\ref{lemma:ksdim2sdim}.
    \end{proof}

\subsection{Improved Coresets for \onezC in Low-dimensional Euclidean Spaces}

We focus on the case of $k = 1$, and the goal is to show the existence of an $\epsilon$-coreset for \onezC
whose dependence of $\epsilon^{-1}$ is sub-quadratic in $\mathbb{R}^d$ for constant $d$.
To this end, we need to replace Lemma~\ref{lemma:approx2coreset} with the following
improved $\alpha$-approximation bound which is obtained via discrepancy theory.
Crucially, when $d$ is small the exponent of $\epsilon$ is strictly smaller than $2$.
\begin{theorem}[{\cite[Theorem 4.10]{chazelle2001discrepancy}}]
    \label{thm:discrepancy} Assume the VC-dimension of $\calF^{(k)}$ is $\vcdim$.
    For every $0 < \alpha < 1$,
    there exists an $\alpha$-approximation for $\calF^{(k)}$ (defined in the proof of Lemma~\ref{lemma:approx2coreset})
    of size
    $O\left(\alpha^{-(2 - 2 / (\vcdim + 1))} (\log \frac{1}{\alpha})^{2 - 1 / (\vcdim + 1)}\right)$
    where the big-O notation hides a polynomial factor of $\vcdim$.
\end{theorem}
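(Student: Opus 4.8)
The statement is precisely the classical ``discrepancy beats uniform sampling'' bound for $\varepsilon$-approximations of a range space of bounded VC-dimension, instantiated for the abstract range space $(\calF^{(k)}, \mathcal{R})$ with $\mathcal{R} := \{ B_{\calF^{(k)}}(C, r) : C \in X^k,\ r \geq 0 \}$; the hypothesis is exactly that this range space has VC-dimension at most $\vcdim$. Since we only use this as a black box, the cleanest route is to cite \cite[Theorem 4.10]{chazelle2001discrepancy} verbatim. Below is the proof it rests on, as I would present it.

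The plan is to use \emph{iterated halving}. First I would observe that it suffices to work with a ground set of size $n = \poly(1/\alpha)$: a uniform sample of that size is already an $(\alpha/2)$-approximation of $\calF^{(k)}$ by the crude VC bound (Lemma~\ref{lemma:alpha_approx}), and an $\alpha_1$-approximation of an $\alpha_2$-approximation is an $(\alpha_1 + \alpha_2)$-approximation, so it is enough to refine this sample down to a subset of the target size while adding only $\alpha/2$ more error. Second, the halving step: given a ground set $Y$ with $|Y| = m$, a two-coloring $\chi : Y \to \{-1, +1\}$ of small discrepancy lets us keep the larger color class, obtaining $Y'$ with $|Y'| = \lceil m/2 \rceil$ such that for every range $R \in \mathcal{R}$, $\big| 2|R \cap Y'| - |R \cap Y| \big| \leq \operatorname{disc}_Y(\mathcal{R}) + 1$, i.e. the normalized measure of $R$ changes by $O\big(\operatorname{disc}_Y(\mathcal{R}) / m\big)$. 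Third, iterating $j$ times from $|Y_1| = n$ down to $|\mathcal{D}| = s \approx n / 2^j$, the total distortion of normalized range-measures is at most $\sum_{i=1}^{j} O\big(\operatorname{disc}_{Y_i}(\mathcal{R}) / |Y_i| \big)$ with $|Y_i| \approx n/2^{i-1}$.

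Now I would plug in the central discrepancy bound for range spaces of VC-dimension (equivalently, primal shatter exponent) at most $\vcdim$: $\operatorname{disc}_Y(\mathcal{R}) = O\big(|Y|^{1/2 - 1/(2\vcdim)} \cdot \polylog |Y|\big)$, proved by Spencer's partial-coloring / entropy method (or Matou\v{s}ek's geometric partitioning). Then the per-step relative error is $O\big(|Y_i|^{-1/2 - 1/(2\vcdim)} \polylog(1/\alpha)\big)$, and since these terms grow geometrically along the iteration (ratio $2^{1/2 + 1/(2\vcdim)} > 1$) the sum is, up to a constant, its last term, namely $O\big(s^{-1/2 - 1/(2\vcdim)} \polylog(1/\alpha)\big)$. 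Choosing $s$ so this is at most $\alpha/2$ forces $s = O\big(\alpha^{-1/(1/2 + 1/(2\vcdim))} \polylog(1/\alpha)\big) = O\big(\alpha^{-2\vcdim/(\vcdim+1)} \polylog(1/\alpha)\big) = O\big(\alpha^{-(2 - 2/(\vcdim+1))} \polylog(1/\alpha)\big)$, matching the claimed $\alpha$-exponent; a careful accounting of the $O(\log(1/\alpha))$ halving steps against the $\log^{1/2}$ loss in each coloring yields exactly the exponent $2 - 1/(\vcdim+1)$ on $\log(1/\alpha)$. The whole argument is purely existential — it only invokes the existence of a low-discrepancy coloring at each step — so there is no running-time claim to establish.

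The main obstacle is the discrepancy bound $\operatorname{disc} = O\big(n^{1/2 - 1/(2\vcdim)} \polylog n\big)$ itself for an abstract range space of bounded VC-dimension: that is the genuinely nontrivial ingredient, and it is where all of Chazelle's machinery (partial colorings, entropy bounds, and the trade-off between shatter-function growth and achievable discrepancy) is needed. By contrast, the halving reduction and the geometric-series summation are routine. For our paper it therefore suffices to invoke \cite[Theorem 4.10]{chazelle2001discrepancy} as stated, with $\vcdim$ being the VC-dimension of the ball range space $\mathcal{R}$ supplied by the application at hand.
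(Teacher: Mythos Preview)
Your proposal is correct, and in fact it goes beyond what the paper does: the paper simply cites \cite[Theorem~4.10]{chazelle2001discrepancy} as a black box with no proof sketch whatsoever. Your outline of the iterated-halving argument (reduce to a $\poly(1/\alpha)$-size ground set, repeatedly halve via low-discrepancy two-colorings, plug in the $O(n^{1/2-1/(2\vcdim)}\polylog n)$ discrepancy bound for range spaces of bounded VC-dimension, and sum the resulting geometric series) is the standard proof behind the cited theorem and is accurate as stated; for the purposes of this paper, the bare citation already suffices.
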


Here, the VC-dimension is a related notion to $\sdim$, and they are only up to
a logarithmic factor to each other (i.e., $\vcdim \leq \sdim \log \sdim$).
We specifically use the VC-dimension version of Theorem~\ref{thm:discrepancy}
instead of using the log-factor conversion from $\sdim$,
since any constant factor matters in our application.
To proceed, let $\calF := \{ f_x(c) = \|x - c\|_2 \}_x$ simply represent the $\ell_2$ distance function,
then it is well known that $\vcdim(\calF) \leq d + 1$.
Therefore, following the proof of Theorem~\ref{thm:sdim2coreset},
plug in this $\vcdim(\calF)$ bound into Theorem~\ref{thm:discrepancy},
and combine with Lemma~\ref{lemma:approx2coreset},
we obtain the following theorem.
\begin{theorem} \label{thm:SR}
 For every integer $d \geq 1$, $c \in \mathbb{R}^d$, $r > 0$, ring data set
 $R \subseteq  \ring(c, r, 2r) \subseteq \mathbb{R}^d$,
 there is an $(\epsilon, \cost_z(R, c))$-coreset for \onezC on $R$,
 with size $\tilde{O}_z(\epsilon^{-(2 - 2 / (d + 1))})$ where the big-O notation hides a polynomial factor of $d$.
\end{theorem}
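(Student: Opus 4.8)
The plan is to re-run the argument of \Cref{thm:sdim2coreset} specialised to $k=1$, but to replace the generic VC-type uniform-approximation bound of \Cref{lemma:alpha_approx} --- which pays a factor $\alpha^{-2}$ --- by the sharper discrepancy-based construction of \Cref{thm:discrepancy}, whose dependence on $1/\alpha$ is sub-quadratic once the VC-dimension is bounded by a constant. As the functional representation I would simply take the exact Euclidean distance, $\calF := \{\, f_x : c \mapsto \|x-c\|_2 \,\}_{x \in R}$; it trivially satisfies $\dist(x,c) \le f_x(c) \le (1+\epsilon)\dist(x,c)$, so the hypotheses of \Cref{thm:sdim2coreset} and of \Cref{lemma:approx2coreset} hold. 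Because $k=1$, the $k$-extension $\calF^{(1)}$ coincides with $\calF$, we never invoke \Cref{lemma:ksdim2sdim}, and no factor of $k$ ever enters.

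In detail, I would proceed in three steps. First, observe that for $k=1$ a center $C$ is a single point $c$, and the range space $\{\, B_{\calF^{(1)}}(\{c\},r) : c \in \RR^d,\ r \ge 0 \,\}$ from the proof of \Cref{thm:sdim2coreset} is exactly the trace on $R$ of the family of Euclidean balls, since $B_{\calF}(c,r) = \{\, x \in R : \|x-c\|_2 \le r \,\}$; here one uses the classical fact $\vcdim(\calF) \le d+1$, which holds uniformly in $|R|$ and in where $c$ lies relative to $R$. Second, apply \Cref{thm:discrepancy} to $\calF^{(1)}$ with this VC-dimension bound and with approximation parameter $\alpha := \epsilon / O_z(1)$ --- the value consumed by \Cref{lemma:approx2coreset} --- to obtain a subset $\mathcal{D} \subseteq \calF$ that is an $\alpha$-approximation of $\calF$ and whose size is $\tilde{O}_z(\epsilon^{-(2-2/(d+1))})$ with a prefactor polynomial in $d$ (the $\log(1/\epsilon)$ and $\poly(d)$ factors being absorbed into $\tilde{O}_z(\cdot)$). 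Third, set $D := \{\, x : f_x \in \mathcal{D} \,\} \subseteq R$, reweight each point of $D$ to $|R|/|D|$, and then quote \Cref{lemma:approx2coreset} verbatim: its proof uses only $\Diam(R)\le 2r$ and charges the incurred slack to $\cost_z(R,c) \ge |R|\cdot r^z$, and it is already phrased for general $z$, so it yields that $D$ is an $(\epsilon,\cost_z(R,c))$-coreset for \onezC on $R$. A final rescaling of $\epsilon$ by the absorbed $O_z(1)$ constant closes the argument.

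The only genuine subtlety --- and the step I expect to be the main (minor) obstacle --- is the interface between \Cref{thm:discrepancy} and \Cref{lemma:approx2coreset}: one must check that the ``VC-dimension of $\calF^{(k)}$'' referenced in \Cref{thm:discrepancy} is, for $k=1$, literally that of the ball system on $R$ and hence at most $d+1$ regardless of the data, and that the $\alpha$-approximation \Cref{thm:discrepancy} delivers is of the precise quality ($\epsilon/O_z(1)$) that \Cref{lemma:approx2coreset} requires. Everything else is a line-by-line transcription of the proof of \Cref{thm:sdim2coreset} with \Cref{lemma:alpha_approx} swapped for \Cref{thm:discrepancy}; crucially no new error term appears, since --- exactly as there --- the additive slack is absorbed into $\cost_z(R,c)$, which is genuinely large precisely because $c$ need not be anywhere near an optimal center for $R$.
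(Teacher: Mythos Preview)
Your proposal is correct and matches the paper's own proof essentially line for line: the paper also takes $\calF$ to be the exact Euclidean distance functions, invokes the classical bound $\vcdim(\calF)\le d+1$, and then reruns the proof of \Cref{thm:sdim2coreset} with \Cref{lemma:alpha_approx} replaced by the discrepancy bound of \Cref{thm:discrepancy}, concluding via \Cref{lemma:approx2coreset}. Your observation that $k=1$ makes $\calF^{(1)}=\calF$ (so \Cref{lemma:ksdim2sdim} is not needed) is a correct and useful clarification that the paper leaves implicit.
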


Finally, combining Theorem~\ref{thm:SR} with Theorem~\ref{thm:onez},
and the general reduction in Section~\ref{sec:additive_err},
we obtain the following corollary.
\begin{corollary}
    \label{cor:lowdim}
    For every integer $d \geq 1$, data set $P \subseteq \mathbb{R}^d$,
    there exists an $\epsilon$-coreset of size
    $\tilde{O}_z(\epsilon^{-(2 - 2 / (d + 1))})$ for \onezC on $P$ where the big-O notation hides a polynomial factor of $d$. In particular, if $P\subseteq \mathbb{R}^2$ there exists an $\tilde{O}(\epsilon^{-1.5})$-sized coreset for \oneMedian on $P$ and if $P\subseteq \mathbb{R}^3$ there exists an $\tilde{O}(\epsilon^{-1.6})$-sized coreset for \oneMedian on $P$.
\end{corollary}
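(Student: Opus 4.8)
The plan is to chain together the two black boxes \Cref{thm:onez} and \Cref{thm:SR} and then remove the additive slack via the reduction of \Cref{sec:additive_err}. First I would fix a single center $c\in\mathbb{R}^d$ that is an $O_z(1)$-approximate solution to \onezC on $P$; for $k=1$ this is immediate — e.g.\ taking $c$ to be the point of $P$ minimizing $\cost_z(P,\{c\})$ already gives $\cost_z(P,c)\le 2^{O(z)}\cdot\OPT_1$, where $\OPT_1:=\min_{s\in\mathbb{R}^d}\cost_z(P,\{s\})$. Feeding $(P,c,\epsilon)$ to the algorithm of \Cref{thm:onez} yields a $2$-partition $\{W,Z\}$ of $P$, a $3$-point weighted set $S_Z\subseteq Z$ that is an $(\epsilon,\cost_z(P,c))$-coreset for \onezC on $Z$, and a partition of $W$ into $m=O(\log\tfrac{z}{\epsilon})$ rings $R_1,\dots,R_m$ with $R_i\subseteq\ring(c,r_i,2r_i)$.

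Next I would apply \Cref{thm:SR} to each ring $R_i$ individually, obtaining a weighted subset $S_i\subseteq R_i$ of size $\tilde O_z(\epsilon^{-(2-2/(d+1))})$ that is an $(\epsilon,\cost_z(R_i,c))$-coreset for \onezC on $R_i$, and then set $S:=S_Z\cup\bigcup_{i=1}^m S_i$. Since $Z,R_1,\dots,R_m$ partition $P$, for every center $s\in\mathbb{R}^d$ summing the per-part guarantees gives
\[
\bigl|\cost_z(P,s)-\cost_z(S,s)\bigr|\;\le\;\epsilon\Bigl(\cost_z(Z,s)+\sum_{i}\cost_z(R_i,s)\Bigr)+\epsilon\cost_z(P,c)+\epsilon\sum_{i}\cost_z(R_i,c),
\]
and using $\cost_z(Z,s)+\sum_i\cost_z(R_i,s)=\cost_z(P,s)$ together with $\sum_i\cost_z(R_i,c)=\cost_z(W,c)\le\cost_z(P,c)$ this is at most $\epsilon\cost_z(P,s)+2\epsilon\cost_z(P,c)$, i.e.\ $S$ is an $(O(\epsilon),\cost_z(P,c))$-coreset for \onezC on $P$. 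The key point is that the additive budgets $\cost_z(R_i,c)$ of the $m$ rings telescope to at most $\cost_z(P,c)$, so no union bound over the $m$ rings is needed in the error; consequently the $\log\tfrac{z}{\epsilon}$ factor appears only in the size, which is $|S|\le 3+m\cdot\tilde O_z(\epsilon^{-(2-2/(d+1))})=\tilde O_z(\epsilon^{-(2-2/(d+1))})$.

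Finally I would invoke the additive-to-multiplicative reduction of \Cref{sec:additive_err}: because $c$ is $O_z(1)$-approximate, $\cost_z(P,c)\le 2^{O(z)}\OPT_1\le 2^{O(z)}\cdot\cost_z(P,\{s\})$ for \emph{every} single center $s$ (as $\OPT_1$ is the minimum over single centers), so running the construction above with parameter $\epsilon_0:=\epsilon/2^{O(z)}$ turns the additive term $\epsilon_0\cost_z(P,c)$ into a relative $O(\epsilon)$ error, yielding a genuine $\epsilon$-coreset of the claimed size $\tilde O_z(\epsilon^{-(2-2/(d+1))})$. Plugging $z=1$ and $d=2$ (resp.\ $d=3$) into the exponent — using $\vcdim(\calF)\le d+1$ for $\ell_2$-balls inside \Cref{thm:discrepancy} exactly as in \Cref{thm:SR} — then gives the stated $\tilde O(\epsilon^{-1.5})$ (resp.\ $\tilde O(\epsilon^{-1.6})$) bounds for \oneMedian.

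As for the main obstacle: essentially all the difficulty has already been absorbed into \Cref{thm:onez} (which replaces the $\Theta(\log|P|)$ rings of Chen-style decompositions by only $O(\log\tfrac{z}{\epsilon})$ of them) and \Cref{thm:SR} (the discrepancy-based $\alpha$-approximation that beats $\epsilon^{-2}$ when $d$ is constant), so the step here is bookkeeping. The one place to be careful is the error accounting above: one must verify that the additive budgets of the $O(\log\tfrac{z}{\epsilon})$ rings of $W$ together with the part $Z$ sum to $O(\cost_z(P,c))$ rather than $O(\log\tfrac{z}{\epsilon})\cdot\cost_z(P,c)$ — which is precisely why the framework charges against $\cost_z(P,c)$ (a quantity that is itself $2^{O(z)}\OPT_1$), rather than against each ring's own optimal cost.
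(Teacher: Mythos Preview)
Your proposal is correct and follows exactly the route the paper sketches: apply \Cref{thm:onez} to split $P$ into $Z$ (handled by the $3$-point coreset) and $O(\log\tfrac{z}{\epsilon})$ rings, apply \Cref{thm:SR} to each ring, sum the additive budgets (which telescope to $O(\cost_z(P,c))$), and then invoke the additive-to-multiplicative reduction of \Cref{sec:additive_err} using an $O_z(1)$-approximate center $c$. You have in fact spelled out more carefully than the paper does the key bookkeeping step, namely that $\sum_i\cost_z(R_i,c)\le\cost_z(P,c)$ so no $\log\tfrac{z}{\epsilon}$ factor leaks into the error.
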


\subsection{Coresets for Wasserstein Barycenter}
\paragraph{Wasserstein distance on a general metric space.}
Suppose $M(X, \dist)$ is an underlying metric,
and we define the $p$-Wasserstein distance with respect to $M$.
Let integer $\ell \geq 1$ denote the size of the support of a point in the Wasserstein metric/distance.
Let $\calX := X^\ell$ be the set of $\ell$-tuples of $X$.
The $p$-Wasserstein distance, denoted $\dWSp : \calX \times \calX \to \mathbb{R}_+$,
is defined on $\calX$, such that $\dWSp(S, T)$ is the cost of the min-cost matching between $S, T$,
where the costs are measured in $\dist$ to the power of $p$.
Formally, let $\Gamma$ be the set of all bijection between $\ell$ elements
(which may be interpreted as permutations of $[\ell]$),
and for $\pi \in \Gamma$, for $S, T \in \calX$,
let $\pi_{S,T}$ be the induced bijection from $S$ to $T$.
Then,
\begin{align*}
    \dWSp(S, T) := \min_{\pi \in \Gamma}{ \left(\sum_{x \in S}{\dist^p(x, \pi_{S, T}(x))}\right)^{1/p} },
\end{align*}

\paragraph{Wasserstein barycenter.}
The $p$-Wasserstein barycenter problem is \oneMedian on the metric $(\calX, \dWSp)$,
and the objective function is
\begin{align*}
    \forall c \in \calX,\qquad \cost^{(p)}_{\mathrm{WS}}(\calP, c) :=
    \sum_{S \in \calP}{\dWSp(S, c)}.
\end{align*}

\begin{theorem}
    \label{thm:sdim_p_ws}
    Consider $\calP \subseteq \calX$,
    and let $P := \bigcup_{S \in \calP}{S} \subseteq X$.
    If there exists $\calF = \{ f_x  : X \to \mathbb{R}_+ \}_{x \in P}$
    and $t \geq 1$ such that
    \begin{align*}
        \forall x \in P, y \in X, \qquad
        \dist(x, y) \leq f_x(y) \leq t \cdot \dist(x, y),
    \end{align*}
    then for every $p > 0$, there exits
    $\calF' = \{ f'_S : \calX \to \mathbb{R}_+ \}_{S \in \calP}$,
    such that
    \begin{align*}
        \forall S \in \calP, T \in \calX, \qquad
        \dWSp(S, T) \leq f'_S(T) \leq t \cdot \dWSp(S, T),
    \end{align*}
    and that
    \begin{align*}
        \sdim(\calF') \leq (\sdim(\calF) + 1) \cdot \ell.
    \end{align*}
\end{theorem}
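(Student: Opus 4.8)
The plan is to take the obvious surrogate: for $S=(x_1^S,\dots,x_\ell^S)\in\calP$ define
$f'_S(T):=\min_{\pi\in\Gamma}\big(\sum_{i}f_{x_i^S}(\pi_{S,T}(x_i^S))^p\big)^{1/p}$,
i.e.\ the min-cost matching cost computed with the $f$-functions in place of $\dist$. The two-sided bound is then immediate. For the lower bound, every matching $\pi$ satisfies $\big(\sum_i f_{x_i^S}(\pi_{S,T}(x_i^S))^p\big)^{1/p}\ge\big(\sum_i\dist(x_i^S,\pi_{S,T}(x_i^S))^p\big)^{1/p}$ because $f_x\ge\dist(x,\cdot)$ pointwise (and $t\mapsto t^p$ is monotone for $p>0$), and taking the minimum over $\pi$ preserves this, so $f'_S(T)\ge\dWSp(S,T)$. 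For the upper bound, plug the $\dist$-optimal matching $\pi^*$ (realizing $\dWSp(S,T)$) into the definition of $f'_S$ and use $f_x\le t\cdot\dist(x,\cdot)$: $f'_S(T)\le\big(\sum_i f_{x_i^S}(\pi^*(x_i^S))^p\big)^{1/p}\le t\big(\sum_i\dist(x_i^S,\pi^*(x_i^S))^p\big)^{1/p}=t\cdot\dWSp(S,T)$.

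The real content is $\sdim(\calF')\le(\sdim(\calF)+1)\ell$. I would fix $\calH\subseteq\calF'$ with $m:=|\calH|\ge 2$ and bound the number of distinct traces $\{\,B_{\calF'}(T,r)\cap\calH: T\in\calX,\ r\ge 0\,\}$ by $m^{(\sdim(\calF)+1)\ell}$. Writing each $S$ with $f'_S\in\calH$ as $(x_1^S,\dots,x_\ell^S)$, let $\calF_j:=\{f_{x_j^S}: f'_S\in\calH\}\subseteq\calF$ be the $j$-th coordinate family; each ball range space $B_{\calF}(\cdot,\cdot)$ restricted to $\bigcup_j\calF_j$ (which has size at most $m\ell$) realizes at most $(m\ell)^{\sdim(\calF)}$ distinct traces by definition of $\sdim(\calF)$. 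The membership test is $f'_S\in B_{\calF'}(T,r)\iff\exists\pi\colon\sum_i f_{x_i^S}(y_{\pi(i)})^p\le r^p$ where $T=(y_1,\dots,y_\ell)$.

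I would then carry out the count in two stages. First, for a \emph{fixed} center $T$ the family $r\mapsto B_{\calF'}(T,r)\cap\calH$ is nondecreasing, so it takes at most $m+1$ values; hence the global count is at most $(m+1)$ times the number of distinct sublevel structures of $S\mapsto f'_S(T)$ on $\calH$ realized as $T$ varies. Second, and this is the heart of the matter, I would aim to show that this sublevel structure is pinned down by only $O(\ell)$ pieces of combinatorial data: for each center coordinate $j\in[\ell]$, the ball-trace structure of $B_{\calF}(y_j,\cdot)$ on $\bigcup_i\calF_i$ (each such coordinate contributing one factor $\approx m^{\sdim(\calF)}$), together with, for each coordinate, a discretized allocation of the budget $r^p$ among the $\ell$ summands (each contributing one extra factor $\approx m$, which is the source of the ``$+1$'' in $\sdim(\calF)+1$). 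Fixing once and for all a consistent tie-breaking rule for the optimal matching is what should let one ``read off'' the optimal matching coordinate-by-coordinate from this data, so that the $\ell!$ permutations never get enumerated. Multiplying the per-coordinate contributions yields $m^{(\sdim(\calF)+1)\ell}$ after absorbing the $\log$-factors in the $\ell$'s.

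The main obstacle is exactly this second stage. For the $\min$-extension $\calF^{(k)}$ one has $B_{\calF^{(k)}}(C,r)=\bigcup_{c\in C}B_{\calF}(c,r)$, a bounded \emph{union of balls}, which makes $\sdim(\calF^{(k)})\le k\,\sdim(\calF)$ essentially automatic; here, by contrast, $\dWSp$ entangles a $\min$ over $\ell!$ permutations with a $\sum$ of $p$-th powers, so $B_{\calF'}(T,r)$ is \emph{not} a bounded union of balls, and a crude union bound over permutations would blow the dimension up by a factor $\ell!$. The plan's two devices are meant to sidestep this: the nested-in-$r$ observation turns a thresholded sum into a monotone one-parameter family, so the sum costs only ``$+1$'' per coordinate rather than touching every summand; and the fixed tie-break plus coordinate-wise read-off keeps the dependence on the matching at $O(\ell)$ rather than $\ell!$ or even $\ell^2$. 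Rigorously verifying the read-off step — that the $O(\ell)$-fold coordinate data genuinely determines $B_{\calF'}(T,r)\cap\calH$ — is the delicate calculation, and I expect that is where most of the work lies.
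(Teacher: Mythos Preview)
Your construction of $f'_S$ and the distortion argument are exactly what the paper does. The gap is in the shattering-dimension step, and it stems from a miscalculation rather than a missing idea: you dismiss ``a crude union bound over permutations'' because it ``would blow the dimension up by a factor $\ell!$'', but a factor of $\ell!$ in the \emph{trace count} translates to only an \emph{additive} $\log_{|\calH|}(\ell!)\le\ell$ in the shattering dimension, provided $|\calH|>\ell$. That additive $\ell$ is precisely the ``$+1$'' in $(\sdim(\calF)+1)\ell$.

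Accordingly, the paper's proof is far simpler than your plan. After disposing of the trivial case $|\calH|\le\ell$ (where $2^{|\calH|}\le|\calH|^\ell$), for $|\calH|>\ell$ one simply union-bounds over $\pi\in\Gamma$; for each fixed $\pi$, one relaxes the constraint $\sum_i (f_{x_i}(\cdot))^p\le r^p$ to a conjunction $\bigwedge_i f_{x_i}(\cdot)\le r_i$ with \emph{independent} thresholds $r_1,\dots,r_\ell$ (this only enlarges the family of subsets). The resulting family is a product of $\ell$ ball range spaces from $\calF$, each restricted to a subset of size at most $|\calH|$ and hence contributing at most $|\calH|^{\sdim(\calF)}$ traces. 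Multiplying gives $\ell!\cdot|\calH|^{\sdim(\calF)\cdot\ell}\le\ell^\ell\cdot|\calH|^{\sdim(\calF)\cdot\ell}\le|\calH|^{(\sdim(\calF)+1)\ell}$.

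Your two-stage scheme is therefore unnecessary, and the part you yourself identify as delicate and unverified---that a fixed tie-broken optimal matching lets one ``read off'' $B_{\calF'}(T,r)\cap\calH$ from $O(\ell)$ coordinate data---is also suspect as stated: the optimal matching $\pi$ varies with the data point $S$ (not just with the center $T$), so different members of $\calH$ realize the minimum at different permutations, and no single tie-breaking rule at $T$ pins them all down simultaneously. The paper sidesteps this entirely by paying the $\ell!$ up front and absorbing it into the exponent.
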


\begin{proof}
    For $S \in \calP$, define
    \begin{align*}
        f'_S(T) := \min_{\pi \in \Gamma} {
            \left(\sum_{x \in S}{(f_x(\pi_{S, T}(x)))^p}\right)^{1 / p}
        }.
    \end{align*}
    By the distortion guarantee of $f_x(\cdot)$ with respect to $\dist(x, \cdot)$,
    we have
    \begin{align*}
        \forall S \in \calP, T \in \calX, \qquad
        \dWSp(S, T) \leq f'_S(T) \leq t \cdot \dWSp(S, T).
    \end{align*}

    Now we analyze the shattering dimension of $\calF'$.
    Fix $\calH \subseteq \calF'$ with $|\calH| \geq 2$.
    It suffices to give an upper bound for
    \begin{align*}
        |\{ B_{\calF'}(S, r) \cap \calH : S \in \calX, r \geq 0  \}|.
    \end{align*}
    If $|\calH| \leq \ell$, then
        $\left| \left\{
            B_{\calF'}(S, r) \cap \calH : S \in \calX, r\geq 0
        \right\} \right| \leq 2^{|\calH|} \leq 2^\ell \leq |\calH|^\ell$,
    which implies $\sdim(\calF') \leq \ell$.

    Otherwise, $|\calH| > \ell$, and we use the following argument.
    \begin{align*}
        &\quad\left| \left\{
            B_{\calF'}(S, r) \cap H : S \in \calX, r\geq 0
        \right\} \right| \\
        &=\left| \left\{
            \{ f'_T \in \calH : f'_T(S) \leq r \}: S\in \calX, r \geq 0
        \right\} \right| \\
        &=\left| \left\{
            \left\{ f'_T \in \calH : \min_{\pi \in \Gamma}
                { \sum_{x \in S}{(f_x(\pi_{S, T}(x)))^p} \leq r^p } \right\}
                : S \in \calX, r \geq 0
        \right\}\right| \\
        &\leq \sum_{\pi \in \Gamma}{
            \left|\left\{
                \left\{
                    f'_T \in \calH : \sum_{x \in S}{(f_x(\pi_{S, T}(x)))^p} \leq r^p
                \right\} : S \in \calX, r \geq 0
            \right\}\right|
        } \\
        &\leq \sum_{\pi \in \Gamma}{
            \left|\left\{
                \left\{
                    f'_T \in \calH : \bigwedge_{x_i \in S}{ (f_{x_i}(\pi_{S, T}(x_i)))^p \leq r_i^p }, \sum_{i} r_i^p = r^p
                \right\} : S \in \calX, r \geq 0
            \right\}\right|
        } \\
        &\leq \sum_{\pi \in \Gamma}{
            \left|\left\{
                \left\{
                    f'_T \in \calH : \bigwedge_{x_i}{ f_{x_i}(\pi_{S, T}(x_i)) \leq r_i }
                \right\} :  \forall i \in [\ell], x_i \in X \land r_i \geq 0
            \right\}\right|
        } \\
        &\leq |\Gamma| \cdot |\calH|^{\sdim(\calF) \cdot \ell}
        \leq \ell^\ell \cdot |\calH|^{\sdim(\calF) \cdot \ell}.
    \end{align*}
    Since we assume $|H| > \ell$, we have
    \begin{align*}
        \ell^\ell \cdot |\calH|^{\sdim(\calF) \cdot \ell}
        \leq |\calH|^{(\sdim(\calF) + 1) \cdot \ell},
    \end{align*}
    which concludes the proof.
\end{proof}

Combining Theorem~\ref{thm:sdim_p_ws} with Theorem~\ref{thm:onez} and Theorem~\ref{thm:sdim2coreset},
we conclude the following coreset bound for $p$-Wasserstein barycenter problem
on the Euclidean $\mathbb{R}^d$ metric space.
\begin{theorem}\label{thm:barycenter}
    There is an algorithm that given $\ell, p > 0$, $0 < \epsilon < 1$,
    integer $d \geq 1$, $\calP \subseteq (\mathbb{R}^d)^\ell$ and $ c \in (\mathbb{R}^d)^\ell$,
    computes a weighted subset $\mathcal{S} \subseteq \calP$
    such that the size of $S$ is bounded by $\tilde{O}(\epsilon^{-2}d \ell)$, which is independent of $p$,
    and $\mathcal{S}$ is an $(\epsilon, \cost^{(p)}_{\mathrm{WS}}(\calP, c))$-coreset for $p$-Wasserstein barycenter. Moreover, the algorithm 
    runs in time $\tilde{O}(|\mathcal{P}| \cdot T_{\dWSp})$, where $T_{\dWSp}$ is the time for
    evaluating the distance between a pair of points in $(\mathbb{R}^d)^\ell$.
\end{theorem}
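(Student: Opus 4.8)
The plan is to derive \Cref{thm:barycenter} by running the ring-reduction of \Cref{thm:onez} on the metric space $(\calX, \dWSp)$ and handling each resulting ring with the uniform-sampling coreset of \Cref{thm:sdim2coreset}, where the crucial input is the shattering-dimension bound of \Cref{thm:sdim_p_ws}. First, observe that the $p$-Wasserstein barycenter objective $\cost^{(p)}_{\mathrm{WS}}(\calP, \cdot) = \sum_{S\in\calP} \dWSp(S, \cdot)$ is precisely the \onezC cost with $z=1$ on $(\calX, \dWSp)$; for $p \ge 1$ this is a genuine metric, and for $0 < p < 1$ the generalized triangle inequalities used by the framework still hold up to constant factors, so the machinery applies with $k=1$ and $z=1$. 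Next, set $P := \bigcup_{S\in\calP} S \subseteq \RR^d$ and take the trivial functional representation $\calF := \{ f_x(y) := \|x - y\|_2 \}_{x\in P}$, for which $\sdim(\calF) = O(d)$ since the range space of Euclidean balls in $\RR^d$ has VC-dimension $d+1$. Feeding $\calF$ into \Cref{thm:sdim_p_ws} with distortion $t = 1$ yields a lifted class $\calF' = \{ f'_S : \calX \to \RR_+\}_{S\in\calP}$ with $f'_S = \dWSp(S, \cdot)$ exactly and, crucially, $\sdim(\calF') \le (\sdim(\calF) + 1)\ell = O(d\ell)$ with no dependence on $p$.

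Now apply the algorithm of \Cref{thm:onez} to $\calP$ in the metric $(\calX, \dWSp)$, with the given center $c$ and error parameter $\epsilon' := \Theta(\epsilon / \log(1/\epsilon))$. This produces a $2$-partition $\{W, Z\}$ of $\calP$, where $W$ is covered by $m = O(\log(1/\epsilon))$ rings $R_1, \dots, R_m$ with $R_i \subseteq \ring(c, r_i, 2r_i)$, together with a weighted $3$-point set $S_Z \subseteq Z$ that is an $(\epsilon', \cost^{(p)}_{\mathrm{WS}}(\calP, c))$-coreset for \onezC on $Z$. For each ring $R_i$, invoke \Cref{thm:sdim2coreset} with $k = 1$, $z = 1$, the restriction of $\calF'$ to $R_i$ (whose distortion requirement is trivially met since $f'_S = \dWSp(S, \cdot)$), and failure probability $1/(3m)$; this returns a uniform sample $D_i \subseteq R_i$, reweighted by $|R_i|/|D_i|$, of size $\tilde{O}((\epsilon')^{-2}) \cdot \sdim(\calF') \cdot \log(3m) = \tilde{O}(\epsilon^{-2} d \ell)$ that, with probability $\ge 1 - 1/(3m)$, is an $(\epsilon', \cost^{(p)}_{\mathrm{WS}}(R_i, c))$-coreset for \onezC on $R_i$; note $\cost^{(p)}_{\mathrm{WS}}(R_i, c) \le \cost^{(p)}_{\mathrm{WS}}(\calP, c)$.

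Finally, set $\mathcal{S} := S_Z \cup \bigcup_{i=1}^m D_i \subseteq \calP$, so $|\mathcal{S}| \le 3 + m \cdot \tilde{O}(\epsilon^{-2} d \ell) = \tilde{O}(\epsilon^{-2} d \ell)$. Since $\{Z, R_1, \dots, R_m\}$ partitions $\calP$, summing the per-part additive-error guarantees gives, for every query $c' \in \calX$, total error at most $\epsilon'\bigl(\cost^{(p)}_{\mathrm{WS}}(\calP, c') + (m+1)\,\cost^{(p)}_{\mathrm{WS}}(\calP, c)\bigr)$, which by the choice $\epsilon' = \Theta(\epsilon / \log(1/\epsilon))$ and $m = O(\log(1/\epsilon))$ is at most $\epsilon\bigl(\cost^{(p)}_{\mathrm{WS}}(\calP, c') + \cost^{(p)}_{\mathrm{WS}}(\calP, c)\bigr)$ --- i.e., exactly an $(\epsilon, \cost^{(p)}_{\mathrm{WS}}(\calP, c))$-coreset --- with overall success probability $\ge 2/3$ by a union bound over the $m$ rings (and this can be boosted to $1 - \delta$ at the cost of an extra $\log \delta^{-1}$ factor absorbed into $\tilde{O}$). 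For the running time, \Cref{thm:onez} uses $\tilde{O}(|\calP|)$ operations and the sampling step is cheaper, but each distance evaluation costs $T_{\dWSp}$, for a total of $\tilde{O}(|\calP| \cdot T_{\dWSp})$. The delicate points are (i) confirming that the ring-reduction and uniform-sampling lemmas go through on $(\calX, \dWSp)$ --- immediate for $p \ge 1$, needing only a constant-factor adjustment of the triangle-type inequalities for $0 < p < 1$ --- and (ii) tracking the additive error across the $O(\log(1/\epsilon))$ rings, which is exactly why the sub-routines must be run at scale $\epsilon' = \Theta(\epsilon / \log(1/\epsilon))$; the genuine technical heavy lifting, the $p$-independent bound $\sdim(\calF') = O(d\ell)$, is already supplied by \Cref{thm:sdim_p_ws}.
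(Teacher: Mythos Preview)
Your proof is correct and follows exactly the route the paper takes: apply the ring reduction of \Cref{thm:onez} on $(\calX,\dWSp)$ with $k=z=1$, handle each ring by the uniform-sampling coreset of \Cref{thm:sdim2coreset}, and feed in the shattering-dimension bound $\sdim(\calF')=O(d\ell)$ supplied by \Cref{thm:sdim_p_ws} with the Euclidean $\sdim(\calF)=O(d)$. One minor point: the rescaling $\epsilon'=\Theta(\epsilon/\log(1/\epsilon))$ is unnecessary, because the additive error contributed by ring $R_i$ is $\epsilon\cdot\cost^{(p)}_{\mathrm{WS}}(R_i,c)$ and these sum over $i$ to at most $\epsilon\cdot\cost^{(p)}_{\mathrm{WS}}(\calP,c)$ rather than $(m{+}1)\,\epsilon\cdot\cost^{(p)}_{\mathrm{WS}}(\calP,c)$ --- so running everything at parameter $\epsilon$ already gives a $(2\epsilon,\cost^{(p)}_{\mathrm{WS}}(\calP,c))$-coreset; either way the extra log factors are absorbed by the $\tilde{O}$.
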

\begin{remark}
    \label{remark:p_ws}
    We note that by combining with the iterative size reduction technique
    in a way similar as in the proof of Theorem~\ref{thm:ring_fair},
    one can improve the coreset bound in the above Theorem~\ref{thm:barycenter}
    to $\tilde{O}(\epsilon^{-4} \ell)$.
    Furthermore, since our shattering dimension bound works for a general metric space,
    we can also obtain small coresets for $p$-Wasserstein barycenter on an arbitrary underlying metric,
    as long as there exists a set of functions $\calF$ with
    low distortion and low shattering dimension for the underlying metric,
    as required in Theorem~\ref{thm:sdim_p_ws}.
\end{remark}

\subsection{More Coresets for \kMedian via Uniform Shattering Dimension}
\label{sec:more}

We present two more results that can be obtained
by plugging in known uniform shattering dimension bound into our framework.
Interestingly, whether the weighted shattering dimensions of both metric spaces are bounded is still open and in previous works~\cite{BJKW21,BR22} much effort has been put in bypassing it.

\paragraph{Clustering for minor-excluded graphs.} When the metric space $M=(X,d)$ is the shortest path metrics of an $H$-minor free graph, the weighted shattering dimension of the ball range space associated with $M$ is not known to be bounded and \cite{BJKW21} applied alternative approaches to obtain an $\epsilon$-coresets for \kMedian with size $\poly(k/\epsilon)\cdot f(|H|)$ for some at least doubly-exponential $f$.
However, the uniform shattering dimension bound of $M$ is known to be merely $O(|H|)$~\cite{DBLP:journals/dm/BousquetT15}
and thus combing with our \Cref{thm:reduct_ring} and \Cref{lemma:approx2coreset},
we immediately obtain an $\epsilon$-coreset of size $\poly(k/\epsilon)\cdot |H|$.
Comparing with the known result \cite{BJKW21},
our result is much simplified and has greatly improved the size dependency on the size of excluded minor.

\paragraph{Clustering for polygonal curves under Fr\'{e}chet distance.}
Consider the metric space of Euclidean polygonal curves
$M=(\mathbb{X}_m^d,\dist)$ where $\mathbb{X}_m^d:=(\mathbb{R}^d)^m$
is the set of polygonal curves in $\mathbb{R}^d$ with complexity at most $m$ and $\dist_F$ is (continuous) Fr\'{e}chet distance.
Let $P$ be a data set in $M$, the \klMedian problems aims to minimize
$$
\cost_F(P,S)=\sum_{p\in P} \min_{s\in S}\dist_F(p,s)
$$
over $S\subseteq \mathbb{X}_{\ell}^d,|S|=k$.
Note that $\ell$ is often smaller than $m$.

Define the range space
$\mathcal{R}_{\ell,d}=\{B(p,r)\cap \mathbb{X}_m^d \mid r>0,p\in \mathbb{X}_\ell^d\}$.
It is shown in
\cite[Theorem 8.5]{DBLP:journals/dcg/DriemelNPP21} that the shattering dimension of $\mathcal{R}_{\ell,d}$ is bounded by $O\big(d^2 \ell^2\log ( m)\big)$.
Thus combing with our \Cref{thm:reduct_ring} and \Cref{thm:sdim2coreset}, we obtain an $\epsilon$-coreset for \kMedian on $P$ with size $\tilde{O}\left(\epsilon^{-3}k^3d^2 \ell^2\log(m)\right)$.
Compared with recent results, our coreset has size independent of $n$ which improves over the $O(\log n)$ dependence in~\cite{BR22},
and has only logarithmic dependence in $m$ instead of $\poly(m)$ as in~
\cite{DBLP:journals/corr/abs-2104-12141}.

Finally, similar results can also be obtained for clustering under Hausdorff distance,
since a similar shattering dimension bound for Hausdorff distance is obtained in~\cite[Theorem 7.7]{DBLP:journals/dcg/DriemelNPP21} as well.
 
    \addcontentsline{toc}{section}{References}
    \bibliographystyle{alphaurl}
    \begin{small}	
        \bibliography{ref}
    \end{small}

    \begin{appendices}
      
\section{Implying Coresets without Additive Error}
\label{sec:additive_err}
By a standard argument (used in e.g.,~\cite{Chen09}), one can show that the guarantees of assignment-preserving $O(\epsilon, \cost_z(P, c))$-coreset
already suffices to imply assignment-preserving $\epsilon$-coresets \emph{without} additive error.
We sketch the argument here.

Assume that an oracle $\mathcal{A}(P, c, \epsilon)$ computes
an assignment-preserving $(\epsilon, \cost_z(P, c))$-coreset  for \kzC (see Definition~\ref{def:assign:add}) on $P$
for every $P \subseteq X$ and center $c \in X$,
and suppose the size of the coreset is $T(\epsilon, k, z)$.
Then this oracle can be used to efficiently compute an $\epsilon$-coreset
for weighted set $P$ without the additive error.

Specifically, one starts with finding an $(\alpha, \beta)$-approximation
(for some $\alpha, \beta \geq 1$) $C^* = \{c_i^*\}_i$ for \kzC on $P$.
Here,
for $\alpha, \beta \geq 1$ and a weighted set $P$,
a set of points $C^* \subseteq X$ is called an
$(\alpha, \beta)$-approximation for \kzC on $P$
if $|C^*| \leq \beta k$ and
\begin{align*}
    \cost_z(P, C^*) \leq \alpha \cdot \min_{C' \subseteq X, |C'| \leq k}{\cost_z(P, C')}.
\end{align*}
An $(2^{O(z)}, 1)$-approximation can be obtained for \kzC in time $\tilde{O}(nk)$ on a general metric space.
The algorithm for $z = 1, 2$ was given by~\cite{DBLP:journals/ml/MettuP04}
and it was noted in~\cite{huang2020coresets} that
the~\cite{DBLP:journals/ml/MettuP04} algorithm can be modified to work for a general $z$.

After the $(\alpha, \beta)$-approximation is obtained,
we partition $P$ into $\{P_i\}_i$ where $P_i$
is the set of points $p$ such that $c_i^*$ is nearest from $C^*$ to $p$,
and apply $\mathcal{A}(P_i, c_i^*, \epsilon)$ to construct an assignment-preserving
$(\epsilon, \cost_z(P_i, c_i^*))$-coreset $S_i$ for every $P_i$.
We define a new coreset $S := \bigcup_i S_i$ as the union of $S_i$'s.
Clearly, $|S| \leq O(\beta k) \cdot T(\epsilon, k, z)$,

To analyze the error of $S$,
for every $C \subseteq X$ such that $|C| \leq k$ with assignment constraint $\Gamma$, let $\sigma$ denote the optimal assignment from $P$ to $C$ that is consistent with $\Gamma$. Let $\Gamma_i$ denote the assignment constraint such that $\Gamma_i(c)=\sum_{p\in P_i} \sigma(p,c)$.
So we have that $$
\cost_z(P,C,\Gamma)=\sum_{i} \cost_z(P_i,C,\Gamma_i)=\cost^{\sigma}_z(P,C)=\sum_{i}\cost^{\sigma}_z(P_i,C).
$$
Recall that $S_i$ is an assignment-preserving $(\epsilon,\cost_z(P_i,c_i^*))$-coreset of $P_i$, so there exists an assignment $\pi_i:P_i\rightarrow C$ that is consistent with $\Gamma_i$,
$$
\cost_z^{\pi_i}(S_i,C)-\cost_z^{\sigma}(P_i,C)\leq \epsilon \big( \cost_z^{\sigma}(P_i,C)+ \cost_z(P_i,c_i^*)\big).
$$
Let $\pi:S\rightarrow C$ denote the assignment consistent with $\Gamma$ such that $\pi$ restricted on $P_i$ is $\pi_i$. So we have,
\begin{align*}
  \cost_z^{\pi}(S, C)-\cost_z^{\sigma}(P,C)
    &= \sum_i \big( \cost_z^{\pi_i}(S_i, C) - \cost_z^{\sigma}(P_i, C)\big)  \\
    &\leq  \sum_i\epsilon \big( \cost_z^{\sigma}(P_i,C)+ \cost_z(P_i,c_i^*)\big) \\
    &\leq \epsilon \cdot (\cost_z^{\sigma}(P, C) + \alpha \OPT(P)) \\
    &=\epsilon \cdot (\cost_z(P, C,\Gamma) + \alpha \OPT(P))\\
    &\leq \epsilon (\alpha+1)\cost_z(P,C,\Gamma)
\end{align*}
where $\OPT(P) = \min_{C' \subseteq X, |C'| \leq k}{\cost_z(P, C')}$
is the optimal objective value for \kzC on $P$. Note that $\cost_z(S,C,\Gamma)\leq \cost_z^{\pi}(S,C)$ and $\cost_z(P,C,\Gamma)=\cost_z^{\sigma}(P,C)$. So we know that,
$$
\cost_z(S,C,\Gamma)-\cost_z(P,C,\Gamma)\leq \epsilon (\alpha+1)\cost_z(P,C,\Gamma).
$$
Similarly, we have that
$$
\cost_z(P,C,\Gamma)-\cost_z(S,C,\Gamma)\leq \epsilon (\alpha+1)\cost_z(P,C,\Gamma).
$$
and conclude that
$$
|\cost_z(P,C,\Gamma)-\cost_z(S,C,\Gamma)|\leq \epsilon (\alpha + 1) \cdot \cost_z(P, C,\Gamma).
$$

Hence, we rescale $\epsilon$ by an $\alpha$ factor to make $S$ an $\epsilon$-coreset. In conclusion, after the rescaling,
$S$ is an $\epsilon$-coreset of size
$O(\beta k)\cdot T(\epsilon / (\alpha + 1), k, z)$,
and the number of oracle accesses to $\mathcal{A}$
(noting that $\mathcal{A}$ should use $\epsilon / (\alpha + 1)$ as the error parameter)
is $O(\beta k)$.     \end{appendices}
\end{document}